\newcolumntype{C}{>{\centering\arraybackslash}X} 
\def\namedlabel#1#2{\begingroup
    #2%
    \def\@currentlabel{#2}%
    \phantomsection\label{#1}\endgroup
}
\renewenvironment{abstract}{
\hfill\begin{minipage}{0.95\textwidth}
\rule{\textwidth}{1pt}}
{\par\noindent\rule{\textwidth}{1pt}\end{minipage}}
\renewcommand\@maketitle{
\hfill
\begin{minipage}{0.95\textwidth}
\vskip 2em
\let\footnote\thanks 
{\LARGE \@title \par }
\vskip 1.5em
{\large \@author \par}
\end{minipage}
\vskip 1em \par
}
\newtheorem{theorem}{Theorem}
\newtheorem{lemma}{Lemma}
\newtheorem{assumption}{Assumption}
\title{\textbf{\huge{Chirp-like model and its parameters estimation}}}
\author[$\dagger$]{Rhythm Grover}
\author[$\dagger$,$\ddagger$]{Debasis Kundu}
\author[$\dagger$]{Amit Mitra}
\affil[$\dagger$]{Department of Mathematics, Indian Institute of Technology Kanpur \\
Kanpur - 208016, India}
\affil[$\ddagger$]{Corresponding author. Email: kundu@iitk.ac.in}
\date{}
\begin{document}
\maketitle

\begin{abstract}
Abstract: We propose a chirp-like signal model as an alternative to a chirp model and a generalisation of the sinusoidal model, which is a fundamental model in the statistical signal processing literature. It is observed that the proposed model can be arbitrarily close to the chirp model. The propounded model is similar to a chirp model in the sense that here also the frequency changes linearly with time. However, the parameter estimation of a chirp-like model is simpler compared to a chirp model. In this paper, we consider the least squares and the sequential least squares estimation procedures and study the asymptotic properties of these proposed estimators. These asymptotic results are corroborated through simulation studies and analysis of four speech signal data sets have been performed to see the effectiveness of the proposed model, and the results are quite encouraging.\\

%\\ Key Words and Phrases:
\end{abstract}
\section{Introduction}
One of the most extensively used models in statistical signal processing literature is the \textit{sinusoidal model}, which can be used to model real life phenomena that are periodic in nature. For instance, ECG signals, speech and audio signals, low and high tides of the ocean, daily temperature of a city and many more. Mathematical expression for the sinusoidal model is:
$$y(t) = \sum_{j=1}^{p} \{A_j^0 \cos(\alpha_j^0 t) + B_j^0 \sin(\alpha_j^0 t)\} + X(t).$$
Here, $A_j^0$s, $B_j^0$s are the amplitudes, $\alpha_j^0$s are the frequencies and $X(t)$ is the noise component. Extensive work has been carried out on this and some related models (see for example the recent monograph by Kundu and Nandi \cite{2012} in this topic). \\\par
Another prevalent model in signal processing, is the \textit{chirp model}. This model is mathematically expressed as follows:
\begin{equation}\label{chirp_model}
y(t) = \sum_{j=1}^{p} \{A_j^0 \cos(\alpha_j^0 t + \beta_j^0 t^2) + B_j^0 \sin(\alpha_j^0 t + \beta_j^0 t^2)\} + X(t),
\end{equation}
where $\beta_j^0$s are known as the frequency rates and $A_j^0$s, $B_j^0$s, $\alpha_j^0$s  and $X(t)$ are same as before. It is evident from the model equations, that the chirp model is a natural extension of the sinusoidal model where the frequency instead of being constant, changes linearly with time. This signal is observed in many natural as well as fabricated systems and in many fields of science and engineering, like sonar and radar systems, in echolocation, audio and speech signals, in biomedical systems like EEG, EMG, and communications etc. One major issue related to chirp model is the efficient estimation of the unknown parameters. Some of the references in this area are  Abatzoglou \cite{1986}, Djuric and Kay \cite{1990_1}, Peleg and Porat \cite{1991_1}, Ikram \textit{et al}. \cite{1997}, Saha and Kay \cite{2002}, Nandi and Kundu \cite{2004}, Kundu and Nandi \cite{2008_1}, Lahiri \textit{et al}. \cite{2014},  \cite{2015}, Mazumder \cite{2016}, Grover \textit{et al}. \cite{2018} and the references cited therein. One of the widely used methods for estimation of parameters of both linear and nonlinear models, is the least squares estimation method. However, finding the least squares estimators (LSEs) for a chirp model is computationally challenging as the least squares surface is highly nonlinear. For details, see Kundu and Nandi \cite{2008_1}, Lahiri\textit{ et al}. \cite{2015} and Grover \textit{et al}. \cite{2018}. \\ \par
In this paper, we propose a new model, a \textit{chirp-\textit{like} model}, that can be expressed mathematically as follows:
%\begin{equation}\begin{split}\label{multiple_comp_model}
%y(t) & = \sum_{j=1}^{p} \{A_j^0 cos(\alpha_j^0 t) + B_j^0 sin(\alpha_j^0 t)\}\\
 %& \qquad \quad  + \sum_{k=1}^{q} \{C_k^0 cos(\beta_k^0 t^2) + D_k^0 sin(\beta_k^0 t^2)\} + X(t),
%\end{split}
%\end{equation}
\begin{equation}\label{multiple_comp_model}
y(t)  = \sum_{j=1}^{p} \{A_j^0 \cos(\alpha_j^0 t) + B_j^0 \sin(\alpha_j^0 t)\} + \sum_{k=1}^{q} \{C_k^0 \cos(\beta_k^0 t^2) + D_k^0 \sin(\beta_k^0 t^2)\} + X(t),
\end{equation}
where $A_j^0$s, $B_j^0$s, $C_k^0$s, $D_k^0$s are the amplitudes, $\alpha_k^0$s and $\beta_k^0$s are the frequencies and frequency rates, respectively. Note that if $C_k^0 = D_k^0 = 0;\ k = 1, \cdots, q,$ then this model reduces to a sinusoidal model as defined above. It is observed that the model \eqref{multiple_comp_model} behaves very similar to the model \eqref{chirp_model}. \\ \par
Recently, Grover \textit{et al}. \cite{2018}, analysed a sound vowel data "AAA" using a multiple component chirp model. Here, we re-analyse the same data set using the proposed chirp-like model. In the following figure, we plot the two "\textit{best}" fitted models. It is clear that they are well-matched, and both of them fit the original data very well. 
\begin{figure}[H]
\centering
\includegraphics[scale=0.3]{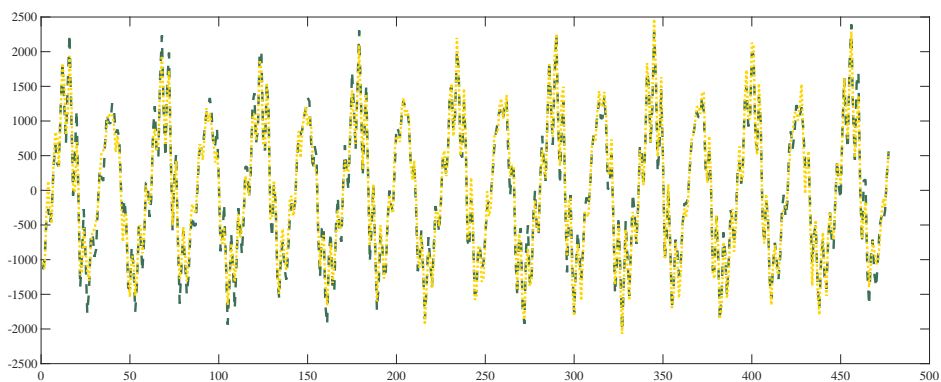}
\caption{Fitted chirp signal (green dashed line) and fitted chirp-like signal (yellow dotted line) to the "AAA" sound data}
\label{fig_fittings}
\end{figure} 

\justify 
It is apparent that the data which can be analysed using the chirp model can also be modelled using the proposed chirp-like model. Moreover, it is observed in this paper that computation of the LSEs of the chirp-like model is much simpler compared to the chirp model. The more detailed explanation will be provided in Section \ref{data_analysis}.\\ \par  

For the estimation of the parameters of the chirp-like model, we first consider the usual LSEs and study their asymptotic properties. Since it is observed that the computation of the LSEs is a numerically challenging problem, we propose a sequential procedure which reduces the computational burden significantly. This procedure follows along the same lines as the one proposed by Prasad \textit{et al.} \cite{2008_2} for the multiple component sinusoidal model and Lahiri \textit{et al.} \cite{2015} for the chirp model. We obtain the consistency and the asymptotic distribution of the sequential estimators as well. It is observed that both these estimators are strongly consistent and they have the same asymptotic distribution.\\ \par

The rest of the paper is organised as follows. In the next section, we define the one component chirp-like model and propose the least squares estimation and the sequential estimation of the parameters of this model and study the asymptotic properties of the obtained estimators. In Section \ref{multiple_component_theory}, we study the asymptotic properties of a more generalised model as defined in  \eqref{multiple_comp_model}. In Section \ref{simulation_results}, we perform some simulations to validate the asymptotic results and in Section \ref{data_analysis}, we analyse four speech signal data sets to see how the proposed model performs in practice. We conclude the paper in Section \ref{conclusion}. The tables, preliminary results and all the proofs are provided in the appendices.
      
\section{One Component Chirp-like Model}\label{one_component_theory}
In this section, we consider a one component chirp-like model, expressed mathematically as follows:
\begin{equation}\label{one_comp_model_eq}
y(t) = A^0 \cos(\alpha^0 t) + B^0 \sin(\alpha^0 t) + C^0 \cos(\beta^0 t^2) + D^0 \sin(\beta^0 t^2) + X(t).
\end{equation}
Our purpose is to estimate the unknown parameters of the model, the amplitudes  $A^0$, $B^0$, $C^0$, $D^0$, the frequency  $\alpha^0$ and the frequency rate $\beta^0$ under the following assumption on the noise component:

\begin{assumption}
\label{assump:1} 
Let $Z$ be the set of integers. $\{X(t)\}$ is a stationary linear process of the form:
\begin{equation}\label{error_assumption}
X(t) = \sum_{j = -\infty}^{\infty} a(j)e(t-j),
\end{equation}
where $\{e(t); t \in Z\}$ is a sequence of  independently and identically distributed (i.i.d.) random variables with $E(e(t)) = 0$, $V(e(t)) = \sigma^2$, and $a(j)$s are real constants such that 
\begin{equation}\label{condition_on_constants}
\sum\limits_{j= - \infty}^{\infty}|a(j)| < \infty.
\end{equation} 
\end{assumption}
\justify
This is a standard assumption for a stationary linear process. Any finite dimensional stationary MA, AR or ARMA process can be represented as~\eqref{error_assumption} when the coefficients $a(j)$s satisfy condition~\eqref{condition_on_constants} and hence this covers a large class of stationary random variables. \\
\justify
We will use the following notations for further development: $\boldsymbol{\theta}$ = $(A, B, \alpha, C, D, \beta)$, the parameter vector and $\boldsymbol{\theta}^0$ = $(A^0, B^0, \alpha^0, C^0, D^0, \beta^0)$ as the true parameter vector and $\boldsymbol{\Theta}$ = $[-M, M] \times [-M, M]  \times [0,\pi] \times [-M, M] \times [-M, M] \times [0,\pi]$, where $M$ is a positive real number. Also we make the following assumption on the unknown parameters:
\begin{assumption}\label{assump:2}
The true parameter vector $\boldsymbol{\theta}^0$ is an interior point of the parametric space $\boldsymbol{\Theta}$, and ${A^0}^2 + {B^0}^2 + {C^0}^2 + {D^0}^2 > 0$. 
\end{assumption}
\justify
Under these assumptions, we discuss two estimation procedures, the least squares estimation method and the sequential least squares estimation method. We then study the asymptotic properties of the estimators obtained using these methods.
  
\subsection{Least Squares Estimators}\label{one_component_LSEs_theory}
The usual LSEs of the unknown parameters of model \eqref{one_comp_model_eq} can be obtained by minimising the error sum of squares:
$$Q(\boldsymbol{\theta}) = \sum_{t=1}^{n}\bigg(y(t) - A\cos(\alpha t) - B \sin(\alpha t) - C\cos(\beta t^2) - D\sin(\beta t^2)\bigg)^2,$$
with respect to $A$, $B$, $\alpha$, $C$, $D$ and $\beta$ simultaneously. In matrix notation,
\begin{equation}\label{ess_matrix}
Q(\boldsymbol{\theta}) = (\textbf{Y} - \textbf{Z}(\alpha, \beta)\boldsymbol{\mu})^{T}(\textbf{Y} - \textbf{Z}(\alpha, \beta)\boldsymbol{\mu}).
\end{equation}
\justify
Here $\textbf{Y}_{n \times 1} = \begin{pmatrix}
y(1) \\ \vdots \\ y(n)\end{pmatrix}$, $\boldsymbol{\mu}_{4 \times 1}  = \begin{pmatrix}
A \\ B \\ C \\ D
\end{pmatrix}$ and $\textbf{Z}(\alpha, \beta)_{n \times 4} = \begin{pmatrix}
\cos(\alpha) & \sin(\alpha) & \cos(\beta) & \sin(\beta) \\
\vdots & \vdots & \vdots & \vdots \\
\cos(n \alpha) & \sin(n \alpha) & \cos(n^2 \beta) & \sin(n^2 \beta)
\end{pmatrix}.$ \\
Since $\boldsymbol{\mu}$ is a vector of linear parameters, by separable linear regression technique of Richards \cite{1961}, we have:
\begin{equation}\label{linear_par_estimates}
\hat{\boldsymbol{\mu}}(\alpha, \beta) = [\textbf{Z}(\alpha, \beta)^{T}\textbf{Z}(\alpha, \beta)]^{-1}\textbf{Z}(\alpha, \beta)^{T} \textbf{Y}.
\end{equation}
Using  \eqref{linear_par_estimates} in \eqref{ess_matrix}, we obtain:
\begin{align*}
R(\alpha, \beta) & = Q(\hat{A}(\alpha, \beta), \hat{B}(\alpha, \beta), \alpha, \hat{C}(\alpha, \beta), \hat{D}(\alpha, \beta), \beta)\\  
& = \textbf{Y}^{T}(\textbf{I} - \textbf{Z}(\alpha, \beta)[\textbf{Z}(\alpha, \beta)^{T}\textbf{Z}(\alpha, \beta)]^{-1}\textbf{Z}(\alpha, \beta)^{T})\textbf{Y}.
\end{align*}
To obtain $\hat{\alpha}$ and $\hat{\beta}$, the LSEs of  $\alpha^0$ and $\beta^0$ respectively, we minimise $R(\alpha, \beta)$ with respect to $\alpha$ and $\beta$ simultaneously. Once we obtain $\hat{\alpha}$ and $\hat{\beta}$, by substituting them in \eqref{linear_par_estimates}, we obtain the LSEs of the linear parameters. \par
\justify
The following results provide the consistency and asymptotic normality properties of the LSEs.
\begin{theorem}\label{consistency_LSEs_one_component}
Under assumptions \ref{assump:1} and \ref{assump:2}, $\hat{\boldsymbol{\theta}} = (\hat{A}, \hat{B}, \hat{\alpha}, \hat{C}, \hat{D}, \hat{\beta})$ is a strongly consistent estimator of $\boldsymbol{\theta}^0$, that is, \\
$$\hat{\boldsymbol{\theta}} \xrightarrow{a.s.} \boldsymbol{\theta}^0 \textmd{ as } n \rightarrow \infty.$$ 
\end{theorem}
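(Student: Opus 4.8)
The plan is to use the now-standard contradiction argument for nonlinear least squares with oscillatory regressors, in the spirit of Wu's consistency theorem and its adaptations by Kundu and Nandi. Write $g(t;\boldsymbol{\theta}) = A\cos(\alpha t) + B\sin(\alpha t) + C\cos(\beta t^2) + D\sin(\beta t^2)$, so that $y(t) = g(t;\boldsymbol{\theta}^0) + X(t)$ and $Q(\boldsymbol{\theta}) = \sum_{t=1}^n\big(g(t;\boldsymbol{\theta}^0) - g(t;\boldsymbol{\theta}) + X(t)\big)^2$. Since $\hat{\boldsymbol{\theta}}$ is a global minimiser over the compact set $\boldsymbol{\Theta}$, we always have $Q(\hat{\boldsymbol{\theta}}) - Q(\boldsymbol{\theta}^0) \leq 0$. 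Suppose, for contradiction, that $\hat{\boldsymbol{\theta}}$ does not converge to $\boldsymbol{\theta}^0$; then there is an $\varepsilon>0$ with $\hat{\boldsymbol{\theta}} \in S_\varepsilon := \{\boldsymbol{\theta}\in\boldsymbol{\Theta}: \|\boldsymbol{\theta}-\boldsymbol{\theta}^0\|\geq\varepsilon\}$ for infinitely many $n$. It therefore suffices to prove that, almost surely,
$$\liminf_{n\to\infty}\ \inf_{\boldsymbol{\theta}\in S_\varepsilon}\ \frac{1}{n}\big(Q(\boldsymbol{\theta}) - Q(\boldsymbol{\theta}^0)\big) > 0,$$
as this contradicts $Q(\hat{\boldsymbol{\theta}}) - Q(\boldsymbol{\theta}^0)\leq 0$ and forces $\hat{\boldsymbol{\theta}} \xrightarrow{a.s.} \boldsymbol{\theta}^0$.

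Next I would expand the difference as
$$\frac{1}{n}\big(Q(\boldsymbol{\theta}) - Q(\boldsymbol{\theta}^0)\big) = \underbrace{\frac{1}{n}\sum_{t=1}^n \big(g(t;\boldsymbol{\theta}^0) - g(t;\boldsymbol{\theta})\big)^2}_{=:S_n(\boldsymbol{\theta})} + \frac{2}{n}\sum_{t=1}^n \big(g(t;\boldsymbol{\theta}^0) - g(t;\boldsymbol{\theta})\big)X(t).$$
The cross term is controlled using Assumption \ref{assump:1}: since the regressors $\cos(\alpha t),\sin(\alpha t),\cos(\beta t^2),\sin(\beta t^2)$ are uniformly bounded and the amplitudes range over a compact set, it is enough to know that quantities such as $\frac{1}{n}\sum_{t=1}^n X(t)\cos(\alpha t)$, $\frac{1}{n}\sum_{t=1}^n X(t)\sin(\alpha t)$, $\frac{1}{n}\sum_{t=1}^n X(t)\cos(\beta t^2)$ and $\frac{1}{n}\sum_{t=1}^n X(t)\sin(\beta t^2)$ converge to $0$ almost surely, uniformly over $\alpha,\beta\in[0,\pi]$. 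This is precisely the content of the preliminary convergence lemmas for stationary linear processes recorded in the appendix, and it renders the cross term uniformly negligible on $S_\varepsilon$.

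The heart of the argument is the signal term $S_n(\boldsymbol{\theta})$. I would evaluate its limit using the orthogonality relations for oscillatory sums: $\frac{1}{n}\sum_{t=1}^n \cos^2(\alpha t)\to\frac12$, $\frac{1}{n}\sum_{t=1}^n \cos(\alpha t)\cos(\alpha' t)\to 0$ for $\alpha\neq\alpha'$, the analogous statements for the chirp terms $\cos(\beta t^2),\sin(\beta t^2)$, and crucially the cross-orthogonality $\frac{1}{n}\sum_{t=1}^n \cos(\alpha t)\cos(\beta t^2)\to 0$ (with its sine variants) between the sinusoidal and the chirp blocks. Expanding $\big(g(t;\boldsymbol{\theta}^0) - g(t;\boldsymbol{\theta})\big)^2$ and passing to the limit, the sinusoidal and chirp contributions decouple, and the resulting limit is a continuous, nonnegative function of $\boldsymbol{\theta}$ that vanishes only at $\boldsymbol{\theta}=\boldsymbol{\theta}^0$; Assumption \ref{assump:2} (the interior condition and ${A^0}^2 + {B^0}^2 + {C^0}^2 + {D^0}^2 > 0$) guarantees it does not degenerate. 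Since $S_\varepsilon$ is compact and bounded away from $\boldsymbol{\theta}^0$, this lower bound is strictly positive there, yielding the required strict positivity of the $\liminf$ and completing the contradiction.

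The main obstacle I anticipate is twofold. First, the sinusoid--chirp cross-orthogonality $\frac{1}{n}\sum_{t=1}^n \cos(\alpha t)\cos(\beta t^2)\to 0$ is the genuinely new ingredient relative to the pure sinusoidal model; it rests on estimates for mixed linear/quadratic exponential sums (Weyl-type equidistribution of $\{\beta t^2\}$), and care is needed at the boundary frequencies $\alpha,\beta\in\{0,\pi\}$ where the simple $\frac12$ limits break down. Second, upgrading the pointwise limit of $S_n(\boldsymbol{\theta})$ to a uniform lower bound over all of $S_\varepsilon$ — so that the $\inf$ inside the $\liminf$ can be handled — requires an equicontinuity and covering argument rather than a bare pointwise statement, and this is where most of the technical bookkeeping will reside.
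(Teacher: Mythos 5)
Your proposal follows essentially the same route as the paper: the same sufficiency criterion $\liminf\inf_{S_c}\frac{1}{n}(Q(\boldsymbol{\theta})-Q(\boldsymbol{\theta}^0))>0$ established by contradiction, the same decomposition into a deterministic signal term and a noise cross term killed uniformly by the appendix lemma on $\sup_{\phi}|\frac{1}{n}\sum t^k X(t)e^{i\phi t}|$ and its chirp analogue, and the same use of the orthogonality limits (including the sinusoid--chirp cross-orthogonality) to bound the signal term below on the exceptional set, which the paper carries out by splitting $S_c$ into coordinate-wise subsets. Your closing remark about upgrading pointwise limits to a uniform infimum is a fair observation about a step the paper treats somewhat informally, but it does not constitute a different method.
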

\begin{proof}
See Appendix ~\ref{appendix:C1}.\\
\end{proof}
\begin{theorem}\label{asymptotic_dist_LSEs_one_component}
Under assumptions \ref{assump:1} and \ref{assump:2}, 
$$(\hat{\boldsymbol{\theta}} - \boldsymbol{\theta}^0) \textbf{D}^{-1} \xrightarrow{d} \mathcal{N}(0, c \sigma^2 \boldsymbol{\Sigma}^{-1}(\boldsymbol{\theta}^0)),$$
where $\textbf{D} = diag(\frac{1}{\sqrt{n}}, \frac{1}{\sqrt{n}}, \frac{1}{n\sqrt{n}}, \frac{1}{\sqrt{n}}, \frac{1}{\sqrt{n}}, \frac{1}{n^2\sqrt{n}})$, $c = \sum\limits_{j=-\infty}^{\infty} a(j)^2$ and\\
$$\boldsymbol{\Sigma}^{-1}(\boldsymbol{\theta}^0) = \begin{pmatrix}
\frac{2({A^0}^2 + 4 {B^0}^2)}{{A^0}^2 + {B^0}^2} & \frac{-6A^0B^0}{{A^0}^2 + {B^0}^2} & \frac{-12B^0}{{A^0}^2 + {B^0}^2} & 0 & 0 & 0 \\
\frac{-6A^0B^0}{{A^0}^2 + {B^0}^2} & \frac{2(4 {A^0}^2 + {B^0}^2)}{{A^0}^2 + {B^0}^2} & \frac{12A^0}{{A^0}^2 + {B^0}^2} & 0 & 0 & 0 \\
\frac{-6A^0B^0}{{A^0}^2 + {B^0}^2} & \frac{2(4 {A^0}^2 + {B^0}^2)}{{A^0}^2 + {B^0}^2} & \frac{12A^0}{{A^0}^2 + {B^0}^2} & 0 & 0 & 0 \\
0 & 0 & 0 &  \frac{4{C^0}^2 + 9 {D^0}^2}{2({C^0}^2 + {D^0}^2)} & \frac{-5 C^0 D^0}{2({C^0}^2 + {D^0}^2)} & \frac{-15D^0}{2({C^0}^2 + {D^0}^2)}\\
0 & 0 & 0 &  \frac{-5 C^0 D^0}{2({C^0}^2 + {D^0}^2)} & \frac{9 {C^0}^2 + 4 {D^0}^2}{2({C^0}^2 + {D^0}^2)} & \frac{15 C^0}{2({C^0}^2 + {D^0}^2)}\\
0 & 0 & 0 &  \frac{-15D^0}{2({C^0}^2 + {D^0}^2)} & \frac{15 C^0}{2({C^0}^2 + {D^0}^2)} & \frac{45}{2({C^0}^2 + {D^0}^2)}\\
\end{pmatrix}.$$
\end{theorem}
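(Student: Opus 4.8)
The plan is to derive the limiting distribution from the first-order condition $Q'(\hat{\boldsymbol{\theta}}) = \mathbf{0}$ via a multivariate mean-value expansion of the gradient about the true value. Writing $Q'(\cdot)$ for the gradient (as a row vector) and $Q''(\cdot)$ for the Hessian of $Q$, the expansion gives $Q'(\hat{\boldsymbol{\theta}}) - Q'(\boldsymbol{\theta}^0) = (\hat{\boldsymbol{\theta}} - \boldsymbol{\theta}^0)\, Q''(\bar{\boldsymbol{\theta}})$, where $\bar{\boldsymbol{\theta}}$ lies on the segment joining $\hat{\boldsymbol{\theta}}$ and $\boldsymbol{\theta}^0$. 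Since the left-hand side equals $-Q'(\boldsymbol{\theta}^0)$, inserting the scaling matrix $\textbf{D}$ together with the identity $\textbf{D}^{-1}\textbf{D}$ yields
$$(\hat{\boldsymbol{\theta}} - \boldsymbol{\theta}^0)\textbf{D}^{-1} = -\big[Q'(\boldsymbol{\theta}^0)\textbf{D}\big]\,\big[\textbf{D}\, Q''(\bar{\boldsymbol{\theta}})\, \textbf{D}\big]^{-1}.$$
The theorem will then follow from Slutsky's lemma once I establish two ingredients: (i) the scaled Hessian $\textbf{D}\, Q''(\bar{\boldsymbol{\theta}})\, \textbf{D}$ converges almost surely to a nonsingular limit $2\boldsymbol{\Sigma}(\boldsymbol{\theta}^0)$, and (ii) the scaled gradient $-Q'(\boldsymbol{\theta}^0)\textbf{D}$ converges in distribution to $\mathcal{N}(\mathbf{0}, 4c\sigma^2\boldsymbol{\Sigma}(\boldsymbol{\theta}^0))$.

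For step (i), I would first write out the $6\times 6$ Hessian $Q''(\boldsymbol{\theta}^0)$ explicitly; each entry is a finite sum over $t$ of products of $\{1, t, t^2\}$ with the trigonometric factors $\cos(\alpha^0 t), \sin(\alpha^0 t), \cos(\beta^0 t^2), \sin(\beta^0 t^2)$, together with terms that are linear in the centred data $X(t)$. Once scaled by $\textbf{D}$ on both sides, the stochastic $X(t)$-terms vanish almost surely by the preliminary results of the appendix, so only the deterministic parts survive. The diagonal-block limits then follow from the standard sum limits $\tfrac1n\sum_t\cos^2(\alpha^0 t)\to\tfrac12$, $\tfrac1{n^{k+1}}\sum_t t^k\cos^2(\alpha^0 t)\to\tfrac1{2(k+1)}$ and their sine and chirp analogues; these produce the two diagonal blocks of $2\boldsymbol{\Sigma}(\boldsymbol{\theta}^0)$, whose inverses are exactly the nonzero blocks displayed in the statement.

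The crucial point in step (i) is the vanishing of the off-diagonal block coupling $(A,B,\alpha)$ to $(C,D,\beta)$. These cross-entries are governed by sums of the form $\tfrac1{n^{k+1}}\sum_t t^k\cos(\alpha^0 t)\cos(\beta^0 t^2)$ and their variants, whose phases $\beta^0 t^2 \pm \alpha^0 t$ retain a genuine quadratic term; by the chirp-type equidistribution lemmas collected in the appendix such sums are $o(1)$, forcing the off-diagonal block to zero and making the limit block diagonal. This asymptotic orthogonality between the linear-phase and quadratic-phase components is the structural reason the problem decouples into two separate ones. I expect this decoupling, \emph{together with} controlling the Hessian at the intermediate point $\bar{\boldsymbol{\theta}}$ rather than at $\boldsymbol{\theta}^0$, to be the main obstacle: because $\textbf{D}$ amplifies the frequency and frequency-rate coordinates by $n^{3/2}$ and $n^{5/2}$, I must verify that the deviations $\hat\alpha-\alpha^0$ and $\hat\beta-\beta^0$ do not disturb these limits, i.e.\ that $\textbf{D}\, Q''(\bar{\boldsymbol{\theta}})\, \textbf{D}$ and $\textbf{D}\, Q''(\boldsymbol{\theta}^0)\, \textbf{D}$ share the same limit. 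This step typically needs the rates of convergence of $\hat\alpha$ and $\hat\beta$ (faster than naive consistency provides), which I would extract from the argument behind Theorem~\ref{consistency_LSEs_one_component} before completing the proof.

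For step (ii), the components of $Q'(\boldsymbol{\theta}^0)$ are linear functionals of the noise, proportional to $\sum_t X(t)\cos(\alpha^0 t)$, $\sum_t X(t)\,t\sin(\alpha^0 t)$, $\sum_t X(t)\cos(\beta^0 t^2)$, $\sum_t X(t)\,t^2\sin(\beta^0 t^2)$, and so on. Substituting $X(t)=\sum_j a(j)e(t-j)$ rewrites each as a weighted sum of the i.i.d.\ innovations $e(\cdot)$, to which the central limit theorem for weighted sums of a stationary linear process (a preliminary result of the appendix) applies; the same sum limits used in (i) identify the limiting covariance as $4c\sigma^2\boldsymbol{\Sigma}(\boldsymbol{\theta}^0)$ with $c=\sum_j a(j)^2$. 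Finally, combining (i) and (ii) through the displayed identity and Slutsky's lemma, and using the symmetry of $\boldsymbol{\Sigma}$, gives
$$(\hat{\boldsymbol{\theta}}-\boldsymbol{\theta}^0)\textbf{D}^{-1}\xrightarrow{d}\mathcal{N}\!\big(\mathbf{0},\ [2\boldsymbol{\Sigma}]^{-1}(4c\sigma^2\boldsymbol{\Sigma})[2\boldsymbol{\Sigma}]^{-1}\big)=\mathcal{N}\!\big(\mathbf{0},\ c\sigma^2\boldsymbol{\Sigma}^{-1}(\boldsymbol{\theta}^0)\big),$$
which is the asserted distribution.
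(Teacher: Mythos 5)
Your proposal follows essentially the same route as the paper: a mean-value expansion of the gradient, the identity $(\hat{\boldsymbol{\theta}}-\boldsymbol{\theta}^0)\textbf{D}^{-1}=-[Q'(\boldsymbol{\theta}^0)\textbf{D}][\textbf{D}Q''(\bar{\boldsymbol{\theta}})\textbf{D}]^{-1}$, the CLT for linear processes giving $\mathcal{N}(0,4c\sigma^2\boldsymbol{\Sigma})$ for the scaled gradient, the almost-sure limit $2\boldsymbol{\Sigma}$ for the scaled Hessian via the trigonometric-sum lemmas, and Slutsky. Your added care about replacing $\bar{\boldsymbol{\theta}}$ by $\boldsymbol{\theta}^0$ under the $n^{3/2}$ and $n^{5/2}$ scalings is a point the paper passes over quickly, but it does not change the argument's structure.
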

\begin{proof}
See Appendix ~\ref{appendix:C1}.\\
\end{proof}
\justify
Note that to estimate the frequency and frequency rate parameters, we need to solve a 2D nonlinear optimisation problem. Even for a particular case of this model, when $C^0 = D^0 = 0$, it has been observed that the least squares surface is highly nonlinear and has several local minima near the true parameter value (for details, see Rice and Rosenblatt \cite{1988}). Therefore, it is evident that computation of the LSEs is a numerically challenging problem for the proposed model as well. 
\subsection{Sequential Least Squares Estimators}\label{one_component_seq_LSEs_theory}
In order to overcome the computational difficulty of finding the LSEs without compromising on the efficiency of the estimates, we propose a sequential procedure to find the estimates of the unknown parameters of model \eqref{one_comp_model_eq}. In this section, we present the algorithm to obtain the sequential estimators and study the asymptotic properties of these estimators.\\ \\
Note that the matrix $\textbf{Z}(\alpha, \beta)$ can be partitioned into two $n \times 2$ blocks as follows:\\
$$\textbf{Z}(\alpha, \beta) = \begin{pmatrix}\begin{array}{c|c}
\textbf{Z}^{(1)}(\alpha) & \textbf{Z}^{(2)}(\beta)
\end{array}
\end{pmatrix}.$$
Here $\textbf{Z}^{(1)}(\alpha)_{n \times 2} = \begin{pmatrix}
\cos(\alpha) & \sin(\alpha) \\
\vdots & \vdots \\
\cos(n \alpha) & \sin(n \alpha)
\end{pmatrix}$ and $\textbf{Z}^{(2)}(\beta)_{n \times 2} = \begin{pmatrix}
\cos(\beta) & \sin(\beta) \\
\vdots & \vdots \\
\cos(n^2 \beta) & \sin(n^2 \beta)
\end{pmatrix}.$ Similarly, $\boldsymbol{\mu} = \begin{pmatrix}
\begin{array}{c} \boldsymbol{\mu}^{(1)} \\ \hline \boldsymbol{\mu}^{(2)} \end{array} \end{pmatrix}$, where $\boldsymbol{\mu}^{(1)}_{2 \times 1} = \begin{pmatrix}
A \\ B
\end{pmatrix}$ and  $\boldsymbol{\mu}^{(2)}_{2 \times 1} = \begin{pmatrix}
C \\ D
\end{pmatrix}.$ Also, the parameter vector, $\boldsymbol{\theta} = \begin{pmatrix}
\begin{array}{c} \boldsymbol{\theta}^{(1)} \\ \hline \boldsymbol{\theta}^{(2)} \end{array} \end{pmatrix}$, with $\boldsymbol{\theta}^{(1)} = \begin{pmatrix}
A \\ B \\ \alpha
\end{pmatrix}$ and $\boldsymbol{\theta}^{(2)} = \begin{pmatrix}
C \\ D \\ \beta
\end{pmatrix}. $ The parameter space can be written as  $\boldsymbol{\Theta}^{(1)} \times  \boldsymbol{\Theta}^{(2)}$ so that $\boldsymbol{\theta}^{(1)} \in  \boldsymbol{\Theta}^{(1)}$ and $\boldsymbol{\theta}^{(2)} \in \boldsymbol{\Theta}^{(2)},$ with $\boldsymbol{\Theta}^{(1)} = \boldsymbol{\Theta}^{(2)} = [-M, M] \times [-M, M] \times [0, \pi]$.  \\
\justify
Following is the algorithm to find the sequential estimators: \\
\begin{description}
\item \namedlabel{itm:step1} {\textbf{Step 1:}}\  First minimise the following error sum of squares:
\begin{equation}\label{ess_1}
Q_1(\boldsymbol{\theta}^{(1)}) = (\boldsymbol{Y} - \boldsymbol{Z}^{(1)}(\alpha)\boldsymbol{\mu}^{(1)})^{T}(\boldsymbol{Y} - \boldsymbol{Z}^{(1)}(\alpha)\boldsymbol{\mu}^{(1)})
\end{equation}
with respect to $A$, $B$ and $\alpha$. 
Using separable linear regression technique, for fixed $\alpha$, we have:
\begin{equation}\label{linear_est_sin_1}
\tilde{\boldsymbol{\mu}}^{(1)}(\alpha) = [\boldsymbol{Z}^{(1)}(\alpha)^{T}\boldsymbol{Z}^{(1)}(\alpha)]^{-1}\boldsymbol{Z}^{(1)}(\alpha)^{T}\boldsymbol{Y}.
\end{equation} 
Now replacing $\boldsymbol{\mu}^{(1)}$ by $\tilde{\boldsymbol{\mu}}^{(1)}(\alpha)$ in  \eqref{ess_1}, we have:
\begin{equation*}
R_1(\alpha) = Q_1(\tilde{A}, \tilde{B}, \alpha) = \boldsymbol{Y}^{T}(\boldsymbol{I} - \boldsymbol{Z}^{(1)}(\alpha)[\boldsymbol{Z}^{(1)}(\alpha)^{T}\boldsymbol{Z}^{(1)}(\alpha)]^{-1}\boldsymbol{Z}^{(1)}(\alpha)^T)\boldsymbol{Y}.
\end{equation*}
Minimising $R_1(\alpha)$, we obtain $\tilde{\alpha}$ and replacing $\alpha$ by $\tilde{\alpha}$ in  \eqref{linear_est_sin_1}, we get the linear parameter estimates $\tilde{A}$ and $\tilde{B}$.\\
\item {\textbf{Step 2:}}\namedlabel{itm:step2}\ At this step, we eliminate the effect of the sinusoid  component from the original data, and obtain a new data vector:
$$\boldsymbol{Y}_1 = \boldsymbol{Y} - \boldsymbol{Z}^{(1)}(\tilde{\alpha})\tilde{\boldsymbol{\mu}}^{(1)}.$$
Now we minimise the error sum of squares:
\begin{equation}\label{ess_2}
Q_2(\boldsymbol{\theta}^{(2)})= (\boldsymbol{Y}_1 - \boldsymbol{Z}^{(2)}(\beta)\boldsymbol{\mu}^{(2)})^{T}(\boldsymbol{Y}_1 - \boldsymbol{Z}^{(2)}(\beta)\boldsymbol{\mu}^{(2)}),
\end{equation}
with respect to $C$, $D$ and $\beta$. Again by separable linear regression technique, we have:
\begin{equation}\label{linear_est_chirp_1}
\tilde{\boldsymbol{\mu}}^{(2)}(\beta) = [\boldsymbol{Z}^{(2)}(\beta)^{T}\boldsymbol{Z}^{(2)}(\beta)]^{-1}\boldsymbol{Z}^{(2)}(\beta)^{T}\boldsymbol{Y}_{1}
\end{equation}
for a fixed $\beta$. Now replacing $\boldsymbol{\mu}^{(2)}$ by $\tilde{\boldsymbol{\mu}}^{(2)}$ in  \eqref{ess_2}, we obtain:
\begin{equation*}
R_2(\beta) = Q_2(\tilde{C}, \tilde{D}, \beta) = \boldsymbol{Y}_1^{T}(\boldsymbol{I} - \boldsymbol{Z}^{(2)}(\beta)[\boldsymbol{Z}^{(2)}(\beta)^{T}\boldsymbol{Z}^{(2)}(\beta)]^{-1}\boldsymbol{Z}^{(2)}(\beta)^T)\boldsymbol{Y}_1.
\end{equation*}
Minimizing $R_2(\beta)$, with respect to $\beta$, we obtain $\tilde{\beta}$, and using $\tilde{\beta}$ in \eqref{linear_est_chirp_1}, we obtain $\tilde{C}$  and $\tilde{D}$, the linear parameter estimates.
\end{description}
Note that instead of solving a 2D optimisation problem, as required to obtain the LSEs, to find the sequential estimators we need to solve two 1D optimisation problems. Also, the following theorems show that these estimators are strongly consistent and have the same asymptotic distribution as the LSEs.  
\begin{theorem}\label{consistency_sequential_LSEs_one_comp}
Under assumptions \ref{assump:1} and \ref{assump:2}, $\tilde{\boldsymbol{\theta}}^{(1)} = \begin{pmatrix}
\tilde{A} & \tilde{B} & \tilde{\alpha} \end{pmatrix}$ and  $\tilde{\boldsymbol{\theta}}^{(2)} = \begin{pmatrix}
\tilde{C} & \tilde{D} & \tilde{\beta} \end{pmatrix}$ are strongly consistent estimators of ${\boldsymbol{\theta}^0}^{(1)} = \begin{pmatrix}
A^0 & B^0 & \alpha^0 \end{pmatrix}$ and ${\boldsymbol{\theta}^0}^{(2)} = \begin{pmatrix}
C^0 & D^0 & \beta^0 \end{pmatrix}$, respectively, that is,
\begin{enumerate}[label=(\alph*)]
\item $\tilde{\boldsymbol{\theta}}^{(1)} \xrightarrow{a.s.} {\boldsymbol{\theta}^0}^{(1)} $ as $n \rightarrow \infty$,
\item $\tilde{\boldsymbol{\theta}}^{(2)} \xrightarrow{a.s.} {\boldsymbol{\theta}^0}^{(2)}$ as $n \rightarrow \infty$.
\end{enumerate}
\end{theorem}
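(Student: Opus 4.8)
My plan is to prove the two parts in the order in which the algorithm proceeds, treating in each step the ``extra'' deterministic component as part of an effective noise and exploiting the asymptotic orthogonality between sinusoids and chirps. Throughout I will use the preliminary trigonometric limits collected in the appendix, in particular that for $\alpha,\beta\in(0,\pi)$,
\begin{equation*}
\frac{1}{n}\sum_{t=1}^n\cos^2(\alpha t)\to\tfrac12,\qquad \frac{1}{n}\sum_{t=1}^n\cos^2(\beta t^2)\to\tfrac12,
\end{equation*}
together with the key orthogonality estimates $\frac1n\sum_{t=1}^n\cos(\alpha t)\cos(\beta t^2)\to0$ (and the analogous sine and mixed products), which I will need in a form that is \emph{uniform} in the frequency and frequency-rate arguments. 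I will also use the natural identifiability conditions ${A^0}^2+{B^0}^2>0$ and ${C^0}^2+{D^0}^2>0$, which are implicit in Assumption~\ref{assump:2} and are in any case needed for the limiting objectives to have unique minimisers.

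For part (a), I observe that Step~\ref{itm:step1} amounts to fitting a pure sinusoidal model to $y(t)=A^0\cos(\alpha^0 t)+B^0\sin(\alpha^0 t)+W(t)$, where $W(t)=C^0\cos(\beta^0 t^2)+D^0\sin(\beta^0 t^2)+X(t)$ plays the role of noise. Since the chirp part of $W$ is asymptotically orthogonal to every sinusoid, $\frac1n R_1(\alpha)$ converges almost surely, uniformly in $\alpha$, to the same limiting curve it would have if $W$ were the stationary noise $X$ alone; this limiting function is uniquely minimised at $\alpha^0$. A standard contradiction (Wu-type) argument then gives $\tilde\alpha\xrightarrow{a.s.}\alpha^0$: if $\tilde\alpha$ stayed a distance $\ge\varepsilon$ from $\alpha^0$ along a subsequence, then $\frac1n[R_1(\tilde\alpha)-R_1(\alpha^0)]$ would be bounded below by a positive constant, contradicting the fact that $\tilde\alpha$ minimises $R_1$. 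Strong consistency of $\tilde A,\tilde B$ follows by substituting the consistent $\tilde\alpha$ into the closed form \eqref{linear_est_sin_1} and evaluating the limits of the entries of $\frac1n\boldsymbol{Z}^{(1)}(\alpha)^T\boldsymbol{Z}^{(1)}(\alpha)$ and $\frac1n\boldsymbol{Z}^{(1)}(\alpha)^T\boldsymbol{Y}$ at $\alpha=\tilde\alpha$.

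For part (b), after Step~\ref{itm:step1} the $t$-th entry of $\boldsymbol{Y}_1$ equals $\big[A^0\cos(\alpha^0 t)+B^0\sin(\alpha^0 t)-\tilde A\cos(\tilde\alpha t)-\tilde B\sin(\tilde\alpha t)\big]+C^0\cos(\beta^0 t^2)+D^0\sin(\beta^0 t^2)+X(t)$, i.e.\ the true chirp component plus an effective noise consisting of $X(t)$ and a \emph{residual sinusoid} $r(t)$. The same scheme as in part (a) then applies, provided the residual sinusoid does not interfere with the chirp fit, i.e.\ provided $\frac1n\sum_{t=1}^n r(t)\cos(\beta t^2)\to0$ and $\frac1n\sum_{t=1}^n r(t)\sin(\beta t^2)\to0$ uniformly in $\beta$. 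Granting this, $\frac1n R_2(\beta)$ converges almost surely to a curve uniquely minimised at $\beta^0$, whence $\tilde\beta\xrightarrow{a.s.}\beta^0$ by the same contradiction argument, and $\tilde C,\tilde D$ follow from \eqref{linear_est_chirp_1} evaluated at $\tilde\beta$.

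The main obstacle is exactly the control of these residual-sinusoid cross terms in part (b). Note that even though $\tilde\alpha\xrightarrow{a.s.}\alpha^0$, one cannot argue that $\cos(\tilde\alpha t)$ is close to $\cos(\alpha^0 t)$ termwise, because $t$ ranges up to $n$ and $n(\tilde\alpha-\alpha^0)$ need not be small; thus $r(t)$ is genuinely $O(1)$ per term. The way around this is to use that $r(t)$ is a bounded-amplitude combination of sinusoids at the frequencies $\alpha^0$ and $\tilde\alpha$, both lying in $[0,\pi]$, and that the orthogonality estimate $\sup_{\alpha}\big|\frac1n\sum_{t=1}^n\cos(\alpha t)\cos(\beta t^2)\big|\to0$ holds uniformly in $\alpha$; combined with the boundedness of $\tilde A,\tilde B$ (they lie in the compact parameter space) this forces the cross terms to vanish irrespective of the precise location of $\tilde\alpha$. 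Establishing this uniform orthogonality, and verifying that the remaining chirp self-terms of $\frac1n\boldsymbol{Y}_1^T\boldsymbol{Z}^{(2)}(\beta)[\cdot]^{-1}\boldsymbol{Z}^{(2)}(\beta)^T\boldsymbol{Y}_1$ peak only at $\beta^0$, is the technical heart of the argument; the remainder is routine bookkeeping with the trigonometric limits above.
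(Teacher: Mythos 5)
Your part (a) matches the paper's argument: decompose $\frac1n\big(Q_1(\boldsymbol{\theta}^{(1)})-Q_1({\boldsymbol{\theta}^0}^{(1)})\big)$ into a deterministic quadratic part and a cross/noise part, kill the latter uniformly on the complement of a neighbourhood of ${\boldsymbol{\theta}^0}^{(1)}$ using the sinusoid--chirp orthogonality (Lemma~\ref{preliminary_result_3}) and the uniform noise bound (Lemma~\ref{preliminary_result_4}), and conclude with the Wu-type contradiction lemma. That part is sound and is essentially the paper's proof.

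The gap is in part (b), precisely where you place ``the technical heart''. You correctly observe that strong consistency of $\tilde\alpha$ alone does not make $\cos(\tilde\alpha t)$ close to $\cos(\alpha^0 t)$ for $t$ up to $n$, so the residual sinusoid $r(t)$ is a priori $O(1)$ per term. But your proposed fix --- a supremum over \emph{both} the frequency and the frequency rate of the normalised cross sums, $\sup_{\alpha,\beta}\big|\frac1n\sum_{t=1}^n\cos(\alpha t)\cos(\beta t^2)\big|\to0$ --- is neither proved by you nor available in the paper, and it is doubtful as stated: Lemma~\ref{preliminary_result_3} holds only pointwise ``except for a countable number of points'', the quadratic exponential sums $\sum_{t\le n}e^{i(\beta t^2+\alpha t)}$ are of order $n$ (not $o(n)$) when $\beta$ is a rational multiple of $2\pi$ with small denominator, so the supremum over all $\beta\in[0,\pi]$ cannot be taken for free; and the quantity you actually need is evaluated at the \emph{random}, $n$-dependent $\tilde\alpha$, to which a ``for all but countably many $\alpha$'' statement does not directly apply. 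The paper closes this gap by a different device: it first proves a rate-of-convergence lemma (Lemma~\ref{rate_of_convergence_1}), namely $(\tilde{\boldsymbol{\theta}}^{(1)}-{\boldsymbol{\theta}^0}^{(1)})(\sqrt n\,\mathbf{D}_1)^{-1}\xrightarrow{a.s.}0$, so that $\tilde\alpha-\alpha^0=o(n^{-3/2})$ a.s., hence $\sup_{t\le n}|t(\tilde\alpha-\alpha^0)|\to0$ and the fitted sinusoid equals the true one up to $o(1)$ uniformly in $t$ (equation~\eqref{relation_sinusoid_comp_and_estimate}). This makes $r(t)=o(1)$ termwise after all, so the cross terms in Step~2 vanish trivially and no new uniform orthogonality is required. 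Without either that rate lemma or an actual proof of your uniform orthogonality claim, your argument for part (b) is incomplete.
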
 
\begin{proof}
See Appendix ~\ref{appendix:C2}.\\
\end{proof}
\begin{theorem}\label{asymptotic_dist_sequential_LSEs_one_comp}
Under assumptions \ref{assump:1} and \ref{assump:2},
\begin{enumerate}[label=(\alph*)]
\item 
$(\tilde{\boldsymbol{\theta}}^{(1)} - {\boldsymbol{\theta}^0}^{(1)})\textbf{D}_1^{-1} \xrightarrow{d} \mathcal{N}_3(0, \sigma^2c{\boldsymbol{\Sigma}^{(1)}}^{-1})$,
\item $(\tilde{\boldsymbol{\theta}}^{(2)} - {\boldsymbol{\theta}^0_1}^{(2)})\textbf{D}_2^{-1} \xrightarrow{d} \mathcal{N}_3(0, \sigma^2c{\boldsymbol{\Sigma}^{(2)}}^{-1})$,
\end{enumerate}
where $\textbf{D}_1$ and $\textbf{D}_2$, are sub-matrices of the diagonal matrix $\textbf{D}$ such that $\textbf{D} = \begin{pmatrix}\begin{array}{c|c}
\textbf{D}_1 & \mathbf{0}\\ \hline
\mathbf{0} & \textbf{D}_2\\
\end{array}
\end{pmatrix}.$ Similarly, ${\boldsymbol{\Sigma}^{(1)}}^{-1}$ and ${\boldsymbol{\Sigma}^{(2)}}^{-1}$ are such that $\boldsymbol{\Sigma}^{-1}(\boldsymbol{\theta}^0) = \begin{pmatrix}\begin{array}{c|c}
{\boldsymbol{\Sigma}^{(1)}}^{-1}(\boldsymbol{\theta}^0) & \mathbf{0}\\ \hline
\mathbf{0} & {\boldsymbol{\Sigma}^{(2)}}^{-1}(\boldsymbol{\theta}^0)\\
\end{array}
\end{pmatrix}.$ Note that, $\textbf{D}$, $c$ and $\boldsymbol{\Sigma}^{-1}(\boldsymbol{\theta}^0)$ are as defined in Theorem \ref{asymptotic_dist_LSEs_one_component}.
\end{theorem}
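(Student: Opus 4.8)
The plan is to treat \ref{itm:step1} and \ref{itm:step2} as two separate nonlinear least squares problems and, for each, to obtain asymptotic normality by the familiar route of expanding the gradient of the error sum of squares about the true value. Concretely, since the sequential estimators profile out the linear parameters, $\tilde{\boldsymbol{\theta}}^{(1)}$ is the minimiser of $Q_1(\boldsymbol{\theta}^{(1)})$ in \eqref{ess_1} over all of $(A,B,\alpha)$, and likewise $\tilde{\boldsymbol{\theta}}^{(2)}$ minimises $Q_2(\boldsymbol{\theta}^{(2)})$ in \eqref{ess_2}. Using $Q_1'(\tilde{\boldsymbol{\theta}}^{(1)})=\mathbf{0}$ and a first–order Taylor expansion of the gradient about ${\boldsymbol{\theta}^0}^{(1)}$, one gets
\[(\tilde{\boldsymbol{\theta}}^{(1)} - {\boldsymbol{\theta}^0}^{(1)})\mathbf{D}_1^{-1} = -\left[Q_1'({\boldsymbol{\theta}^0}^{(1)})\,\mathbf{D}_1\right]\left[\mathbf{D}_1\,Q_1''(\bar{\boldsymbol{\theta}}^{(1)})\,\mathbf{D}_1\right]^{-1},\]
with $\bar{\boldsymbol{\theta}}^{(1)}$ on the segment joining $\tilde{\boldsymbol{\theta}}^{(1)}$ and ${\boldsymbol{\theta}^0}^{(1)}$, and the analogous identity for stage two with $\mathbf{D}_2$. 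The whole point is that each stage should ``see'' only its own block of $\boldsymbol{\Sigma}^{-1}(\boldsymbol{\theta}^0)$; the reason the limits reproduce the blocks ${\boldsymbol{\Sigma}^{(1)}}^{-1}$ and ${\boldsymbol{\Sigma}^{(2)}}^{-1}$ of Theorem~\ref{asymptotic_dist_LSEs_one_component} is precisely the asymptotic orthogonality between the sinusoidal regressors $\{\cos(\alpha t),\sin(\alpha t)\}$ and the chirp regressors $\{\cos(\beta t^2),\sin(\beta t^2)\}$, which I would import from the preliminary lemmas in the appendix.

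For part (a), I would first note that at ${\boldsymbol{\theta}^0}^{(1)}$ the residual equals $C^0\cos(\beta^0 t^2)+D^0\sin(\beta^0 t^2)+X(t)$, so $\mathbf{D}_1\,Q_1'({\boldsymbol{\theta}^0}^{(1)})$ splits into a deterministic chirp–driven part and a stochastic $X(t)$–driven part. Every entry of the chirp–driven part is a normalised mixed sinusoid–chirp sum of the type $n^{-1/2}\sum_t \{\cos,\sin\}(\alpha^0 t)\{\cos,\sin\}(\beta^0 t^2)$ (with an extra factor $t$ and normalisation $n^{-3/2}$ for the $\alpha$–derivative), and these vanish by the orthogonality estimates. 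The stochastic part, $-2\,\mathbf{D}_1\sum_t X(t)\,\partial_{\boldsymbol{\theta}^{(1)}}\mu_1(t)$ with $\mu_1$ the sinusoidal mean, is handled by the central limit theorem for the linear process \eqref{error_assumption}, giving a normal limit whose covariance is $\sigma^2 c$ times the relevant matrix. By the consistency established in Theorem~\ref{consistency_sequential_LSEs_one_comp}, $\bar{\boldsymbol{\theta}}^{(1)}\to{\boldsymbol{\theta}^0}^{(1)}$, so $\mathbf{D}_1\,Q_1''(\bar{\boldsymbol{\theta}}^{(1)})\,\mathbf{D}_1$ converges to a fixed nonsingular matrix (again the chirp contribution to the Hessian washes out by orthogonality), and Slutsky's theorem delivers the stated $\mathcal{N}_3(0,\sigma^2 c\,{\boldsymbol{\Sigma}^{(1)}}^{-1})$.

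Part (b) has the same skeleton applied to $Q_2$, but the input is $\boldsymbol{Y}_1=\boldsymbol{Y}-\boldsymbol{Z}^{(1)}(\tilde{\alpha})\tilde{\boldsymbol{\mu}}^{(1)}$ rather than the clean signal, so at ${\boldsymbol{\theta}^0}^{(2)}$ the residual is $X(t)+r(t)$, where $r(t)=\{A^0\cos(\alpha^0 t)+B^0\sin(\alpha^0 t)\}-\{\tilde{A}\cos(\tilde{\alpha} t)+\tilde{B}\sin(\tilde{\alpha} t)\}$ is the error left by the first stage. The new task is to show that the $r(t)$–contribution to $\mathbf{D}_2\,Q_2'({\boldsymbol{\theta}^0}^{(2)})$ is $o_p(1)$. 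I would expand $r(t)$ to first order, producing terms proportional to $(A^0-\tilde A)$, $(B^0-\tilde B)$ and $(\alpha^0-\tilde\alpha)$, and then combine the rates from part (a) with the $t^k$–weighted sinusoid–chirp orthogonality sums to see that each such term is negligible after the $\mathbf{D}_2$ scaling. Once this is done, $\boldsymbol{Y}_1$ behaves, to the relevant order, like the chirp signal plus $X(t)$, and the stochastic CLT term, the convergence of the scaled Hessian, and Slutsky run exactly as in part (a), yielding $\mathcal{N}_3(0,\sigma^2 c\,{\boldsymbol{\Sigma}^{(2)}}^{-1})$.

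The main obstacle, in both parts, is the control of the cross–component oscillatory sums at the sharp rate dictated by $\mathbf{D}_1$ and $\mathbf{D}_2$. This is delicate in stage two because the chirp regressor derivatives carry factors of $t$ and $t^2$ (mirrored in the $n^{-3/2}$ and $n^{-5/2}$ entries of $\mathbf{D}_2$), which must be shown to be overwhelmed by the cancellation in the sinusoid–chirp sums acting against the small first–stage errors $\tilde A-A^0,\tilde B-B^0=O_p(n^{-1/2})$ and $\tilde\alpha-\alpha^0=O_p(n^{-3/2})$; correctly balancing this polynomial growth against the estimator rates and the oscillatory cancellation is exactly where the preliminary orthogonality lemmas must be invoked with care. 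By contrast, I expect the behaviour of the gradient and Hessian under clean data, and the CLT for the $X(t)$–driven term, to follow routinely from the established sinusoidal and chirp arguments; the genuinely new bookkeeping is the propagation–of–error control in part (b).
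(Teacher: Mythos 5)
Your proposal matches the paper's own argument essentially step for step: the Taylor expansion of the profiled gradient with the scalings $\mathbf{D}_1,\mathbf{D}_2$, the elimination of the cross-component deterministic terms via the $n^{-k-1/2}$ sinusoid--chirp orthogonality sums of Lemma~\ref{preliminary_result_5}, the CLT for the linear-process term, convergence of the scaled Hessian to $2\boldsymbol{\Sigma}_1$ (resp.\ $2\boldsymbol{\Sigma}_2$), and, for part (b), controlling the first-stage residual $r(t)$ using the convergence rates of $\tilde A,\tilde B,\tilde\alpha$ (the paper packages this as Lemma~\ref{rate_of_convergence_1} and equation~\eqref{relation_sinusoid_comp_and_estimate}). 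The proposal is correct and takes the same approach.
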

\begin{proof}
See Appendix ~\ref{appendix:C2}.\\
\end{proof}

\section{Multiple Component Chirp-like Model}\label{multiple_component_theory}
To model real life phenomena effectively, we require a more adaptable model. In this section, we consider a multiple component chirp-like model, defined in \eqref{multiple_comp_model}, a natural generalisation of the one component model. 
Under certain assumptions  in addition to Assumption \ref{assump:1} on the noise component, that we state below, we study the asymptotic properties of the LSEs and provide the results in the following subsection.
\justify
Let us denote $\boldsymbol{\vartheta}$ as the parameter vector for model \eqref{multiple_comp_model},
$$\boldsymbol{\vartheta} =
(A_1, B_1, \alpha_1,  \cdots, A_p, B_p, \alpha_p, C_1, D_1, \beta_1, \cdots, C_q, D_q, \beta_q).$$
Also, let $\boldsymbol{\vartheta}^0$ denote the true parameter vector and $\hat{\boldsymbol{\vartheta}}$, the LSE of $\boldsymbol{\vartheta}^0.$                                                                                                                                                                                                                                                                                                                                                                                                                                                                                                                                                                                                                                                                                                                                                                                                                                                                                                                                                                                                                                                                                                                                                                                                                                                                   

\begin{assumption}\label{assump:3}
$\boldsymbol{\vartheta}^0$ is an interior point of $\bm{\mathcal{V}} = {\boldsymbol{\Theta}_{1}}^{(p+q)} $, the parameter space and the frequencies $\alpha_{j}^0s$ are distinct for $j = 1, \cdots p$ and so are the frequency rates $\beta_{k}^0s$  for $k = 1, \cdots q$. Note that $\boldsymbol{\Theta}_{1} = [-M, M] \times [-M, M]  \times [0,\pi]. $  \\                                                                                                                                                                                                                                                                                                                                                                                                                                                                                                                                                                                                                                                                                                                                                                                                                                                                                                                                                                                                                                                                                                                                                                                                                                                                            \end{assumption}
\begin{assumption}\label{assump:4}
The amplitudes, $A_j^0$s and $B_j^0$s satisfy the following relationship: \\
$$ \infty > {A_{1}^{0}}^2 + {B_{1}^{0}}^2 > {A_{2}^{0}}^2 + {B_{2}^{0}}^2 > \cdots > {A_{p}^{0}}^2 + {B_{p}^{0}}^2 > 0. $$
Similarly, $C_k^0$s and $D_k^0$s satisfy the following relationship: \\
$$ \infty > {C_{1}^{0}}^2 + {D_{1}^{0}}^2 > {C_{2}^{0}}^2 + {D_{2}^{0}}^2 > \cdots > {C_{q}^{0}}^2 + {D_{q}^{0}}^2 > 0. $$
\end{assumption}
\subsection{Least Squares Estimators}\label{multiple_component_LSEs_theory}
The LSEs of the unknown parameters of the proposed model (see  \eqref{multiple_comp_model}), can be obtained by minimising the error sum of squares:
\begin{equation}\label{ess_multiple_component}
Q(\boldsymbol{\vartheta}) = \sum_{t=1}^{n}\bigg(y(t) - \sum_{j=1}^{p} A_j \cos(\alpha_j t) + B_j \sin(\alpha_j t) - \sum_{k=1}^{q} C_k \cos(\beta_k t^2) + D_k \sin(\beta_k t^2)\bigg)^2,
\end{equation}
with respect to $A_1$, $B_1$, $\alpha_1$, $\cdots$,  $A_p$, $B_p$ $\alpha_p$, $C_1$,$D_1$,$\beta_1$, $\cdots$, $C_q$  $D_q$ and $\beta_q$ simultaneously.  Similar to the one component model, $Q(\boldsymbol{\vartheta})$ can be expressed in matrix notation and then the LSE,  $\hat{\boldsymbol{\vartheta}}$ of $\boldsymbol{\vartheta}^0$, can be obtained along the similar lines.
\justify
Next we examine the consistency property of the LSE $\hat{\boldsymbol{\vartheta}}$ along with its asymptotic distribution.
 
\begin{theorem}\label{consistency_LSEs_multiple_comp}
If assumptions, \ref{assump:1}, \ref{assump:3} and \ref{assump:4}, hold true, then:
$$\hat{\boldsymbol{\vartheta}} \xrightarrow{a.s.} \boldsymbol{\vartheta}^0 \textmd{ as } n \rightarrow \infty.$$
\end{theorem}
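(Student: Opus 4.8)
The plan is to establish strong consistency by verifying the hypothesis of the same Wu-type separation argument that underlies Theorem \ref{consistency_LSEs_one_component}. Write $\mu(t,\boldsymbol{\vartheta}) = \sum_{j=1}^p\{A_j\cos(\alpha_j t)+B_j\sin(\alpha_j t)\} + \sum_{k=1}^q\{C_k\cos(\beta_k t^2)+D_k\sin(\beta_k t^2)\}$ for the signal part, so that $Q(\boldsymbol{\vartheta}) = \sum_{t=1}^n (y(t)-\mu(t,\boldsymbol{\vartheta}))^2$. Since $\hat{\boldsymbol{\vartheta}}$ minimises $Q$ over the compact set $\bm{\mathcal{V}}$, it suffices to show that for every $\epsilon>0$,
$$\liminf_{n\to\infty}\ \inf_{\boldsymbol{\vartheta}\in\bm{\mathcal{V}},\,\|\boldsymbol{\vartheta}-\boldsymbol{\vartheta}^0\|\ge\epsilon}\ \frac{1}{n}\big(Q(\boldsymbol{\vartheta})-Q(\boldsymbol{\vartheta}^0)\big) > 0 \quad\text{a.s.}$$
If this holds, no limit point of $\hat{\boldsymbol{\vartheta}}$ can lie outside an $\epsilon$-ball of $\boldsymbol{\vartheta}^0$, and compactness of $\bm{\mathcal{V}}$ then forces $\hat{\boldsymbol{\vartheta}}\xrightarrow{a.s.}\boldsymbol{\vartheta}^0$.

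The second step is the standard expansion. Substituting $y(t)=\mu(t,\boldsymbol{\vartheta}^0)+X(t)$ gives
$$\frac{1}{n}\big(Q(\boldsymbol{\vartheta})-Q(\boldsymbol{\vartheta}^0)\big) = \frac{1}{n}\sum_{t=1}^n\big(\mu(t,\boldsymbol{\vartheta}^0)-\mu(t,\boldsymbol{\vartheta})\big)^2 + \frac{2}{n}\sum_{t=1}^n X(t)\big(\mu(t,\boldsymbol{\vartheta}^0)-\mu(t,\boldsymbol{\vartheta})\big).$$
The cross term is a finite linear combination of sums of the form $\frac{1}{n}\sum_t X(t)\cos(\alpha t)$, $\frac{1}{n}\sum_t X(t)\sin(\beta t^2)$, and so on; under Assumption \ref{assump:1} each such sum converges to $0$ almost surely, uniformly in its frequency or frequency-rate argument, by the preliminary number-theoretic lemmas collected in the appendix. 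Since the linear amplitudes range over the compact interval $[-M,M]$, the entire cross term tends to $0$ a.s.\ uniformly over $\bm{\mathcal{V}}$ and may be discarded.

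It remains to analyse the deterministic leading term. Using the orthogonality relations valid under Assumption \ref{assump:3} — averages of $\cos(\alpha t)\cos(\alpha' t)$ vanish for $\alpha\ne\alpha'$ and equal $\tfrac12$ for $\alpha=\alpha'\in(0,\pi)$, the quadratic-phase regressors satisfy the analogous relations with limiting value $\tfrac12$ on the diagonal, and every sinusoidal regressor is asymptotically orthogonal to every quadratic-phase regressor — the limit of $\frac1n\sum_t(\mu(t,\boldsymbol{\vartheta}^0)-\mu(t,\boldsymbol{\vartheta}))^2$ separates into a sum over the two blocks. When the fitted frequencies and rates are matched to the true ones, this limit equals $\tfrac12$ times the sum of squared amplitude discrepancies; when some $\alpha_j$ or $\beta_k$ stays away from all true values, the corresponding true component contributes its full energy, bounded below by a positive constant governed by the smallest true amplitude of Assumption \ref{assump:4}. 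The strict ordering in Assumption \ref{assump:4} fixes the correspondence between fitted and true components, ruling out spurious label permutations, so the limit is strictly positive for all $\boldsymbol{\vartheta}$ with $\|\boldsymbol{\vartheta}-\boldsymbol{\vartheta}^0\|\ge\epsilon$, which is precisely the inequality required.

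I expect this deterministic term to be the main obstacle. The uniform-in-frequency control of the noise sums, together with the clean separation of the linear-phase and quadratic-phase blocks, makes the cross term routine, but verifying that the limiting criterion has a \emph{unique} minimiser requires care: one must rule out the degenerate configurations in which several fitted frequencies coalesce, or in which a fitted component latches onto a true component of a different index, and must exploit the strict amplitude ordering of Assumption \ref{assump:4} to identify each fitted component with its true counterpart. Once consistency of the frequencies and frequency rates is established in this way, consistency of the amplitudes follows by continuity of the separable linear-regression solution for $(A_j,B_j,C_k,D_k)$.
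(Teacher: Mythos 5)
Your proposal follows essentially the same route as the paper: the paper proves the multiple-component consistency ``along the similar lines'' as the one-component case, which is exactly the Wu-type sufficient condition (Lemma \ref{lemma_condition_consistency}), the decomposition of $\frac{1}{n}(Q(\boldsymbol{\vartheta})-Q(\boldsymbol{\vartheta}^0))$ into a deterministic quadratic term plus a cross term killed uniformly by Lemma \ref{preliminary_result_4}, and the positivity of the deterministic term via the orthogonality Lemmas \ref{preliminary_result_1}--\ref{preliminary_result_3} on a suitable splitting of the exceptional set. If anything, you are slightly more explicit than the paper about the label-permutation issue and the role of Assumption \ref{assump:4}, but the argument is the same.
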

\begin{proof}
The consistency of the LSE $\hat{\boldsymbol{\vartheta}}$ can be proved along the similar lines as the consistency of the LSE $\hat{\boldsymbol{\theta}}$, for the one component model.\\
\end{proof}

\begin{theorem}\label{asymptotic_distribution_LSEs_multiple_comp}
If assumptions, \ref{assump:1}, \ref{assump:3} and \ref{assump:4}, hold true, then:
$$(\hat{\boldsymbol{\vartheta}} - \boldsymbol{\vartheta}^0)\mathfrak{D}^{-1} \xrightarrow{d} \mathcal{N}_{3(p+q)}(0, \mathcal{E}^{-1}(\boldsymbol{\vartheta}^0)).$$
Here, $\mathfrak{D} = diag(\underbrace{\textbf{D}_1, \cdots \textbf{D}_1}_{p\ times}, \underbrace{\textbf{D}_2, \cdots, \textbf{D}_2}_{q\ times})$, where $\textbf{D}_1 = diag(\frac{1}{\sqrt{n}}, \frac{1}{\sqrt{n}}, \frac{1}{n\sqrt{n}})$ and $\textbf{D}_2 = diag(\frac{1}{\sqrt{n}}, \frac{1}{\sqrt{n}}, \frac{1}{n^2\sqrt{n}})$.\\
$$  \mathcal{E}(\boldsymbol{\vartheta}^0)  = \begin{pmatrix}
\boldsymbol{\Sigma}^{(1)}_1 & 0              &   \cdots   &  & \cdots & 0 &   \\
0              & \ddots &   0  &    & \cdots &  0 &         \\
\vdots         & \vdots         &  \boldsymbol{\Sigma}^{(1)}_{p}  & 0 & \cdots & 0  \\
0              &   \cdots           &  0      &  \boldsymbol{\Sigma}^{(2)}_1 & 0 & 0  \\
0              &    \cdots           & \cdots     &    0            & \ddots & 0  \\
0              &    \cdots          &      &    \cdots      &   0 & \boldsymbol{\Sigma}^{(2)}_{q}   \\
\end{pmatrix},$$
with $\boldsymbol{\Sigma}^{(1)}_j = \begin{pmatrix}
\frac{1}{2} & 0 & \frac{B_j^0}{4} \\
0 & \frac{1}{2} & \frac{-A_j^0}{4} \\
\frac{B_j^0}{4} & \frac{-A_j^0}{4} & \frac{{A_j^0}^2 + {B_j^0}^2}{6} \\
\end{pmatrix}, j = 1, \cdots, p $ and $\boldsymbol{\Sigma}^{(2)}_k = \begin{pmatrix}
\frac{1}{2} & 0 & \frac{D_k^0}{6} \\
0 & \frac{1}{2} & \frac{-C_k^0}{6} \\
\frac{D_k^0}{6} & \frac{-C_k^0}{6} & \frac{{C_k^0}^2 + {D_k^0}^2}{10} \\
\end{pmatrix}, k = 1, \cdots, q.$  
\end{theorem}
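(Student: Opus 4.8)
The plan is to establish the asymptotic normality of $\hat{\boldsymbol{\vartheta}}$ by a standard multivariate Taylor expansion of the gradient of $Q(\boldsymbol{\vartheta})$ around the true value $\boldsymbol{\vartheta}^0$, exactly as one would do for Theorem~\ref{asymptotic_dist_LSEs_one_component} in the one-component case, but now tracking the cross-component blocks and showing they vanish asymptotically. Let me think about the structure.

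Let me set $Q'(\boldsymbol{\vartheta})$ denote the $3(p+q)$-dimensional gradient vector and $Q''(\boldsymbol{\vartheta})$ the Hessian. Since $\hat{\boldsymbol{\vartheta}}$ minimizes $Q$, we have $Q'(\hat{\boldsymbol{\vartheta}})=0$, and a Taylor expansion gives $Q'(\boldsymbol{\vartheta}^0) + (\hat{\boldsymbol{\vartheta}}-\boldsymbol{\vartheta}^0)Q''(\bar{\boldsymbol{\vartheta}})=0$ where $\bar{\boldsymbol{\vartheta}}$ lies on the segment between $\hat{\boldsymbol{\vartheta}}$ and $\boldsymbol{\vartheta}^0$. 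Rearranging, $(\hat{\boldsymbol{\vartheta}}-\boldsymbol{\vartheta}^0)\mathfrak{D}^{-1} = -[Q'(\boldsymbol{\vartheta}^0)\mathfrak{D}][\mathfrak{D}Q''(\bar{\boldsymbol{\vartheta}})\mathfrak{D}]^{-1}$.

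So the proof breaks into three parts. First I would compute the normalized score $Q'(\boldsymbol{\vartheta}^0)\mathfrak{D}$ and show, via a central limit theorem for the linear-process noise under Assumption~\ref{assump:1}, that it converges to a multivariate normal with mean zero and covariance $2c\sigma^2\,\mathcal{E}(\boldsymbol{\vartheta}^0)$; the key computations here are the limiting variances of sums such as $\frac{1}{n^{3/2}}\sum t\,X(t)\cos(\alpha_j^0 t)$ and $\frac{1}{n^{5/2}}\sum t^2 X(t)\cos(\beta_k^0 t^2)$, which rely on the standard limits $\frac{1}{n^{k+1}}\sum t^k \cos^2(\cdot)\to \frac{1}{2(k+1)}$ and the fact that cross terms between distinct frequencies, between distinct frequency rates, and between any $\alpha$-block and any $\beta$-block all tend to zero. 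Second, I would show that the normalized Hessian $\mathfrak{D}Q''(\bar{\boldsymbol{\vartheta}})\mathfrak{D}$ converges almost surely to $2\mathcal{E}(\boldsymbol{\vartheta}^0)$; here the strong consistency from Theorem~\ref{consistency_LSEs_multiple_comp} lets me replace $\bar{\boldsymbol{\vartheta}}$ by $\boldsymbol{\vartheta}^0$ in the limit, and the block-diagonal structure emerges from the same orthogonality relations. Finally, combining via Slutsky's theorem yields $(\hat{\boldsymbol{\vartheta}}-\boldsymbol{\vartheta}^0)\mathfrak{D}^{-1}\xrightarrow{d}\mathcal{N}(0,\, [2\mathcal{E}]^{-1}(2c\sigma^2\mathcal{E})[2\mathcal{E}]^{-1}) = \mathcal{N}(0,\, c\sigma^2\mathcal{E}^{-1})$, matching the stated covariance after absorbing the scalar $c\sigma^2$ into the claimed $\mathcal{E}^{-1}(\boldsymbol{\vartheta}^0)$ (up to the normalization conventions used in the one-component theorem).

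The routine mechanics of the derivative computations are tedious but standard, so I would largely defer them to the one-component calculation and emphasize what is genuinely new. The main obstacle, and the crux of the argument, is verifying that the off-diagonal blocks of the normalized Hessian and the cross-covariances in the score vanish asymptotically — that is, that components with distinct $\alpha_j^0$, distinct $\beta_k^0$, and the mutual sine/cosine versus chirp terms are asymptotically orthogonal. This is where Assumptions~\ref{assump:3} and~\ref{assump:4} (distinctness of the frequencies and frequency rates, and the strict amplitude ordering ensuring identifiability of each component) do the real work, guaranteeing that mixed trigonometric sums like $\frac{1}{n}\sum\cos(\alpha_j^0 t)\cos(\alpha_\ell^0 t)$ for $j\neq\ell$, or $\frac{1}{n^{?}}\sum\cos(\alpha_j^0 t)\cos(\beta_k^0 t^2)$, decay to zero. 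Once the block-diagonal structure of both $\mathcal{E}(\boldsymbol{\vartheta}^0)$ and the score covariance is established, the separation into the $p$ sinusoidal blocks $\boldsymbol{\Sigma}^{(1)}_j$ and the $q$ chirp blocks $\boldsymbol{\Sigma}^{(2)}_k$ follows immediately, and each block reduces precisely to the one-component limit already identified.
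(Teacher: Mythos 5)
Your proposal follows essentially the same route as the paper's proof: a Taylor expansion of the gradient, a CLT for the normalized score $\textbf{Q}'(\boldsymbol{\vartheta}^0)\mathfrak{D}$, almost sure convergence of $\mathfrak{D}\textbf{Q}''(\bar{\boldsymbol{\vartheta}})\mathfrak{D}$ to $2\mathcal{E}(\boldsymbol{\vartheta}^0)$ using the consistency of $\hat{\boldsymbol{\vartheta}}$, and the orthogonality lemmas to kill the cross-component blocks. The only blemish is the constant in the score covariance: the paper (consistently with the one-component case) gets $4c\sigma^2\mathcal{E}(\boldsymbol{\vartheta}^0)$, so the sandwich $[2\mathcal{E}]^{-1}(4c\sigma^2\mathcal{E})[2\mathcal{E}]^{-1}=c\sigma^2\mathcal{E}^{-1}$ comes out right, whereas your stated $2c\sigma^2\mathcal{E}$ does not reproduce your own final answer.
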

\begin{proof}
See Appendix ~\ref{appendix:D1}.\\
\end{proof}

\subsection{Sequential Least Squares Estimators}\label{multiple_component_seq_LSEs_theory}
For the multiple component chirp-like model, if the number of components, $p$ and $q$ are very large, finding the LSEs becomes computationally challenging. To resolve this issue, we propose a sequential procedure to estimate the unknown parameters  similar to  the one component model. Using the sequential procedure, the $(p+q)-$dimensional optimisation problem can be reduced to $p+q$, 1D optimisation problems. The algorithm for the sequential estimation is as follows:\\
\justify
\begin{itemize}
\item Perform \textbf{Step 1} (see section \ref{one_component_seq_LSEs_theory}) and obtain the estimate, $\tilde{\boldsymbol{\theta}}^{(1)}_1$ =  $(\tilde{A}_1$, $\tilde{B}_1$, $\tilde{\alpha}_1$).
\item Eliminate the effect of the estimated sinusoid component and obtain new data vector:
$$y_1(t) = y(t) - \tilde{A}_1 \cos(\tilde{\alpha}_1 t) -  \tilde{B}_1 \sin(\tilde{\alpha}_1 t). $$
\item Minimize the following error sum of squares to obtain the estimates of the next sinusoid, $\tilde{\boldsymbol{\theta}}^{(1)}_2$ =  $(\tilde{A}_2$, $\tilde{B}_2$, $\tilde{\alpha}_2$):
$$Q_2(A,B,\alpha) = \sum_{t=1}^{n}\bigg(y_1(t) - A \cos(\alpha t) - B \sin(\alpha t)\bigg)^2.$$
\end{itemize}
Repeat these steps until all the $p$ sinusoids are estimated.
\begin{itemize}[resume]
\item Finally at $(p+1)$-$th$ step, we obtain the data:
$$y_p(t) = y_{p-1}(t) - \tilde{A}_p \cos(\tilde{\alpha}_p t) - \tilde{B}_p \sin(\tilde{\alpha}_p t).$$
\item Using this data we estimate the first chirp-like component parameters, and obtain  $\tilde{\boldsymbol{\theta}}^{(2)}_1$ =  $(\tilde{C}_1$, $\tilde{D}_1$, $\tilde{\beta}_1$): by minimizing:
$$Q_{p+1}(C,D,\beta) = \sum_{t=1}^{n}\bigg(y_{p}(t) - C\cos(\beta t^2) - D \sin(\beta t^2)\bigg)^2.$$ 
\item Now eliminate the effect of this estimated chirp component and obtain: $y_{p+1}(t) = y_{p}(t) - \tilde{C}_1 \cos(\tilde{\beta}_1 t^2) - \tilde{D}_1 \sin(\tilde{\beta}_1 t^2)$ and minimize $Q_{p+2}(C, D, \beta)$ to obtain  $\tilde{\boldsymbol{\theta}}^{(2)}_2$ =  $(\tilde{C}_2$, $\tilde{D}_2$, $\tilde{\beta}_2$).
\end{itemize} 
Continue to do so and estimate all the $q$ chirp components.

\justify
Further in this section, we examine the consistency property of the proposed sequential  estimators, when $p$ and $q$ are unknown. Thus, we consider the following two cases: (a) when the number of components of the fitted model is  less than the actual number of components, and (b) when the number of components of the fitted model is more than the actual number of components.

\begin{theorem}\label{consistency_first component}
If assumptions \ref{assump:1}, \ref{assump:3} and \ref{assump:4} are satisfied, then the following are true:
\begin{enumerate}[label=(\alph*)]
\item $\tilde{\boldsymbol{\theta}}_1^{(1)} \xrightarrow{a.s.} {\boldsymbol{\theta}_1^0}^{(1)}$ as $n \rightarrow \infty,$
\item $\tilde{\boldsymbol{\theta}}_1^{(2)} \xrightarrow{a.s.} {\boldsymbol{\theta}_1^0}^{(2)}$ as $n \rightarrow \infty.$
\end{enumerate}
\end{theorem}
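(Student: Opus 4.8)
The plan is to handle each part by reducing the relevant sequential step to the maximisation of a periodogram-type criterion and then identifying the limiting objective, whose unique minimiser is pinned down by the amplitude ordering in Assumption \ref{assump:4}. Throughout I would use the standard trigonometric-sum limits (the ``number-theoretic'' lemmas collected among the preliminary results), namely $\frac1n\sum_{t=1}^n\cos(\alpha t)\cos(\alpha' t)\to\frac12$ when $\alpha=\alpha'\in(0,\pi)$ and $\to0$ otherwise, the orthogonality $\frac1n\sum_{t=1}^n\cos(\alpha t)\cos(\beta t^2)\to0$ for every fixed $\alpha$ and $\beta\neq0$, the analogous statement $\frac1n\sum_{t=1}^n\cos(\beta t^2)\cos(\beta' t^2)\to\frac12\delta_{\beta\beta'}$, and the almost-sure vanishing of the noise cross-terms $\frac1n\sum_{t=1}^n X(t)\cos(\alpha t)\to0$ and $\frac1n\sum_{t=1}^n X(t)\cos(\beta t^2)\to0$, which follow from Assumption \ref{assump:1}.

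For part (a), since $\tilde{\boldsymbol\theta}_1^{(1)}$ minimises $Q_1$ over a single sinusoid fitted to the full data $\boldsymbol Y$, I would first profile out the linear parameters exactly as in Step~1 and study $\tfrac1n Q_1(A,B,\alpha)$. Expanding $y(t)$ into its $p$ sinusoids, $q$ chirps and noise and applying the limits above, the cross-terms between the fitted sinusoid and both the chirp part and the noise vanish, while the cross-term with the true sinusoids survives only when $\alpha$ equals some $\alpha_j^0$. Hence $\tfrac1n Q_1$ converges almost surely to a limit equal to a constant plus $\tfrac12(A^2+B^2)$ when $\alpha$ is not a true frequency, and to that same expression minus $(AA_j^0+BB_j^0)$ when $\alpha=\alpha_j^0$; minimising in $(A,B)$ gives $-\tfrac12({A_j^0}^2+{B_j^0}^2)$, so by Assumption \ref{assump:4} the unique global minimiser of the limiting objective is $(A_1^0,B_1^0,\alpha_1^0)$. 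A contradiction argument then finishes: if $\tilde{\boldsymbol\theta}_1^{(1)}$ stayed in the complement of a neighbourhood of ${\boldsymbol\theta_1^0}^{(1)}$ infinitely often, the scaled criterion difference would be bounded away from zero there, contradicting that $\tilde{\boldsymbol\theta}_1^{(1)}$ is the minimiser; this step requires the convergence to be uniform over the compact space $\boldsymbol\Theta_1$, which is supplied by the preliminary lemmas.

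For part (b), the first chirp is fitted only after all $p$ sinusoids have been peeled off, so I would first note that the argument of part (a) applies inductively at every sinusoidal step --- after subtracting the first sinusoid the dominant remaining component is the second, and so on --- so that all $\tilde A_j,\tilde B_j,\tilde\alpha_j$ are strongly consistent. Writing $y_p(t)=\sum_{k=1}^q\{C_k^0\cos(\beta_k^0 t^2)+D_k^0\sin(\beta_k^0 t^2)\}+X(t)+r_n(t)$, where $r_n(t)$ collects the sinusoid-removal residuals, the key step is to show that $r_n$ is asymptotically negligible in the chirp criterion, i.e.\ $\frac1n\sum_{t=1}^n r_n(t)\cos(\beta t^2)\to0$ (and likewise with $\sin$); this uses $\tilde A_j\to A_j^0$, $\tilde\alpha_j\to\alpha_j^0$ together with the decoupling limit $\frac1n\sum_t\cos(\alpha t)\cos(\beta t^2)\to0$, which holds for every $\alpha$. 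The minimisation of $Q_{p+1}$ then behaves asymptotically like fitting a single chirp to chirp-part-plus-noise, and repeating the computation of part (a) with the chirp limits yields a limiting objective uniquely minimised at $(C_1^0,D_1^0,\beta_1^0)$, again by Assumption \ref{assump:4}.

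I expect the main obstacle to be this second part: controlling the contamination produced by replacing the true sinusoids with their estimates --- both the residual $r_n$ and the accumulated frequency-estimation errors --- inside the chirp objective, and establishing the trigonometric-sum limits uniformly in the parameters rather than merely pointwise. The structural fact that makes the whole peeling scheme work is the asymptotic orthogonality $\frac1n\sum_t\cos(\alpha t)\cos(\beta t^2)\to0$ between pure sinusoids and pure chirps, which is what guarantees that the sinusoidal and chirp stages do not interfere in the limit.
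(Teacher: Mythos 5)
Your part (a) is essentially the paper's argument: decompose $\frac1n\bigl(Q_1(\boldsymbol{\theta}^{(1)})-Q_1({\boldsymbol{\theta}_1^0}^{(1)})\bigr)$ into a deterministic part and cross terms, kill the cross terms with the sinusoid--chirp orthogonality and the uniform noise lemma (Lemmas \ref{preliminary_result_3} and \ref{preliminary_result_4}), and observe that Assumption \ref{assump:4} forces the limiting criterion to have its unique deepest minimum at the dominant sinusoid; the concluding contradiction step is the paper's Lemma \ref{lemma_condition_consistency_multiple}. That part is sound.

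The gap is in part (b), at exactly the step you flag as the ``main obstacle'' and then dispose of too quickly. You assert that $\frac1n\sum_t r_n(t)\cos(\beta t^2)\to0$ follows from $\tilde A_j\to A_j^0$, $\tilde\alpha_j\to\alpha_j^0$ together with the pointwise decoupling limit. It does not: the residual contains terms $\tilde A_j\cos(\tilde\alpha_j t)-A_j^0\cos(\alpha_j^0 t)$, and $\cos(\tilde\alpha_j t)-\cos(\alpha_j^0 t)$ is of size $t\,|\tilde\alpha_j-\alpha_j^0|$, which for $t$ of order $n$ is $O(n|\tilde\alpha_j-\alpha_j^0|)$ and is not controlled by consistency alone; moreover the decoupling limit of Lemma \ref{preliminary_result_3} is a pointwise statement (``except for a countable number of points'') and cannot simply be evaluated at the random, data-dependent frequency $\tilde\alpha_j$ without a uniformity argument you do not supply. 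The paper closes precisely this hole with a separate rate lemma (Lemma \ref{rate_of_convergence_1}, extended to all components in Lemma \ref{rate_of_convergence_all}): a Taylor expansion of the normal equations gives $(\tilde{\boldsymbol{\theta}}_j^{(1)}-{\boldsymbol{\theta}_j^0}^{(1)})(\sqrt n\,\textbf{D}_1)^{-1}\xrightarrow{a.s.}0$, in particular $n(\tilde\alpha_j-\alpha_j^0)\xrightarrow{a.s.}0$, and this superefficiency is what yields $\tilde A_j\cos(\tilde\alpha_j t)+\tilde B_j\sin(\tilde\alpha_j t)=A_j^0\cos(\alpha_j^0 t)+B_j^0\sin(\alpha_j^0 t)+o(1)$ uniformly in $t\leqslant n$ (equation \eqref{relation_sinusoid_comp_and_estimate}), hence $r_n(t)=o(1)$ uniformly, after which the chirp-step criterion difference decomposes exactly as in part (a). Your proposal needs this rate result --- or some equivalent uniform control of the fitted sinusoid --- before the peeling argument goes through; as written, part (b) is incomplete.
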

\begin{proof}
See Appendix ~\ref{appendix:D2}.\\
\end{proof}

\begin{theorem}\label{consistency_rest_of_the_components}
If assumptions \ref{assump:1}, \ref{assump:3} and \ref{assump:4} are satisfied, the following are true:
\begin{enumerate}[label=(\alph*)]
\item  $\tilde{\boldsymbol{\theta}}_j^{(1)} \xrightarrow{a.s.} {\boldsymbol{\theta}_j^0}^{(1)}$ as $n \rightarrow \infty$, for all $j = 2, \cdots, p$,

\item $\tilde{\boldsymbol{\theta}}_k^{(2)} \xrightarrow{a.s.} {\boldsymbol{\theta}_k^0}^{(2)}$ as $n \rightarrow \infty$, for all $k = 2, \cdots, q$.
\end{enumerate}
\end{theorem}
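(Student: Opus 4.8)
\section*{Proof proposal for Theorem~\ref{consistency_rest_of_the_components}}

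The plan is to prove both parts by induction on the component index, taking Theorem~\ref{consistency_first component} as the base case and reducing each inductive step to the one--component consistency argument already established for Theorem~\ref{consistency_sequential_LSEs_one_comp} (whose proof, in the multiple--component setting, is the template in Appendix~\ref{appendix:D2}). The decisive structural fact is the ordering of the squared amplitudes in Assumption~\ref{assump:4}: it guarantees that at every stage the component with the strictly largest remaining energy is the one being estimated, so that the sequential scheme peels off the sinusoids, and then the chirps, in the correct order.

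For part (a), assume the inductive hypothesis $\tilde{\boldsymbol{\theta}}_i^{(1)} \xrightarrow{a.s.} {\boldsymbol{\theta}_i^0}^{(1)}$ for all $i = 1, \dots, j-1$. Write the data entering the $j$th minimisation as $y_{j-1}(t) = y_{j-1}^{0}(t) + r_{j-1}(t)$, where $y_{j-1}^{0}(t) = \sum_{i\ge j}\{A_i^0\cos(\alpha_i^0 t)+B_i^0\sin(\alpha_i^0 t)\} + \sum_{k}\{C_k^0\cos(\beta_k^0 t^2)+D_k^0\sin(\beta_k^0 t^2)\} + X(t)$ is the \emph{ideal} residual obtained by subtracting the \emph{true} first $j-1$ sinusoids, and $r_{j-1}(t) = \sum_{i=1}^{j-1}\{(\tilde{A}_i\cos(\tilde{\alpha}_i t)+\tilde{B}_i\sin(\tilde{\alpha}_i t)) - (A_i^0\cos(\alpha_i^0 t)+B_i^0\sin(\alpha_i^0 t))\}$ is the removal error. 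Minimising $R_j(\alpha)$ is equivalent to maximising the periodogram--type functional $I_{j-1}(\alpha) = \tfrac{2}{n^2}\{(\sum_t y_{j-1}(t)\cos(\alpha t))^2+(\sum_t y_{j-1}(t)\sin(\alpha t))^2\}$. The standard trigonometric orthogonality relations, the asymptotic orthogonality of the chirps to every sinusoid, and the uniform strong laws for the noise--modulated sums (the preliminary lemmas in the appendix) show that $I_{j-1}(\alpha)$ is negligible except near the frequencies $\alpha_i^0$, $i\ge j$, where it reflects the energy $\tfrac12({A_i^0}^2+{B_i^0}^2)$. Since the $j$th sinusoid carries strictly the largest remaining energy by Assumption~\ref{assump:4}, a contradiction argument over the compact interval $[0,\pi]$ -- identical to the one used for Theorem~\ref{consistency_sequential_LSEs_one_comp} -- forces every limit point of $\tilde{\alpha}_j$ to equal $\alpha_j^0$, and consistency of $\tilde{A}_j, \tilde{B}_j$ then follows through~\eqref{linear_est_sin_1}.

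Part (b) is handled in the same manner once all $p$ sinusoids have been removed. At that stage $y_p(t) = \sum_{k}\{C_k^0\cos(\beta_k^0 t^2)+D_k^0\sin(\beta_k^0 t^2)\} + X(t) + r_p(t)$, where $r_p(t)$ collects the sinusoidal removal errors; because sinusoids and chirps are asymptotically orthogonal, $\frac1n\sum_t r_p(t)\cos(\beta t^2)\to 0$, so the leftover sinusoidal energy cannot produce a spurious maximum of the chirp criterion. A second induction, now on $k = 1,\dots,q$ with base case Theorem~\ref{consistency_first component}(b) and using the ordering of ${C_k^0}^2+{D_k^0}^2$ in Assumption~\ref{assump:4}, then delivers $\tilde{\boldsymbol{\theta}}_k^{(2)} \xrightarrow{a.s.} {\boldsymbol{\theta}_k^0}^{(2)}$ for every $k$.

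The main obstacle is controlling the removal error in the averaged squared sense, i.e.\ showing $\frac1n\sum_{t=1}^n r_{j-1}(t)^2 \to 0$ almost surely, since this is what lets one replace $y_{j-1}$ by $y_{j-1}^0$ in $I_{j-1}(\alpha)$ (the cross terms being uniformly small by Cauchy--Schwarz). Decomposing a typical summand as $(\tilde{A}_i-A_i^0)\cos(\tilde{\alpha}_i t) + A_i^0\{\cos(\tilde{\alpha}_i t)-\cos(\alpha_i^0 t)\}$, the amplitude part is immediately $o(1)$, but the frequency part is bounded only by $|A_i^0(\tilde{\alpha}_i-\alpha_i^0)t|$, so that $\frac1n\sum_t\{\cos(\tilde{\alpha}_i t)-\cos(\alpha_i^0 t)\}^2 \lesssim (\tilde{\alpha}_i-\alpha_i^0)^2\,\frac1n\sum_t t^2 \sim (\tilde{\alpha}_i-\alpha_i^0)^2\, n^2/3$. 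Hence mere consistency of $\tilde{\alpha}_i$ does not suffice; one needs the rate $n(\tilde{\alpha}_i-\alpha_i^0)\to 0$ almost surely, and for the chirp stage $n^2(\tilde{\beta}_k-\beta_k^0)\to 0$ to counter the faster growth of $\sum_t t^4$. These rate--of--consistency statements are strictly weaker than the $O_p(n^{-3/2})$ and $O_p(n^{-5/2})$ limits of Theorem~\ref{asymptotic_dist_sequential_LSEs_one_comp} but sharper than plain consistency, and they are exactly what must be extracted from the earlier consistency arguments; once available, $I_{j-1}$ is unaffected by $r_{j-1}$ and the induction closes.
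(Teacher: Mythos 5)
Your plan coincides in substance with the paper's: the paper also proves this theorem by iterating the one‑component sequential argument, and the crux in both cases is exactly the point you isolate — that plain consistency of $\tilde{\alpha}_i$ (resp. $\tilde{\beta}_k$) does not make the removal error $\frac{1}{n}\sum_t r_{j-1}(t)^2$ negligible, and that one needs the almost sure rates $n(\tilde{\alpha}_i-\alpha_i^0)\to 0$ and $n^2(\tilde{\beta}_k-\beta_k^0)\to 0$. These are precisely the content of Lemma~\ref{rate_of_convergence_all} (the multi‑component analogue of Lemma~\ref{rate_of_convergence_1}), which yields $\tilde{A}_i\cos(\tilde{\alpha}_i t)+\tilde{B}_i\sin(\tilde{\alpha}_i t)=A_i^0\cos(\alpha_i^0 t)+B_i^0\sin(\alpha_i^0 t)+o(1)$ as in \eqref{relation_sinusoid_comp_and_estimate}; after that the paper runs the same $f+g$ decomposition of $\frac{1}{n}(Q-Q^0)$ together with the orthogonality Lemmas~\ref{preliminary_result_1}--\ref{preliminary_result_4}. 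Your periodogram framing versus the paper's direct sum‑of‑squares decomposition is a cosmetic difference.

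The one substantive weakness is that you assert the required rates must ``be extracted from the earlier consistency arguments'' without saying how, and as stated that extraction would fail: the $\liminf\inf_{M_c}$ criterion of Lemma~\ref{lemma_condition_consistency_sinusoid_1} delivers consistency but no rate whatsoever. The paper obtains the rate by a separate argument — a first‑order Taylor expansion giving $(\tilde{\boldsymbol{\theta}}^{(1)}_j-{\boldsymbol{\theta}_j^0}^{(1)})(\sqrt{n}\,\textbf{D}_1)^{-1}=-\big[\tfrac{1}{\sqrt{n}}\textbf{Q}'({\boldsymbol{\theta}_j^0}^{(1)})\textbf{D}_1\big]\big[\textbf{D}_1\textbf{Q}''(\bar{\boldsymbol{\theta}})\textbf{D}_1\big]^{-1}$, where the first factor tends to $0$ almost surely by the uniform strong law of Lemma~\ref{preliminary_result_4} and the second tends to a positive definite limit $2\boldsymbol{\Sigma}^{(1)}_j$ (consistency being used only to replace $\bar{\boldsymbol{\theta}}$ by the true value). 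You need to supply this gradient/Hessian step, or cite it explicitly, for your induction to close; with it in place the rest of your outline is sound and matches the paper.
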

\begin{proof}
See Appendix ~\ref{appendix:D2}.\\
\end{proof}

\begin{theorem}\label{consistency_excess_components}
If assumptions \ref{assump:1}, \ref{assump:3} and \ref{assump:4} are true, then the following are true:
\begin{enumerate}[label=(\alph*)]
\item $\tilde{A}_{p+1} \xrightarrow{a.s.}$ 0, $\tilde{B}_{p+1} \xrightarrow{a.s.} 0$,
\item $\tilde{C}_{q+1} \xrightarrow{a.s.}$ 0, $\tilde{D}_{q+1} \xrightarrow{a.s.} 0$.
\end{enumerate}
\end{theorem}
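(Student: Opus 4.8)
The plan is to show that once the correct components have been removed, the residual fed into the spurious step carries no energy concentrated at any single frequency (part (a)) or frequency rate (part (b)), so that the separable least-squares amplitudes at that step are driven to zero. For part (a) the data at the $(p+1)$-th sinusoidal step is $\boldsymbol{Y}_p=(y_p(1),\dots,y_p(n))^T$, and the amplitude estimates are
$$\begin{pmatrix}\tilde A_{p+1}(\alpha)\\ \tilde B_{p+1}(\alpha)\end{pmatrix}=\big[\boldsymbol{Z}^{(1)}(\alpha)^T\boldsymbol{Z}^{(1)}(\alpha)\big]^{-1}\boldsymbol{Z}^{(1)}(\alpha)^T\boldsymbol{Y}_p.$$
Since $\tfrac1n\boldsymbol{Z}^{(1)}(\alpha)^T\boldsymbol{Z}^{(1)}(\alpha)\to\tfrac12\boldsymbol{I}$ uniformly on compact subsets of $(0,\pi)$, it suffices to prove the uniform bounds $\sup_\alpha\big|\tfrac1n\sum_t y_p(t)\cos(\alpha t)\big|\to0$ and $\sup_\alpha\big|\tfrac1n\sum_t y_p(t)\sin(\alpha t)\big|\to0$ almost surely; evaluating at the minimiser $\tilde\alpha_{p+1}$ then forces $\tilde A_{p+1},\tilde B_{p+1}\xrightarrow{a.s.}0$.

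To establish the uniform bound I would write $y_p(t)$ as a sum of three pieces: the sinusoid-removal error $\sum_{j=1}^p\{A_j^0\cos(\alpha_j^0 t)+B_j^0\sin(\alpha_j^0 t)-\tilde A_j\cos(\tilde\alpha_j t)-\tilde B_j\sin(\tilde\alpha_j t)\}$, the still-present chirp part $\sum_{k=1}^q\{C_k^0\cos(\beta_k^0 t^2)+D_k^0\sin(\beta_k^0 t^2)\}$, and the noise $X(t)$. The chirp part contributes negligibly because a chirp is asymptotically orthogonal to every sinusoid, $\sup_\alpha\big|\tfrac1n\sum_t\cos(\beta_k^0 t^2)\cos(\alpha t)\big|\to0$ and likewise for the remaining three cross-products; these uniform number-theoretic estimates I would take from the preliminary lemmas in the appendix. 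The noise part vanishes uniformly by the standard periodogram bound $\sup_\alpha\big|\tfrac1n\sum_t X(t)e^{i\alpha t}\big|\to0$ almost surely under Assumption~\ref{assump:1}.

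Part (b) then follows from the same template. At the $(q+1)$-th chirp step the residual $\boldsymbol{Y}_{p+q}$ has had all $p$ sinusoids and all $q$ chirps removed, so up to removal errors it is pure noise, and with $\tfrac1n\boldsymbol{Z}^{(2)}(\beta)^T\boldsymbol{Z}^{(2)}(\beta)\to\tfrac12\boldsymbol{I}$ it suffices to show $\sup_\beta\big|\tfrac1n\sum_t y_{p+q}(t)\cos(\beta t^2)\big|\to0$ and its sine analogue. These follow from the chirp-periodogram-of-noise bound $\sup_\beta\big|\tfrac1n\sum_t X(t)\cos(\beta t^2)\big|\to0$ almost surely, again a preliminary lemma, so that $\tilde C_{q+1},\tilde D_{q+1}\xrightarrow{a.s.}0$.

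The main obstacle is the removal-error term, where plain consistency of the genuine estimates is not enough. The difference $\tfrac1n\sum_t\{A_j^0\cos(\alpha_j^0 t)-\tilde A_j\cos(\tilde\alpha_j t)\}\cos(\alpha t)$ is a difference of two Dirichlet-type kernels peaked at $\alpha_j^0$ and $\tilde\alpha_j$, each of height $\approx A_j^0/2$ and width $\approx 1/n$, which cancel uniformly only when the peaks are aligned to within a fraction of the kernel width, i.e. when $n(\tilde\alpha_j-\alpha_j^0)\to0$. I would therefore strengthen Theorems~\ref{consistency_first component} and~\ref{consistency_rest_of_the_components} to the sharper almost-sure rate $\tilde\alpha_j-\alpha_j^0=o(n^{-1})$, inherited from the $n^{-3/2}$ frequency rate of the one-component estimators in Theorem~\ref{asymptotic_dist_LSEs_one_component}, and, for part (b), the analogous $\tilde\beta_k-\beta_k^0=o(n^{-2})$ matching the $n^{-2}$ width of the chirp kernel. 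Granting these rates the removal error is uniformly $o(1)$, the three pieces combine to yield the required uniform convergence, and the conclusion follows.
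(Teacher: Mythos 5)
Your proposal is correct and follows essentially the same route as the paper: reduce the spurious-step amplitude estimates to normalised sums of the residual against the sinusoid (resp.\ chirp) basis, kill the noise contribution with the uniform periodogram bounds of Lemma~\ref{preliminary_result_4}, kill the cross terms with the asymptotic-orthogonality results of Lemma~\ref{preliminary_result_3}, and control the removal error through the sharper frequency rates $n(\tilde{\alpha}_j-\alpha_j^0)\rightarrow 0$ and $n^2(\tilde{\beta}_k-\beta_k^0)\rightarrow 0$. The strengthening you propose to add is already available in the paper as Lemma~\ref{rate_of_convergence_1} and Lemma~\ref{rate_of_convergence_all}, which is precisely what the paper uses (via the relation \eqref{relation_sinusoid_comp_and_estimate}) before invoking Lemmas~\ref{over_estimation_sin_lses} and~\ref{over_estimation_chirp_lses} to conclude.
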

\begin{proof}
See Appendix ~\ref{appendix:D2}.\\
\end{proof}
\justify
Next we determine the asymptotic distribution of the proposed estimators at each step through the following theorems:

\begin{theorem}\label{asymptotic_distribution_first_component}
If assumptions \ref{assump:1}, \ref{assump:3} and \ref{assump:4} are satisfied, then:
\begin{enumerate}[label=(\alph*)]
\item $(\tilde{\boldsymbol{\theta}}_1^{(1)} - {\boldsymbol{\theta}^0_1}^{(1)})\textbf{D}_1^{-1} \xrightarrow{d} \mathcal{N}_3(0, \sigma^2c{\boldsymbol{\Sigma}^{(1)}_1}^{-1}),$
\item $(\tilde{\boldsymbol{\theta}}_1^{(2)} - {\boldsymbol{\theta}^0_1}^{(2)})\textbf{D}_2^{-1} \xrightarrow{d} \mathcal{N}_3(0, \sigma^2c{\boldsymbol{\Sigma}^{(2)}_1}^{-1}).$
\end{enumerate}
Here, $\boldsymbol{\Sigma}^{(1)}_1 = \begin{pmatrix}
\frac{1}{2} & 0 & \frac{B_1^0}{4} \\
0 & \frac{1}{2} & -\frac{A_1^0}{4} \\
\frac{B_1^0}{4}  & -\frac{A_1^0}{4} & \frac{{A_1^0}^2 + {B_1^0}^2}{6}
\end{pmatrix} $ , 
${\boldsymbol{\Sigma}^{(2)}_1} = \begin{pmatrix}
\frac{1}{2} & 0 & \frac{D_1^0}{6} \\
0 & \frac{1}{2} & -\frac{C_1^0}{6} \\
\frac{D_1^0}{6}  & -\frac{C_1^0}{6} & \frac{{C_1^0}^2 + {D_1^0}^2}{10}
\end{pmatrix}$  and $c = \sum\limits_{j = -\infty}^{\infty}a^2(j).$
\end{theorem}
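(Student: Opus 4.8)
The plan is to establish part~(a) in full and obtain part~(b) by the same template augmented with one extra reduction. Write $Q_1(\boldsymbol{\theta}^{(1)})$ for the Step~1 error sum of squares in $(A,B,\alpha)$, and let $Q_1'$ and $Q_1''$ be its gradient (taken as a row vector) and Hessian. Since $\tilde{\boldsymbol{\theta}}_1^{(1)}$ minimises $Q_1$, we have $Q_1'(\tilde{\boldsymbol{\theta}}_1^{(1)}) = \mathbf{0}$, and a multivariate mean value expansion of the gradient about ${\boldsymbol{\theta}_1^0}^{(1)}$ yields
$$(\tilde{\boldsymbol{\theta}}_1^{(1)} - {\boldsymbol{\theta}_1^0}^{(1)})\textbf{D}_1^{-1} = -\big[Q_1'({\boldsymbol{\theta}_1^0}^{(1)})\,\textbf{D}_1\big]\big[\textbf{D}_1\,Q_1''(\bar{\boldsymbol{\theta}})\,\textbf{D}_1\big]^{-1},$$
where $\bar{\boldsymbol{\theta}}$ lies on the segment joining $\tilde{\boldsymbol{\theta}}_1^{(1)}$ and ${\boldsymbol{\theta}_1^0}^{(1)}$. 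The proof then splits into an almost sure limit for the scaled Hessian and a central limit theorem for the scaled gradient, combined by Slutsky's theorem.

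For the Hessian I would first invoke consistency (Theorem~\ref{consistency_first component}), which forces $\bar{\boldsymbol{\theta}} \xrightarrow{a.s.} {\boldsymbol{\theta}_1^0}^{(1)}$, and then show entrywise that $\textbf{D}_1\,Q_1''(\bar{\boldsymbol{\theta}})\,\textbf{D}_1 \xrightarrow{a.s.} 2\boldsymbol{\Sigma}^{(1)}_1$. This needs only the standard averages such as $\tfrac1n\sum_t \cos^2(\alpha_1^0 t) \to \tfrac12$, $\tfrac{1}{n^2}\sum_t t\sin(2\alpha_1^0 t)\to 0$ and $\tfrac{1}{n^3}\sum_t t^2\cos^2(\alpha_1^0 t)\to \tfrac16$, which follow from the number-theoretic limit lemmas in the appendix; the powers of $n$ in $\textbf{D}_1$ are calibrated precisely so that the $\alpha$-row and $\alpha$-column (which carry the extra factor $t$) remain finite and nonzero in the limit.

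The crux is the scaled gradient $-Q_1'({\boldsymbol{\theta}_1^0}^{(1)})\textbf{D}_1$. At the true first-component value the residual is $y(t) - A_1^0\cos(\alpha_1^0 t) - B_1^0\sin(\alpha_1^0 t) = \sum_{j\ge 2}\{A_j^0\cos(\alpha_j^0 t)+B_j^0\sin(\alpha_j^0 t)\} + \sum_{k}\{C_k^0\cos(\beta_k^0 t^2)+D_k^0\sin(\beta_k^0 t^2)\} + X(t)$, so the gradient decomposes into deterministic cross-terms plus a noise part. Because the $\alpha_j^0$ are distinct (Assumption~\ref{assump:3}) and sinusoids of different frequencies, as well as a sinusoid against a chirp, are asymptotically orthogonal, all deterministic cross-terms vanish after multiplication by $\textbf{D}_1$. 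What survives is a vector of weighted noise sums of the form $\big(\tfrac{1}{\sqrt n}\sum_t X(t)\cos(\alpha_1^0 t),\ \tfrac{1}{\sqrt n}\sum_t X(t)\sin(\alpha_1^0 t),\ \tfrac{1}{n\sqrt n}\sum_t t\,X(t)g(t)\big)$ with $g(t)=A_1^0\sin(\alpha_1^0 t)-B_1^0\cos(\alpha_1^0 t)$, to which the central limit theorem for weighted sums of the linear process of Assumption~\ref{assump:1} applies, giving $-Q_1'({\boldsymbol{\theta}_1^0}^{(1)})\textbf{D}_1 \xrightarrow{d} \mathcal{N}_3(0,\,4\sigma^2 c\,\boldsymbol{\Sigma}^{(1)}_1)$. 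The sandwich then produces the limiting covariance $(2\boldsymbol{\Sigma}^{(1)}_1)^{-1}(4\sigma^2 c\,\boldsymbol{\Sigma}^{(1)}_1)(2\boldsymbol{\Sigma}^{(1)}_1)^{-1} = \sigma^2 c\,{\boldsymbol{\Sigma}^{(1)}_1}^{-1}$, proving (a).

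Part~(b) runs along the same lines with $\textbf{D}_2$ and $\boldsymbol{\Sigma}^{(2)}_1$, but the first chirp is estimated from $y_p(t)$, i.e. after subtracting the $p$ \emph{estimated} sinusoids rather than the true ones, and this is where I expect the real work to lie. The main obstacle is to show the substitution is asymptotically harmless: the differences $\tilde A_j\cos(\tilde\alpha_j t) - A_j^0\cos(\alpha_j^0 t)$ together with their sine analogues, correlated against the chirp regressors $\cos(\beta_1^0 t^2)$, $\sin(\beta_1^0 t^2)$, $t^2\sin(\beta_1^0 t^2)$ and scaled by $\textbf{D}_2$, must be shown to be $o_p(1)$. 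I would feed in the rates $\tilde A_j - A_j^0 = O_p(n^{-1/2})$ and $\tilde\alpha_j - \alpha_j^0 = O_p(n^{-3/2})$ for each removed sinusoid (from part~(a) and the analogous single-sinusoid analysis underlying Theorem~\ref{asymptotic_dist_sequential_LSEs_one_comp}), combined with the orthogonality of sinusoids against chirps, so that the chirp step effectively sees the residual $\sum_k\{\text{chirps}\} + X(t) + o_p(1)$. Once this reduction is in place, the cross-terms among the distinct chirps (distinct $\beta_k^0$, Assumption~\ref{assump:3}) vanish and the identical Hessian-limit, gradient-CLT and Slutsky argument delivers $\mathcal{N}_3(0,\sigma^2 c\,{\boldsymbol{\Sigma}^{(2)}_1}^{-1})$.
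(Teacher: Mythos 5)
Your proposal is correct and follows essentially the same route as the paper: the Taylor/sandwich identity, the a.s.\ Hessian limit $2\boldsymbol{\Sigma}^{(1)}_1$, the reduction of the scaled gradient to pure noise sums via the orthogonality lemmas, the CLT giving covariance $4\sigma^2 c\,\boldsymbol{\Sigma}^{(1)}_1$, and Slutsky. For part (b) the paper likewise invokes the a.s.\ rates of the first-step sinusoid estimators (its Lemma~\ref{rate_of_convergence_1} and equation~\eqref{relation_sinusoid_comp_and_estimate}) to replace the estimated sinusoids by the true ones and then repeats the argument of part (a), which is exactly the reduction you describe.
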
 
\begin{proof}
See Appendix ~\ref{appendix:D2}.\\
\end{proof}
\justify
This result can be extended for $j \leqslant p$ and $k \leqslant q$  as follows: 
\begin{theorem}\label{asymptotic_distribution_rest_of_the_components}
If assumptions \ref{assump:1}, \ref{assump:3} and \ref{assump:4} are satisfied, then:
\begin{enumerate}[label=(\alph*)]
\item $(\tilde{\boldsymbol{\theta}}_j^{(1)} - {\boldsymbol{\theta}^0_j}^{(1)})\textbf{D}_1^{-1} \xrightarrow{d} \mathcal{N}_3(0, \sigma^2c {\boldsymbol{\Sigma}^{(1)}_j}^{-1}),$
\item $(\tilde{\boldsymbol{\theta}}_k^{(2)} - {\boldsymbol{\theta}^0_k}^{(2)})\textbf{D}_2^{-1} \xrightarrow{d} \mathcal{N}_3(0, \sigma^2c {\boldsymbol{\Sigma}^{(2)}_k}^{-1}).$
\end{enumerate} 
$\boldsymbol{\Sigma}^{(1)}_j$ and $\boldsymbol{\Sigma}^{(2)}_k$ can be obtained from $\boldsymbol{\Sigma}^{(1)}_1$ and $\boldsymbol{\Sigma}^{(2)}_1$ respectively, by replacing $A_1$ and $B_1$ by  $A_j$ and $B_j$  and $C_1$ and $D_1$ by $C_k$ and $D_k$, respectively.
\end{theorem}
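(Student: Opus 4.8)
The plan is to reproduce the argument already used for Theorem~\ref{asymptotic_distribution_first_component}, the only essentially new ingredient being the control of the error propagated from the components estimated at the earlier steps. I would argue by induction on the step number, the base cases being parts (a) and (b) of Theorem~\ref{asymptotic_distribution_first_component}, and I would treat the sinusoidal part (a) and the chirp part (b) in parallel, since part (b) is obtained from part (a) on replacing the regressors $\cos(\alpha t),\sin(\alpha t)$ with $\cos(\beta t^2),\sin(\beta t^2)$ and the scaling matrix $\textbf{D}_1$ with $\textbf{D}_2$. So I describe part (a) for a generic $j\le p$, assuming the components $1,\dots,j-1$ have already been estimated at the claimed rates.

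First I would write down the data actually fed into the $j$-th optimisation, namely $y_{j-1}(t)=y(t)-\sum_{i=1}^{j-1}\{\tilde A_i\cos(\tilde\alpha_i t)+\tilde B_i\sin(\tilde\alpha_i t)\}$, and substitute the model~\eqref{multiple_comp_model}. This splits $y_{j-1}(t)$ into (i) the true remaining signal $\sum_{i=j}^{p}\{A_i^0\cos(\alpha_i^0 t)+B_i^0\sin(\alpha_i^0 t)\}+\sum_{k=1}^{q}\{C_k^0\cos(\beta_k^0 t^2)+D_k^0\sin(\beta_k^0 t^2)\}$, (ii) the noise $X(t)$, and (iii) an error term $\sum_{i=1}^{j-1}\{(A_i^0-\tilde A_i)\cos(\alpha_i^0 t)+(B_i^0-\tilde B_i)\sin(\alpha_i^0 t)+\dots\}$ measuring the discrepancy between the true and the sequentially estimated earlier parameters.

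Next I would take $Q_j'({\boldsymbol{\theta}_j^0}^{(1)})$, the gradient of the $j$-th error sum of squares, and Taylor expand the stationarity condition $Q_j'(\tilde{\boldsymbol{\theta}}_j^{(1)})=0$ about ${\boldsymbol{\theta}_j^0}^{(1)}$ to obtain $(\tilde{\boldsymbol{\theta}}_j^{(1)}-{\boldsymbol{\theta}_j^0}^{(1)})=-Q_j'({\boldsymbol{\theta}_j^0}^{(1)})[Q_j''(\bar{\boldsymbol{\theta}})]^{-1}$ at an intermediate point $\bar{\boldsymbol{\theta}}$. The problem then reduces to the two limits $\textbf{D}_1 Q_j''(\bar{\boldsymbol{\theta}})\textbf{D}_1\xrightarrow{a.s.}2\boldsymbol{\Sigma}^{(1)}_j$ and $Q_j'({\boldsymbol{\theta}_j^0}^{(1)})\textbf{D}_1\xrightarrow{d}\mathcal{N}_3(0,4\sigma^2c\,\boldsymbol{\Sigma}^{(1)}_j)$, after which Slutsky's theorem gives the sandwich form $(2\boldsymbol{\Sigma}^{(1)}_j)^{-1}(4\sigma^2c\,\boldsymbol{\Sigma}^{(1)}_j)(2\boldsymbol{\Sigma}^{(1)}_j)^{-1}=\sigma^2c\,{\boldsymbol{\Sigma}^{(1)}_j}^{-1}$, exactly as in the one-component proof. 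For both limits I would use the standard trigonometric sum evaluations (of the type $\frac1n\sum_t\cos^2(\alpha_j^0 t)\to\frac12$ and $\frac1{n^3}\sum_t t^2\cos^2(\alpha_j^0 t)\to\frac16$) recorded in the preliminary lemmas, and I would note that the contribution of the other remaining true components in piece (i) --- the sinusoids at $\alpha_i^0$ with $i>j$ and all the chirps --- is annihilated after scaling by $\textbf{D}_1$ through the orthogonality of distinct frequencies and of sinusoids against chirps, so that the calculation collapses to the one-component case with $A_j^0,B_j^0$ replacing $A_1^0,B_1^0$.

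The main obstacle, and where I expect most of the work to lie, is showing that the error piece (iii) is asymptotically negligible in both limits. From the inductive hypothesis (the already-proved instances of the present theorem and of Theorem~\ref{asymptotic_distribution_first_component}) one has $\tilde A_i-A_i^0=O_p(n^{-1/2})$, $\tilde B_i-B_i^0=O_p(n^{-1/2})$ and, crucially, $\tilde\alpha_i-\alpha_i^0=O_p(n^{-3/2})$ for every $i<j$. Plugging these rates into the contribution of (iii) to $Q_j'\textbf{D}_1$ and to $\textbf{D}_1 Q_j''\textbf{D}_1$, the delicate terms are those in which the frequency coordinate, scaled by the large factor $n\sqrt n$, meets a factor of $t$ arising from differentiation; here one must verify that even after this amplification the separation between the distinct frequencies $\alpha_i^0$ ($i<j$) and $\alpha_j^0$ forces every such term to $o_p(1)$. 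Once piece (iii) is discarded, the two limits above hold verbatim and the result for step $j$ follows; the chirp statement (b) is identical with $t^2$ in place of $t$ and the rate $\tilde\beta_k-\beta_k^0=O_p(n^{-5/2})$ replacing that of the frequency, which completes the induction.
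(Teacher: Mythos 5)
Your proposal is correct and follows essentially the same route as the paper: a Taylor expansion of the stationarity condition, the orthogonality/negligibility lemmas (Lemmas \ref{preliminary_result_1}--\ref{preliminary_result_5}) to reduce the scaled gradient to the pure noise terms, the CLT for the limit law, and control of the error propagated from earlier steps via the convergence rates of the preceding estimators (which the paper packages as Lemma \ref{rate_of_convergence_all} and the relation \eqref{relation_sinusoid_comp_and_estimate}, giving almost sure rates rather than your $O_p$ rates, but serving the identical purpose). The paper itself proves only the $j=k=1$ case in detail (Theorem \ref{asymptotic_distribution_first_component}) and asserts the general case follows along the same lines, which is precisely the induction you spell out.
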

\begin{proof}
This can be obtained along the same lines as the proof of Theorem~\ref{asymptotic_distribution_first_component}.  \\
\end{proof}
\justify
From the above results, it is evident that the sequential LSEs are strongly consistent and have the same asymptotic distribution as the LSEs and at the same time can be computed more efficiently. Thus for the simulation studies and to analyse the real data sets as well, we compute the sequential LSEs instead of the LSEs.
\section{Simulation Studies}\label{simulation_results}
In this section, we first present numerical experiments for the one component chirp-like model:
$$y(t) = A^0 \cos(\alpha^0 t)+ B^0 \sin(\alpha^0 t)  + C^0 (\beta^0 t^2) + D^0 (\beta^0 t^2) + X(t),$$
with the true parameter values: $A^0 = 10$, $B^0 = 10$, $\alpha^0 = 1.5$, $C^0 = 10$, $D^0 = 10$ and $\beta^0 = 0.1$. We consider the following errors to generate data for these simulation experiments:
\begin{align}
& 1. \quad X(t) = \epsilon(t).\label{eq:error_normal}\\
& 2. \quad X(t) = \epsilon(t) + 0.5  \epsilon(t-1).\label{eq:error_MA}
\end{align}
\justify
Here $\epsilon(t)$s are i.i.d. normal random variables with mean zero and variance $\sigma^2$. We consider different sample sizes: $n = 100, 200, 300, 400$ and $500$  and different error variances: $\sigma^2$: 0.1, 0.5 and 1. For each $n$ we replicate the process, that is generate the data and obtain the proposed sequential LSEs 1000 times and report their averages, biases and variances. We also compute the theoretical asymptotic variances of the proposed estimators to compare with the corresponding variances. The results are provided in Table \ref{table2}- Table \ref{table6} in Appendix~\ref{appendix:A1}. From these results, it can be seen that the average estimates are quite close to the true values and the biases are very small. As $n$ increases, the biases and the variances, decrease and as the error variance increases, they decrease. Also, the variances and the asymptotic variances match well, particularly when $n$ increases.\\ \par

In Appendix \ref{appendix:A2}, we present the simulation studies for the multiple component chirp-like model with $p = q = 2$. Following are the true parameter values used for data generation:
$$A_1^0 = 10, B_1^0 = 10, \alpha_1^0 = 1.5, C_1^0 = 10, D_1^0 = 10 \textmd{ and } \beta_1^0 = 0.1,$$
$$A_2^0 = 8, B_2^0 = 8, \alpha_2^0 = 2.5, C_2^0 = 8, D_2^0 = 8 \textmd{ and } \beta_2^0 = 0.2.$$
The error structure used for data generation are same as used for the one component model simulations, \eqref{eq:error_normal} and \eqref{eq:error_MA}. We compute the sequential LSEs and report their averages, biases, variances and asymptotic variances. Again, the different sample sizes and error variances considered, are same as that for the one component model. Tables \ref{table7}-\ref{table11}, provide the results obtained. It is observed that the estimates obtained have significantly small biases and are close to the true values. Moreover as $n$ increases the variances decrease, thus depicting the desired consistency of the estimators. Also the order of the variances is more or less same as that of the corresponding asymptotic variances. Thus, they are well matched. This validates the explicitly derived theoretical asymptotic variances of the proposed estimators. 
\section{Data Analysis}\label{data_analysis}
In this section, we analyse  four different speech signal data sets: "AAA", "AHH", "UUU" and "EEE" using chirp model as well as the proposed chirp-like model. These data sets have been obtained from a sound instrument at the Speech Signal Processing laboratory of the Indian Institute of Technology Kanpur. The data set "AAA", has 477 data points, the set "AHH" has 469 data points and the rest of them have 512 points each. \\ \par
\begin{comment}
 Following figures represents the observed "AAA" signal and "AHH" signal:
\begin{figure}[H]
\centering
\includegraphics[scale=0.4]{original_data_AAA_new.eps}
\caption{AAA Signal}
\label{fig:AAA_original_signal}
\end{figure}
\begin{figure}[H]
\centering
\includegraphics[scale=0.4]{original_data_AHH.eps}
\caption{AHH Signal}
\label{fig:AHH_original_signal}
\end{figure}
\justify
\end{comment}

We fit the chirp-like model to these data sets using the sequential LSEs following the algorithm  described in section \ref{multiple_component_seq_LSEs_theory}. As is evident from the description, we need to solve a 1D optimisation problem to find these estimators and since the problem is nonlinear, we need to employ some iterative method to do so. Here we use Brent's method to solve 1D optimisation problems, using an inbuilt function in R, known as optim. For this method to work, we require very good initial values in the sense that they need to be close to the true values. Now one of the well received methods for finding initial values   for the frequencies of the sinusoidal model is to maximize the periodogram function: 
$$I_1(\alpha) = \frac{1}{n} \bigg|\sum_{t=1}^{n} y(t)e^{-i\alpha t}\bigg|^2 $$
at the points: $\displaystyle{\frac{\pi j}{n}}$; $j = 1, \cdots, n-1,$ called the Fourier frequencies. The estimators obtained by this method, are called the Periodogram Estimators. After all the $p$ sinusoid components are fitted, we need to fit the $q$ chirp components. Again we need to solve 1D optimisation problem at each stage and for that we need good initial values. Analogous to the periodogram function $I_1(\alpha)$, we define a periodogram-type function as follows:
$$I_2(\beta) = \frac{1}{n} \bigg|\sum_{t=1}^{n} y(t)e^{-i\beta t^2}\bigg|^2.$$
To obtain the starting points for the frequency rate parameter $\beta$, we maximise this function at the points: $\displaystyle{\frac{\pi k}{n^2}}$; $j = 1, \cdots, n^2-1$, similar to the Fourier frequencies.\\ \par
Since in practice, the number of components of a model are unknown, we need to estimate them. We use the following Bayesian Information Criterion (BIC) criterion, as a tool to estimate $p$ and $q$:
$$\textmd{BIC}(p,q) = n \ ln(\textmd{SSE}(p,q)) + 2\ (3p + 3q)\ ln(n)$$
for the present analysis of the data sets.\\ \par
For comparison of the chirp-like model with the chirp model, we re-analyse these data sets by fitting chirp model to each of them (for methodology, see Lahiri et al. \cite{2015} and Grover et al. \cite{2018}). In the following table, we report the number of components required to fit the chirp model and the chirp-like model to each of the data sets and in the subsequent figures, we plot the original data along with the estimated signals obtained by fitting a chirp model and a chirp-like model to these data. In both scenarios, the model is fitted using the sequential LSEs.
\begin{table}[H]
\centering
\begin{tabularx}{0.75\textwidth}{|*{5}{C|}}
\hline
\multicolumn{5}{|>{\hsize=5\hsize}C|}{Number of components}                                                            \\ \hline
\multirow{2}{*}{Data Set}  & \multicolumn{2}{>{\hsize=2\hsize}C|}{Chirp Model} & \multicolumn{2}{>{\hsize=2\hsize}C|}{Chirp-like model} \\ \cline{2-5} 
                           & p & Number of parameters & (p, q) & Number of parameters               \\ \hline
AAA                        & 9 & 36  & (10, 1)& 33           \\ \hline
AHH                        & 8 & 32  & (7, 1) & 24           \\ \hline
%AWW                        & 5 & 20  & (9, 1) & 30           \\ \hline
UUU                        & 9 & 36  & (8, 1) & 27           \\ \hline
EEE                        &11 & 44  & (14, 1)& 45           \\ \hline
\end{tabularx}
\caption{Number of components used to fit chirp and chirp-like model to the speech data sets.}
\label{number_of_comp}
\end{table}
\begin{figure}[H]
\centering
\includegraphics[scale = 0.35]{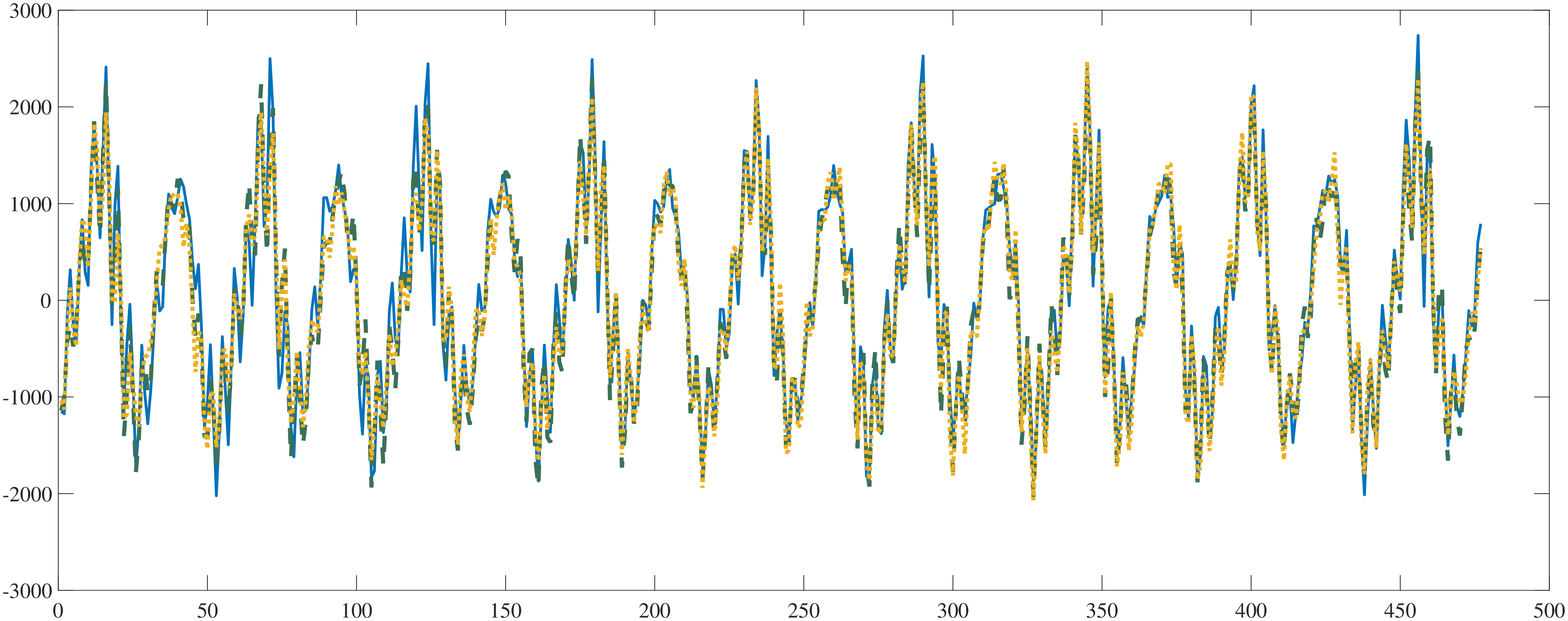}
\caption{Observed data "AAA" (blue solid line) and fitted chirp signal (green dashed line) and fitted chirp-like signal (yellow dotted line) to the data}
\label{fig_fittings_AAA}
\end{figure}
\begin{figure}[H]
\centering
\includegraphics[scale = 0.35]{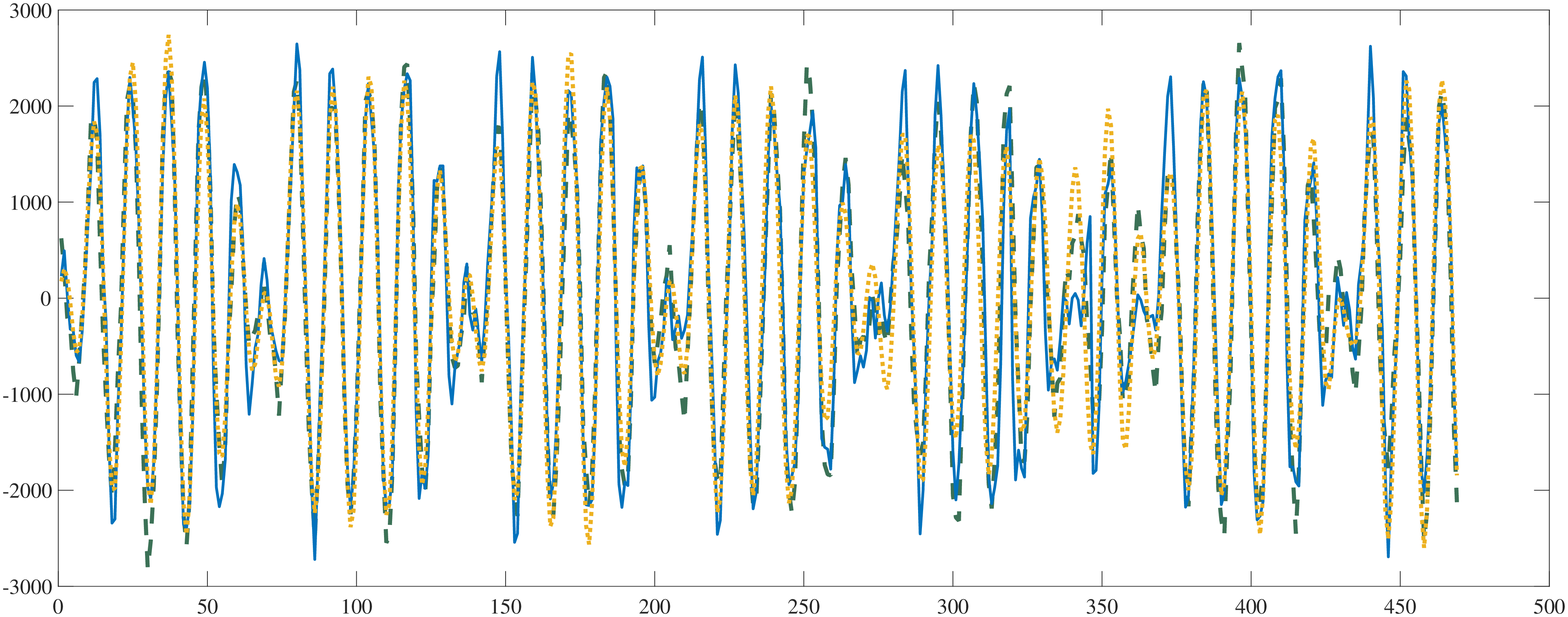}
\caption{Observed data "AHH" (blue solid line) and fitted chirp signal (green dashed line) and fitted chirp-like signal (yellow dotted line) to the data}
\label{fig_fittings_AHH}
\end{figure}  
\begin{figure}[H]
\centering
\includegraphics[scale = 0.35]{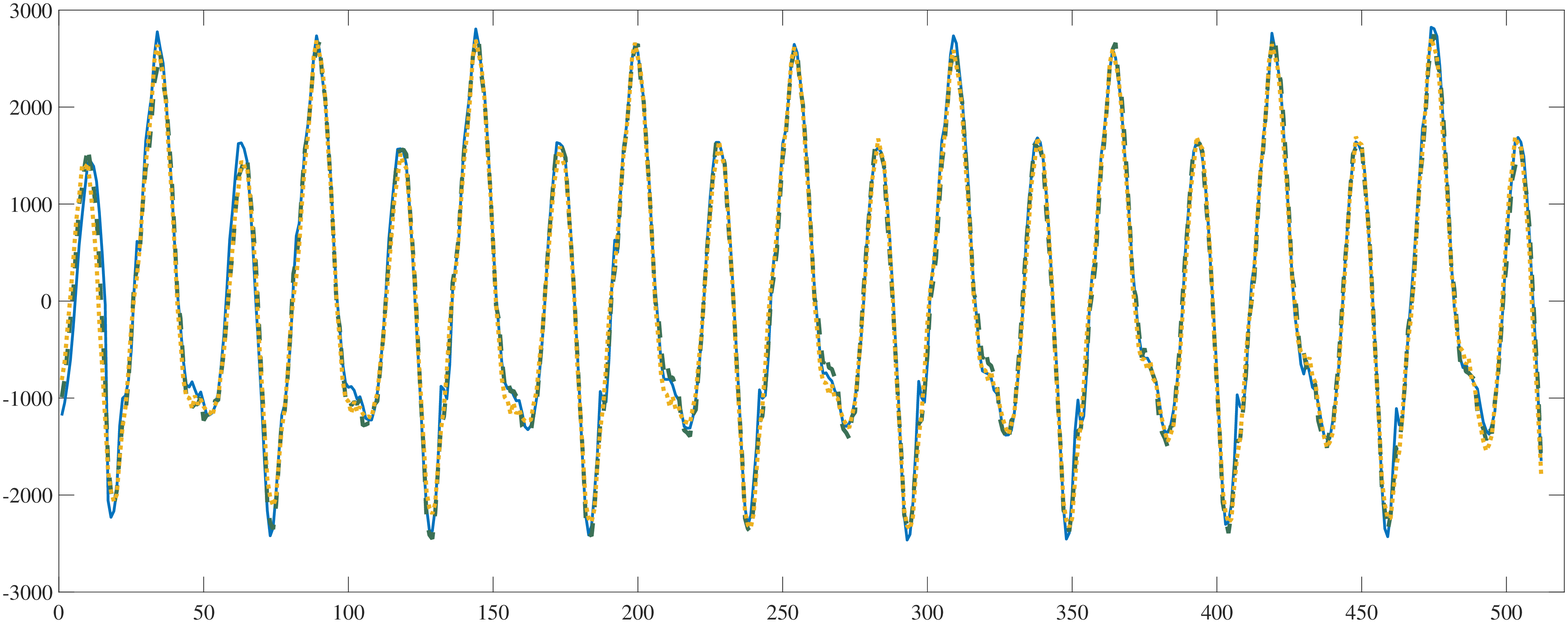}
\caption{Observed data "UUU" (blue solid line) and fitted chirp signal (green dashed line) and fitted chirp-like signal (yellow dotted line) to the data}
\label{fig_fittings_UUU}
\end{figure}
\begin{figure}[H]
\centering
\includegraphics[scale = 0.35]{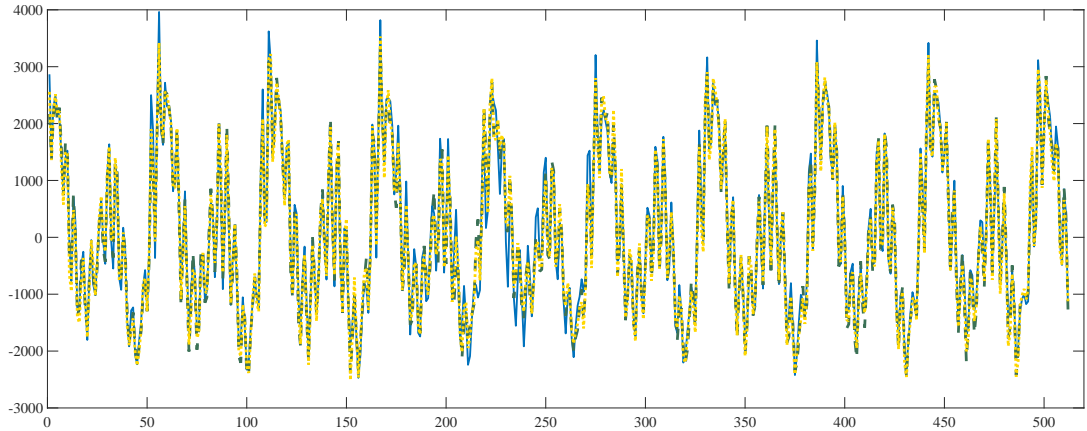}
\caption{Observed data "EEE" (blue solid line) and fitted chirp signal (green dashed line) and fitted chirp-like signal (yellow dotted line) to the data}
\label{fig_fittings_EEE}
\end{figure}  
\justify
To validate the error assumption of stationarity, we test the residuals, for all the cases, using augmented Dickey Fuller test. We use an inbuilt function adftest in MATLAB for this purpose. The test statistic values result in rejection of the null hypothesis of a presence of a unit root indicating that residuals, in all the cases, are stationary.\\ \par
\justify
It is evident from the figures above, that visually both the models provide a good fit for all the speech data sets. However, to fit a chirp-like model using the sequential LSEs, we solve a 1D optimisation problem at each step while for the fitting of a chirp model, at each step we need to deal with a 2D optimisation problem. Moreover, to find the initial values, in both cases, a grid search is performed and for the chirp-like model, this means evaluation of the periodogram functions $I_1(\alpha)$ and $I_2(\beta)$ at $n^2$ and $n$ grid points respectively as opposed to the $n^3$ grid points for the chirp model. Note that this is done at each step for the sequential estimators and hence becomes more complex as the number of components increase. Thus fitting a chirp-like model is numerically more efficient than fitting a chirp model.

\section{Conclusion}\label{conclusion}
Chirp signals are ubiquitous in many areas of science and engineering and hence their parameter estimation is of great significance in signal processing. But it has been observed that parameter estimation of this model, particularly using the method of least squares is computationally complex. In this paper, we put forward an alternate model, named \textit{chirp-like} model. We observe that the data that have been analysed using chirp models can also be analysed using the chirp-like model and estimating its parameters using sequential LSEs is simpler than that for the chirp model. We show that the LSEs and the sequential LSEs of the parameters of this model are strongly consistent and asymptotically normally distributed. The rates of convergence of the parameter estimates of this model are same as those for the chirp model. We analyse four speech data sets, and it is observed that the proposed model can be used quite effectively to analyse these data sets.

\begin{appendices}
\section{Numerical Results}\label{appendix:A}
\subsection{One component chirp-like model}\label{appendix:A1}
\begin{table}[H]
\begin{minipage}{.5\linewidth}
\caption{LSEs when sample size is 100}
\centering
\resizebox{0.95\textwidth}{!}{\begin{tabular}{|c|c|c|c|c|c|}
\hline
\multicolumn{2}{|c|}{Type of error}            & \multicolumn{2}{c|}{N(0,$\sigma^2$)}          & \multicolumn{2}{c|}{MA(1) with $\rho = 0.5$}  \\ \hline
\multicolumn{2}{|c|}{Parameters}               & $\alpha$              & $\beta$               & $\alpha$              & $\beta$               \\ \hline
\multicolumn{2}{|c|}{True values}              & 1.5                   & 0.1                   & 1.5                   & 0.1                   \\ \hline
$\sigma^2$             & \multicolumn{1}{l|}{} & \multicolumn{4}{c|}{Least Squares Estimates}                                                  \\ \hline
0.10         &  Average                        &  1.4960      &     0.1000     &      1.4960     &     0.1000   \\  \hline
             &  Bias                           &  -3.98e-03   &    -1.62e-06   &     -3.97e-03   &    -1.67e-06 \\  \hline 
             &  Variance                       &  1.02e-08    &     1.25e-12   &      1.54e-08   &     1.83e-12 \\  \hline
             &  Asym Var                       &  1.20e-08    &     1.12e-12   &      1.50e-08   &     1.41e-12 \\  \hline
\multicolumn{1}{|l|}{} & \multicolumn{1}{l|}{} & \multicolumn{1}{l|}{} & \multicolumn{1}{l|}{} & \multicolumn{1}{l|}{} & \multicolumn{1}{l|}{} \\ \hline
0.50         &  Average                        &  1.4960      &     0.1000     &      1.4960     &     0.1000   \\  \hline
             &  Bias                           &  -3.97e-03   &    -1.65e-06   &     -3.97e-03   &    -1.56e-06 \\  \hline
             &  Variance                       &  5.41e-08    &     5.94e-12   &      7.21e-08   &     9.45e-12 \\  \hline
             &  Asym Var                       &  6.00e-08    &     5.62e-12   &      7.50e-08   &     7.03e-12 \\  \hline
\multicolumn{1}{|l|}{} & \multicolumn{1}{l|}{} & \multicolumn{1}{l|}{} & \multicolumn{1}{l|}{} & \multicolumn{1}{l|}{} & \multicolumn{1}{l|}{} \\ \hline
1.00         &  Average                        &  1.4960      &     0.1000     &      1.4960     &     0.1000   \\  \hline
             &  Bias                           &  -3.99e-03   &    -1.83e-06   &     -3.97e-03   &    -1.61e-06 \\  \hline
             &  Variance                       &  1.02e-07    &     1.25e-11   &      1.36e-07   &     1.80e-11 \\  \hline
             &  Asym Var                       &  1.20e-07    &     1.12e-11   &      1.50e-07   &     1.41e-11 \\  \hline
\end{tabular}}
\label{table2}
\end{minipage}%
\begin{minipage}{.5\linewidth}
\caption{LSEs when sample size is 200}
\centering
\resizebox{0.95\textwidth}{!}{\begin{tabular}{|c|c|c|c|c|c|}
\hline
\multicolumn{2}{|c|}{Type of error}            & \multicolumn{2}{c|}{N(0,$\sigma^2$)}          & \multicolumn{2}{c|}{MA(1) with $\rho = 0.5$}  \\ \hline
\multicolumn{2}{|c|}{Parameters}               & $\alpha$              & $\beta$               & $\alpha$              & $\beta$               \\ \hline
\multicolumn{2}{|c|}{True values}              & 1.5                   & 0.1                   & 1.5                   & 0.1                   \\ \hline
$\sigma^2$             & \multicolumn{1}{l|}{} & \multicolumn{4}{c|}{Least Squares Estimates}                                                  \\ \hline
0.10         &  Average                        &  1.4984      &    0.1000     &    1.4984      &    0.1000    \\  \hline
             &  Bias                           &  -1.56e-03   &   -2.30e-07   &   -1.56e-03    &   -2.22e-07  \\  \hline 
             &  Variance                       &  1.34e-09    &    3.74e-14   &    1.85e-09    &    4.48e-14  \\  \hline
             &  Asym Var                       &  1.50e-09    &    3.52e-14   &    1.87e-09    &    4.39e-14  \\  \hline
\multicolumn{1}{|l|}{} & \multicolumn{1}{l|}{} & \multicolumn{1}{l|}{} & \multicolumn{1}{l|}{} & \multicolumn{1}{l|}{} & \multicolumn{1}{l|}{} \\ \hline
0.50         &  Average                        &   1.4984    &    0.1000     &    1.4984       &     0.1000   \\  \hline
             &  Bias                           &   -1.56e-03 &   -2.04e-07   &   -1.56e-03     &    -2.31e-07 \\  \hline
             &  Variance                       &   6.69e-09  &    1.83e-13   &    8.91e-09     &     2.38e-13 \\  \hline
             &  Asym Var                       &   7.50e-09  &    1.76e-13   &    9.37e-09     &     2.20e-13 \\  \hline
\multicolumn{1}{|l|}{} & \multicolumn{1}{l|}{} & \multicolumn{1}{l|}{} & \multicolumn{1}{l|}{} & \multicolumn{1}{l|}{} & \multicolumn{1}{l|}{} \\ \hline
1.00         &  Average                        &   1.4984     &     0.1000     &     1.4984     &   0.1000   \\  \hline
             &  Bias                           &   -1.56e-03  &    -2.28e-07   &    -1.56e-03   &  -2.54e-07 \\  \hline
             &  Variance                       &   1.27e-08   &     3.56e-13   &     1.72e-08   &   4.73e-13 \\  \hline
             &  Asym Var                       &   1.50e-08   &     3.52e-13   &     1.87e-08   &   4.39e-13 \\  \hline
\end{tabular}}
\label{table3}
\end{minipage}
\end{table}

\begin{table}[H]
\begin{minipage}{.5\linewidth}
\caption{LSEs when sample size is 300}
\centering
\resizebox{0.95\textwidth}{!}{\begin{tabular}{|c|c|c|c|c|c|}
\hline
\multicolumn{2}{|c|}{Type of error}            & \multicolumn{2}{c|}{N(0,$\sigma^2$)}          & \multicolumn{2}{c|}{MA(1) with $\rho = 0.5$}  \\ \hline
\multicolumn{2}{|c|}{Parameters}               & $\alpha$              & $\beta$               & $\alpha$              & $\beta$               \\ \hline
\multicolumn{2}{|c|}{True values}              & 1.5                   & 0.1                   & 1.5                   & 0.1                   \\ \hline
$\sigma^2$             & \multicolumn{1}{l|}{} & \multicolumn{4}{c|}{Least Squares Estimates}                                                  \\ \hline
0.10         &  Average                        &  1.4996      &    0.1000    &    1.4996      &      0.1000    \\  \hline
             &  Bias                           &  -4.10e-04   &   -1.48e-07  &   -4.12e-04    &     -1.50e-07  \\  \hline 
             &  Variance                       &  3.70e-10    &    4.60e-15  &    5.02e-10    &      5.26e-15  \\  \hline
             &  Asym Var                       &  4.44e-10    &    4.63e-15  &    5.56e-10    &      5.79e-15  \\  \hline
\multicolumn{1}{|l|}{} & \multicolumn{1}{l|}{} & \multicolumn{1}{l|}{} & \multicolumn{1}{l|}{} & \multicolumn{1}{l|}{} & \multicolumn{1}{l|}{} \\ \hline
0.50         &  Average                        &   1.4996     &   0.1000     &   1.4996      &      0.1000      \\  \hline
             &  Bias                           &   -4.08e-04  &  -1.47e-07   &  -4.11e-04    &     -1.46e-07    \\  \hline
             &  Variance                       &   1.79e-09   &   2.25e-14   &   2.64e-09    &      2.81e-14    \\  \hline
             &  Asym Var                       &   2.22e-09   &   2.31e-14   &   2.78e-09    &      2.89e-14    \\  \hline
\multicolumn{1}{|l|}{} & \multicolumn{1}{l|}{} & \multicolumn{1}{l|}{} & \multicolumn{1}{l|}{} & \multicolumn{1}{l|}{} & \multicolumn{1}{l|}{} \\ \hline
1.00         &  Average                        &  1.4996    &     0.1000     &   1.4996     &  0.1000        \\  \hline
             &  Bias                           &  -4.13e-04 &    -1.43e-07   &  -4.14e-04   & -1.61e-07      \\  \hline
             &  Variance                       &  4.04e-09  &     4.65e-14   &   5.34e-09   &  5.32e-14      \\  \hline
             &  Asym Var                       &  4.44e-09  &     4.63e-14   &   5.56e-09  &  5.79e-14       \\  \hline
\end{tabular}}
\label{table4}
\end{minipage}%
\begin{minipage}{.5\linewidth}
\caption{LSEs when sample size is 400}
\centering
\resizebox{0.95\textwidth}{!}{\begin{tabular}{|c|c|c|c|c|c|}
\hline
\multicolumn{2}{|c|}{Type of error}            & \multicolumn{2}{c|}{N(0,$\sigma^2$)}          & \multicolumn{2}{c|}{MA(1) with $\rho = 0.5$}  \\ \hline
\multicolumn{2}{|c|}{Parameters}               & $\alpha$              & $\beta$               & $\alpha$              & $\beta$               \\ \hline
\multicolumn{2}{|c|}{True values}              & 1.5                   & 0.1                   & 1.5                   & 0.1                   \\ \hline
$\sigma^2$             & \multicolumn{1}{l|}{} & \multicolumn{4}{c|}{Least Squares Estimates}                                                  \\ \hline
0.10         &  Average                        & 1.5000     &   0.1000     &     1.5000    &   0.1000         \\  \hline
             &  Bias                           & -2.39e-05  &   1.18e-07   &    -2.40e-05  &   1.18e-07       \\  \hline 
             &  Variance                       & 1.59e-10   &   1.17e-15   &     2.11e-10  &   1.37e-15       \\  \hline
             &  Asym Var                       & 1.88e-10   &   1.10e-15   &     2.34e-10  &   1.37e-15       \\  \hline
\multicolumn{1}{|l|}{} & \multicolumn{1}{l|}{} & \multicolumn{1}{l|}{} & \multicolumn{1}{l|}{} & \multicolumn{1}{l|}{} & \multicolumn{1}{l|}{} \\ \hline
0.50         &  Average                        &  1.5000     &   0.1000     &     1.5000    &   0.1000       \\  \hline
             &  Bias                           &  -2.26e-05  &   1.18e-07   &    -2.38e-05  &   1.20e-07     \\  \hline
             &  Variance                       &  7.91e-10   &   5.45e-15   &     1.01e-09  &   7.32e-15     \\  \hline
             &  Asym Var                       &  9.37e-10   &   5.49e-15   &     1.17e-09  &   6.87e-15     \\  \hline
\multicolumn{1}{|l|}{} & \multicolumn{1}{l|}{} & \multicolumn{1}{l|}{} & \multicolumn{1}{l|}{} & \multicolumn{1}{l|}{} & \multicolumn{1}{l|}{} \\ \hline
1.00         &  Average                        & 1.5000     &   0.1000      &    1.5000    &   0.1000    \\  \hline
             &  Bias                           & -2.60e-05  &   1.16e-07    &   -2.27e-05  &   1.17e-07  \\  \hline
             &  Variance                       & 1.55e-09   &   1.28e-14    &    2.14e-09  &   1.43e-14  \\  \hline
             &  Asym Var                       & 1.87e-09   &   1.10e-14    &    2.34e-09  &   1.37e-14  \\  \hline
\end{tabular}}
\label{table5}
\end{minipage}
\end{table}

\begin{table}[H]
\centering
\resizebox{0.475\textwidth}{!}{\begin{tabular}{|c|c|c|c|c|c|}
\hline
\multicolumn{2}{|c|}{Type of error}            & \multicolumn{2}{c|}{N(0,$\sigma^2$)}          & \multicolumn{2}{c|}{MA(1) with $\rho = 0.5$}  \\ \hline
\multicolumn{2}{|c|}{Parameters}               & $\alpha$              & $\beta$               & $\alpha$              & $\beta$               \\ \hline
\multicolumn{2}{|c|}{True values}              & 1.5                   & 0.1                   & 1.5                   & 0.1                   \\ \hline
$\sigma^2$             & \multicolumn{1}{l|}{} & \multicolumn{4}{c|}{Least Squares Estimates}                                                  \\ \hline
0.10         &  Average                        &   1.4999     &   0.1000    &    1.4999    &   0.1000         \\  \hline
             &  Bias                           &   -5.48e-05  &   7.01e-08  &   -5.43e-05  &   7.00e-08       \\  \hline 
             &  Variance                       &   9.18e-11   &   3.46e-16  &    1.08e-10  &   4.44e-16       \\  \hline
             &  Asym Var                       &   9.60e-11   &   3.60e-16  &    1.20e-10  &   4.50e-16       \\  \hline
\multicolumn{1}{|l|}{} & \multicolumn{1}{l|}{} & \multicolumn{1}{l|}{} & \multicolumn{1}{l|}{} & \multicolumn{1}{l|}{} & \multicolumn{1}{l|}{} \\ \hline
0.50         &  Average                        &   1.4999     &   0.1000    &    1.4999    &   0.1000       \\  \hline
             &  Bias                           &   -5.42e-05  &   7.04e-08  &   -5.33e-05  &   6.98e-08     \\  \hline
             &  Variance                       &   4.73e-10   &   1.64e-15  &    5.36e-10  &   2.28e-15     \\  \hline
             &  Asym Var                       &   4.80e-10   &   1.80e-15  &    6.00e-10  &   2.25e-15     \\  \hline
\multicolumn{1}{|l|}{} & \multicolumn{1}{l|}{} & \multicolumn{1}{l|}{} & \multicolumn{1}{l|}{} & \multicolumn{1}{l|}{} & \multicolumn{1}{l|}{} \\ \hline
1.00         &  Average                        &  1.4999     &   0.1000    &    1.4999    &   0.1000        \\  \hline
             &  Bias                           &  -5.65e-05  &   6.81e-08  &   -5.38e-05  &   7.33e-08      \\  \hline
             &  Variance                       &  9.17e-10   &   3.19e-15  &    1.14e-09  &   4.60e-15      \\  \hline
             &  Asym Var                       &  9.60e-10   &   3.60e-15  &    1.20e-09  &   4.50e-15      \\  \hline
\end{tabular}}
\caption{LSEs when sample size is 500}
\label{table6}
\end{table}

\subsection{Multiple component chirp-like model}\label{appendix:A2}

\begin{table}[H]
\begin{minipage}{.5\linewidth}
\caption{Estimates when n = 100}
\centering
\resizebox{0.95\textwidth}{!}{\begin{tabular}{|c|c|c|c|c|c|}
\hline
\multicolumn{2}{|c|}{Type of error} & \multicolumn{4}{c|}{N(0, $\sigma^2$)}           \\ \hline
\multicolumn{2}{|c|}{Parameters}    & $\alpha_1$ & $\beta_1$ & $\alpha_2$ & $\beta_2$ \\ \hline
\multicolumn{2}{|c|}{True values}   & 1.5        & 0.1       & 2.5        & 0.2       \\ \hline
$\sigma^2$       &                  &            &           &            &           \\ \hline
0.10             & Average          &  1.4986    &   0.1000   &  2.5068   &   0.2000     \\ \hline
                 & Bias             &  -1.42e-03 &  -3.65e-05 &  6.79e-03 &  -1.27e-05   \\ \hline
                 & MSE              &  1.33e-08  &   1.10e-12 &  2.80e-08 &   1.41e-12   \\ \hline
                 & Avar             &  1.20e-08  &   1.12e-12 &  1.88e-08 &   1.76e-12   \\ \hline
                 &                  &            &            &           &              \\ \hline
0.50             & Average          &  1.4986    &   0.1000   &  2.5068   &   0.2000     \\ \hline
                 & Bias             &  -1.42e-03 &  -3.65e-05 &  6.82e-03 &  -1.27e-05   \\ \hline
                 & MSE              &  6.92e-08  &   5.19e-12 &  1.36e-07 &   7.53e-12   \\ \hline
                 & Avar             &  6.00e-08  &   5.62e-12 &  9.38e-08 &   8.79e-12   \\ \hline
                 &                  &            &            &           &              \\ \hline
1.00             & Average          &  1.4986    &   0.1000   &  2.5068   &   0.2000     \\ \hline
                 & Bias             &  -1.39e-03 &  -3.64e-05 &  6.80e-03 &  -1.27e-05   \\ \hline
                 & MSE              &  1.57e-07  &   1.15e-11 &  2.74e-07 &   1.48e-11   \\ \hline
                 & Avar             &  1.20e-07  &   1.12e-11 &  1.88e-07 &   1.76e-11   \\ \hline
\multicolumn{6}{|l|}{}                                                                \\ \hline
\multicolumn{2}{|c|}{Type of error} & \multicolumn{4}{c|}{MA(1) with $\rho = 0.5$}    \\ \hline
Parameters       & Parameters       & $\alpha_1$ & $\beta_1$ & $\alpha_2$ & $\beta_2$ \\ \hline
True values      & True values      & 1.5        & 0.1       & 2.5        & 0.2       \\ \hline
$\sigma^2$       &                  &            &           &            &           \\ \hline
0.10             & Average          & 1.4986    &   0.1000   &  2.5068   &   0.2000     \\ \hline
                 & Bias             & -1.42e-03 &  -3.65e-05 &  6.80e-03 &  -1.27e-05   \\ \hline
                 & MSE              & 1.98e-08  &   1.61e-12 &  1.33e-08 &   1.85e-12   \\ \hline
                 & Avar             & 1.50e-08  &   1.41e-12 &  2.34e-08 &   2.20e-12   \\ \hline
                 &                  &           &            &           &              \\ \hline
0.50             & Average          & 1.4986    &   0.1000   &  2.5068   &   0.2000     \\ \hline
                 & Bias             & -1.44e-03 &  -3.65e-05 &  6.81e-03 &  -1.27e-05   \\ \hline
                 & MSE              & 9.86e-08  &   8.22e-12 &  7.24e-08 &   1.03e-11   \\ \hline
                 & Avar             & 7.50e-08  &   7.03e-12 &  1.17e-07 &   1.10e-11   \\ \hline
                 &                  &           &            &           &              \\ \hline
1.00             & Average          & 1.4986    &   0.1000   &  2.5068   &   0.2000     \\ \hline
                 & Bias             & -1.43e-03 &  -3.64e-05 &  6.81e-03 &  -1.26e-05   \\ \hline
                 & MSE              & 1.99e-07  &   1.52e-11 &  1.33e-07 &   1.98e-11   \\ \hline
                 & Avar             & 1.50e-07  &   1.41e-11 &  2.34e-07 &   2.20e-11   \\ \hline
\end{tabular}}
\label{table7}
\end{minipage}%
\begin{minipage}{.5\linewidth}
\caption{Estimates when n = 200}
\centering
\resizebox{0.95\textwidth}{!}{\begin{tabular}{|c|c|c|c|c|c|}
\hline
\multicolumn{2}{|c|}{Type of error} & \multicolumn{4}{c|}{N(0, $\sigma^2$)}           \\ \hline
\multicolumn{2}{|c|}{Parameters}    & $\alpha_1$ & $\beta_1$ & $\alpha_2$ & $\beta_2$ \\ \hline
\multicolumn{2}{|c|}{True values}   & 1.5        & 0.1       & 2.5        & 0.2       \\ \hline
$\sigma^2$       &                  &            &           &            &           \\ \hline
0.10             & Average          & 1.4984     &    0.1000    &   2.5025    &    0.2000           \\ \hline
                 & Bias             & -1.65e-03  &   -1.31e-06  &   2.46e-03  &   -3.87e-08         \\ \hline
                 & MSE              & 1.51e-09   &    6.16e-14  &   2.00e-09  &    4.96e-14         \\ \hline
                 & Avar             & 1.50e-09   &    3.52e-14  &   2.34e-09  &    5.49e-14         \\ \hline
                 &                  &            &              &             &                     \\ \hline
0.50             & Average          & 1.4984     &    0.1000    &   2.5025    &    0.2000           \\ \hline
                 & Bias             & -1.65e-03  &   -1.30e-06  &   2.46e-03  &   -4.71e-08         \\ \hline
                 & MSE              & 7.24e-09   &    3.07e-13  &   9.90e-09  &    2.60e-13         \\ \hline
                 & Avar             & 7.50e-09   &    1.76e-13  &   1.17e-08  &    2.75e-13         \\ \hline
                 &                  &            &              &             &                     \\ \hline
1.00             & Average          & 1.4984     &    0.1000    &   2.5025    &    0.2000           \\ \hline
                 & Bias             & -1.64e-03  &   -1.33e-06  &   2.45e-03  &   -5.00e-08         \\ \hline
                 & MSE              & 1.49e-08   &    5.84e-13  &   2.04e-08  &    4.94e-13         \\ \hline
                 & Avar             & 1.50e-08   &    3.52e-13  &   2.34e-08  &    5.49e-13         \\ \hline
\multicolumn{6}{|l|}{}                                                                \\ \hline
\multicolumn{2}{|c|}{Type of error} & \multicolumn{4}{c|}{MA(1) with $\rho = 0.5$}    \\ \hline
Parameters       & Parameters       & $\alpha_1$ & $\beta_1$ & $\alpha_2$ & $\beta_2$ \\ \hline
True values      & True values      & 1.5        & 0.1       & 2.5        & 0.2       \\ \hline
$\sigma^2$       &                  &            &           &            &           \\ \hline
0.10             & Average          & 1.4984     &    0.1000    &   2.5025    &    0.2000          \\ \hline
                 & Bias             & -1.65e-03  &   -1.31e-06  &   2.45e-03  &   -3.64e-08        \\ \hline
                 & MSE              & 2.09e-09   &    7.45e-14  &   9.07e-10  &    6.29e-14        \\ \hline
                 & Avar             & 1.87e-09   &    4.39e-14  &   2.93e-09  &    6.87e-14        \\ \hline
                 &                  &            &              &             &                    \\ \hline
0.50             & Average          & 1.4984     &    0.1000    &   2.5025    &    0.2000          \\ \hline
                 & Bias             & -1.64e-03  &   -1.32e-06  &   2.45e-03  &   -2.82e-08        \\ \hline
                 & MSE              & 1.03e-08   &    3.95e-13  &   4.83e-09  &    2.70e-13        \\ \hline
                 & Avar             & 9.37e-09   &    2.20e-13  &   1.46e-08  &    3.43e-13        \\ \hline
                 &                  &            &              &             &                    \\ \hline
1.00             & Average          & 1.4983     &    0.1000    &   2.5024    &    0.2000          \\ \hline
                 & Bias             & -1.65e-03  &   -1.31e-06  &   2.45e-03  &   -5.50e-08        \\ \hline
                 & MSE              & 2.00e-08   &    7.59e-13  &   1.03e-08  &    5.65e-13        \\ \hline
                 & Avar             & 1.87e-08   &    4.39e-13  &   2.93e-08  &    6.87e-13        \\ \hline
\end{tabular}}
\label{table8}
\end{minipage}
\end{table}

\begin{table}[H]
\begin{minipage}{.5\linewidth}
\caption{Estimates when n = 300}
\centering
\resizebox{0.95\textwidth}{!}{\begin{tabular}{|c|c|c|c|c|c|}
\hline
\multicolumn{2}{|c|}{Type of error} & \multicolumn{4}{c|}{N(0, $\sigma^2$)}           \\ \hline
\multicolumn{2}{|c|}{Parameters}    & $\alpha_1$ & $\beta_1$ & $\alpha_2$ & $\beta_2$ \\ \hline
\multicolumn{2}{|c|}{True values}   & 1.5        & 0.1       & 2.5        & 0.2       \\ \hline
$\sigma^2$       &                  &            &           &            &           \\ \hline
0.10             & Average          &  1.4999      &     0.1000     &    2.5001     &    0.2000              \\ \hline
                 & Bias             &  -1.10e-04   &    -1.61e-06   &    1.34e-04   &    8.47e-08            \\ \hline
                 & MSE              &  4.34e-10    &     4.10e-15   &    6.91e-10   &    6.32e-15            \\ \hline
                 & Avar             &  4.44e-10    &     4.63e-15   &    6.94e-10   &    7.23e-15            \\ \hline
                 &                  &              &                &               &                        \\ \hline
0.50             & Average          &  1.4999      &     0.1000     &    2.5001     &    0.2000              \\ \hline
                 & Bias             &  -1.11e-04   &    -1.60e-06   &    1.35e-04   &    8.55e-08            \\ \hline
                 & MSE              &  2.05e-09    &     2.18e-14   &    3.74e-09   &    3.21e-14            \\ \hline
                 & Avar             &  2.22e-09    &     2.31e-14   &    3.47e-09   &    3.62e-14            \\ \hline
                 &                  &              &                &               &                        \\ \hline
1.00             & Average          &  1.4999      &     0.1000     &    2.5001     &    0.2000              \\ \hline
                 & Bias             &  -1.07e-04   &    -1.60e-06   &    1.40e-04   &    8.60e-08            \\ \hline
                 & MSE              &  4.30e-09    &     4.55e-14   &    7.15e-09   &    6.73e-14            \\ \hline
                 & Avar             &  4.44e-09    &     4.63e-14   &    6.94e-09   &    7.23e-14            \\ \hline
\multicolumn{6}{|l|}{}                                                                \\ \hline
\multicolumn{2}{|c|}{Type of error} & \multicolumn{4}{c|}{MA(1) with $\rho = 0.5$}    \\ \hline
Parameters       & Parameters       & $\alpha_1$ & $\beta_1$ & $\alpha_2$ & $\beta_2$ \\ \hline
True values      & True values      & 1.5        & 0.1       & 2.5        & 0.2       \\ \hline
$\sigma^2$       &                  &            &           &            &           \\ \hline
0.10             & Average          &   1.4999      &     0.1000     &    2.5001     &    0.2000             \\ \hline
                 & Bias             &   -1.10e-04   &    -1.61e-06   &    1.35e-04   &    7.49e-08           \\ \hline
                 & MSE              &   5.67e-10    &     5.19e-15   &    3.32e-10   &    8.77e-15           \\ \hline
                 & Avar             &   5.56e-10    &     5.79e-15   &    8.68e-10   &    9.04e-15           \\ \hline
                 &                  &               &                &               &                       \\ \hline
0.50             & Average          &   1.4999      &     0.1000     &    2.5001     &    0.2000             \\ \hline
                 & Bias             &   -1.09e-04   &    -1.62e-06   &    1.34e-04   &    9.03e-08           \\ \hline
                 & MSE              &   2.86e-09    &     2.54e-14   &    1.71e-09   &    4.49e-14           \\ \hline
                 & Avar             &   2.78e-09    &     2.89e-14   &    4.34e-09   &    4.52e-14           \\ \hline
                 &                  &               &                &               &                       \\ \hline
1.00             & Average          &   1.4999      &     0.1000     &    2.5001     &    0.2000             \\ \hline
                 & Bias             &   -1.09e-04   &    -1.60e-06   &    1.35e-04   &    9.46e-08           \\ \hline
                 & MSE              &   5.57e-09    &     4.98e-14   &    3.46e-09   &    8.31e-14           \\ \hline
                 & Avar             &   5.56e-09    &     5.79e-14   &    8.68e-09   &    9.04e-14           \\ \hline
\end{tabular}}
\label{table9}
\end{minipage}%
\begin{minipage}{.5\linewidth}
\caption{Estimates when n = 400}
\centering
\resizebox{0.95\textwidth}{!}{\begin{tabular}{|c|c|c|c|c|c|}
\hline
\multicolumn{2}{|c|}{Type of error} & \multicolumn{4}{c|}{N(0, $\sigma^2$)}           \\ \hline
\multicolumn{2}{|c|}{Parameters}    & $\alpha_1$ & $\beta_1$ & $\alpha_2$ & $\beta_2$ \\ \hline
\multicolumn{2}{|c|}{True values}   & 1.5        & 0.1       & 2.5        & 0.2       \\ \hline
$\sigma^2$       &                  &            &           &            &           \\ \hline
0.10             & Average          &   1.5004     &   0.1000     &    2.4994     &   0.2000               \\ \hline
                 & Bias             &   3.54e-04   &   1.56e-07   &    -5.93e-04  &  -3.92e-08             \\ \hline
                 & MSE              &   1.66e-10   &   1.01e-15   &    2.87e-10   &   1.67e-15             \\ \hline
                 & Avar             &   1.88e-10   &   1.10e-15   &    2.93e-10   &   1.72e-15             \\ \hline
                 &                  &              &              &               &                        \\ \hline
0.50             & Average          &   1.5004     &   0.1000     &    2.4994     &   0.2000               \\ \hline
                 & Bias             &   3.53e-04   &   1.59e-07   &    -5.93e-04  &  -3.34e-08             \\ \hline
                 & MSE              &   8.61e-10   &   4.81e-15   &    1.52e-09   &   8.18e-15             \\ \hline
                 & Avar             &   9.37e-10   &   5.49e-15   &    1.46e-09   &   8.58e-15             \\ \hline
                 &                  &              &              &               &                        \\ \hline
1.00             & Average          &   1.5004     &   0.1000     &    2.4994     &   0.2000               \\ \hline
                 & Bias             &   3.53e-04   &   1.59e-07   &    -5.90e-04  &  -4.07e-08             \\ \hline
                 & MSE              &   1.69e-09   &   1.06e-14   &    3.06e-09   &   1.60e-14             \\ \hline
                 & Avar             &   1.87e-09   &   1.10e-14   &    2.93e-09   &   1.72e-14             \\ \hline
\multicolumn{6}{|l|}{}                                                                \\ \hline
\multicolumn{2}{|c|}{Type of error} & \multicolumn{4}{c|}{MA(1) with $\rho = 0.5$}    \\ \hline
Parameters       & Parameters       & $\alpha_1$ & $\beta_1$ & $\alpha_2$ & $\beta_2$ \\ \hline
True values      & True values      & 1.5        & 0.1       & 2.5        & 0.2       \\ \hline
$\sigma^2$       &                  &            &           &            &           \\ \hline
0.10             & Average          &   1.5004     &   0.1000     &    2.4994     &   0.2000              \\ \hline
                 & Bias             &   3.55e-04   &   1.55e-07   &    -5.93e-04  &  -3.86e-08            \\ \hline
                 & MSE              &   2.21e-10   &   1.30e-15   &    1.38e-10   &   2.13e-15            \\ \hline
                 & Avar             &   2.34e-10   &   1.37e-15   &    3.66e-10   &   2.15e-15            \\ \hline
                 &                  &              &              &               &                       \\ \hline
0.50             & Average          &   1.5004     &   0.1000     &    2.4994     &   0.2000              \\ \hline
                 & Bias             &   3.53e-04   &   1.57e-07   &    -5.93e-04  &  -3.55e-08            \\ \hline
                 & MSE              &   1.09e-09   &   5.73e-15   &    7.29e-10   &   1.05e-14            \\ \hline
                 & Avar             &   1.17e-09   &   6.87e-15   &    1.83e-09   &   1.07e-14            \\ \hline
                 &                  &              &              &               &                       \\ \hline
1.00             & Average          &   1.5003     &   0.1000     &    2.4994     &   0.2000              \\ \hline
                 & Bias             &   3.49e-04   &   1.59e-07   &    -5.95e-04  &  -4.41e-08            \\ \hline
                 & MSE              &   2.09e-09   &   1.25e-14   &    1.43e-09   &   2.12e-14            \\ \hline
                 & Avar             &   2.34e-09   &   1.37e-14   &    3.66e-09   &   2.15e-14            \\ \hline
\end{tabular}}
\label{table10}
\end{minipage}
\end{table}

\begin{table}[H]
\centering
\resizebox{0.475\textwidth}{!}{\begin{tabular}{|c|c|c|c|c|c|}
\hline
\multicolumn{2}{|c|}{Type of error} & \multicolumn{4}{c|}{N(0, $\sigma^2$)}           \\ \hline
\multicolumn{2}{|c|}{Parameters}    & $\alpha_1$ & $\beta_1$ & $\alpha_2$ & $\beta_2$ \\ \hline
\multicolumn{2}{|c|}{True values}   & 1.5        & 0.1       & 2.5        & 0.2       \\ \hline
$\sigma^2$       &                  &            &           &            &           \\ \hline
0.10             & Average          &   1.5000     &    0.1000    &   2.4998     &    0.2000             \\ \hline
                 & Bias             &   -1.86e-05  &   -4.15e-08  &   -2.27e-04  &   -3.72e-08           \\ \hline
                 & MSE              &   8.23e-11   &    3.95e-16  &   1.70e-10   &    5.36e-16           \\ \hline
                 & Avar             &   9.60e-11   &    3.60e-16  &   1.50e-10   &    5.62e-16           \\ \hline
                 &                  &              &              &              &                       \\ \hline
0.50             & Average          &   1.5000     &    0.1000    &   2.4998     &    0.2000             \\ \hline
                 & Bias             &   -1.96e-05  &   -4.08e-08  &   -2.26e-04  &   -3.37e-08           \\ \hline
                 & MSE              &   3.96e-10   &    1.79e-15  &   9.26e-10   &    2.69e-15           \\ \hline
                 & Avar             &   4.80e-10   &    1.80e-15  &   7.50e-10   &    2.81e-15           \\ \hline
                 &                  &              &              &              &                       \\ \hline
1.00             & Average          &   1.5000     &    0.1000    &   2.4998     &    0.2000             \\ \hline
                 & Bias             &   -1.70e-05  &   -3.90e-08  &   -2.30e-04  &   -3.47e-08           \\ \hline
                 & MSE              &   8.00e-10   &    3.69e-15  &   1.80e-09   &    6.13e-15           \\ \hline
                 & Avar             &   9.60e-10   &    3.60e-15  &   1.50e-09   &    5.62e-15           \\ \hline
\multicolumn{6}{|l|}{}                                                                \\ \hline
\multicolumn{2}{|c|}{Type of error} & \multicolumn{4}{c|}{MA(1) with $\rho = 0.5$}    \\ \hline
Parameters       & Parameters       & $\alpha_1$ & $\beta_1$ & $\alpha_2$ & $\beta_2$ \\ \hline
True values      & True values      & 1.5        & 0.1       & 2.5        & 0.2       \\ \hline
$\sigma^2$       &                  &            &           &            &           \\ \hline
0.10             & Average          &   1.5000     &    0.1000    &   2.4998     &    0.2000            \\ \hline
                 & Bias             &   -1.87e-05  &   -4.12e-08  &   -2.28e-04  &   -3.58e-08          \\ \hline
                 & MSE              &   1.14e-10   &    4.51e-16  &   7.81e-11   &    7.28e-16          \\ \hline
                 & Avar             &   1.20e-10   &    4.50e-16  &   1.87e-10   &    7.03e-16          \\ \hline
                 &                  &              &              &              &                      \\ \hline
0.50             & Average          &   1.5000     &    0.1000    &   2.4998     &    0.2000            \\ \hline
                 & Bias             &   -2.01e-05  &   -4.19e-08  &   -2.27e-04  &   -3.40e-08          \\ \hline
                 & MSE              &   5.42e-10   &    2.16e-15  &   3.99e-10   &    3.48e-15          \\ \hline
                 & Avar             &   6.00e-10   &    2.25e-15  &   9.37e-10   &    3.52e-15          \\ \hline
                 &                  &              &              &              &                      \\ \hline
1.00             & Average          &   1.5000     &    0.1000    &   2.4998     &    0.2000            \\ \hline
                 & Bias             &   -1.99e-05  &   -4.01e-08  &   -2.29e-04  &   -2.95e-08          \\ \hline
                 & MSE              &   1.05e-09   &    4.63e-15  &   7.70e-10   &    7.01e-15          \\ \hline
                 & Avar             &   1.20e-09   &    4.50e-15  &   1.87e-09   &    7.03e-15          \\ \hline
\end{tabular}}
\caption{Estimates when n = 500}
\label{table11}
\end{table}

\section{Some Preliminary Results}\label{appendix:B}
To provide the proofs of the asymptotic properties established in this manuscript, we will require following results: 

\begin{lemma}\label{preliminary_result_1}
If $\phi \in (0, \pi)$, then except for a countable number of points, the following hold true:
\begin{enumerate}[label=(\alph*)]
\item $\lim\limits_{n \rightarrow \infty} \frac{1}{n} \sum\limits_{t=1}^{n}\cos(\phi t) = \lim\limits_{n \rightarrow \infty} \frac{1}{n} \sum\limits_{t=1}^{n}\sin(\phi t) = 0.$
\item $\lim\limits_{n \rightarrow \infty} \frac{1}{n^{k+1}} \sum\limits_{t=1}^{n}t^{k} \cos^2(\phi t) = \lim\limits_{n \rightarrow \infty} \frac{1}{n^{k+1}} \sum\limits_{t=1}^{n}t^{k} \sin^2(\phi t) = \frac{1}{2(k+1)};\ k = 0, 1, 2, \cdots.$
\item $\lim\limits_{n \rightarrow \infty} \frac{1}{n^{k+1}} \sum\limits_{t=1}^{n}t^{k} \sin(\phi t) \cos(\phi t) = 0;\ k = 0, 1, 2, \cdots.$
\end{enumerate} 
\end{lemma}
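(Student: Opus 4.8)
The unifying idea is to express every trigonometric sum as the real or imaginary part of a geometric sum of complex exponentials, and to control all oscillatory contributions through a single boundedness estimate combined with summation by parts. The plan is to first record the elementary fact that for any $\psi$ with $e^{i\psi}\neq 1$, the partial sums $D_n(\psi)=\sum_{t=1}^{n}e^{i\psi t}$ are bounded uniformly in $n$: summing the geometric series gives
$$|D_n(\psi)| = \left|\frac{e^{i\psi}(e^{in\psi}-1)}{e^{i\psi}-1}\right| \leq \frac{1}{|\sin(\psi/2)|}.$$
For $\phi\in(0,\pi)$ we have $e^{i\phi}\neq 1$ and, since then $2\phi\in(0,2\pi)$, also $e^{2i\phi}\neq 1$, so this estimate applies both to $\psi=\phi$ and to $\psi=2\phi$. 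As these are the only frequencies entering the three statements, no point of $(0,\pi)$ actually needs to be excluded; the ``countable number of points'' clause is harmless here and is included only for conformity with the more general results it is modelled on.

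Part (a) is then immediate. The quantities $\frac1n\sum_{t=1}^{n}\cos(\phi t)$ and $\frac1n\sum_{t=1}^{n}\sin(\phi t)$ are the real and imaginary parts of $\frac1n D_n(\phi)$, and since $D_n(\phi)=O(1)$ by the estimate above, both tend to $0$.

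For parts (b) and (c) I would reduce to a single weighted oscillatory sum using the double-angle identities $\cos^2(\phi t)=\tfrac12\bigl(1+\cos 2\phi t\bigr)$, $\sin^2(\phi t)=\tfrac12\bigl(1-\cos 2\phi t\bigr)$ and $\sin(\phi t)\cos(\phi t)=\tfrac12\sin 2\phi t$. The non-oscillatory part contributes $\frac{1}{2n^{k+1}}\sum_{t=1}^{n}t^k$, and since the power-sum asymptotic $\sum_{t=1}^{n}t^k=\frac{n^{k+1}}{k+1}+O(n^k)$ (equivalently, a Riemann-sum estimate for $\int_0^1 x^k\,dx$) holds, this converges to $\frac{1}{2(k+1)}$, which is exactly the claimed limit in (b); in (c) there is no non-oscillatory term, so the candidate limit is $0$.

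The crux, and the only genuine work, is to show that the weighted oscillatory sums $\frac{1}{n^{k+1}}\sum_{t=1}^{n}t^k\cos(2\phi t)$ and $\frac{1}{n^{k+1}}\sum_{t=1}^{n}t^k\sin(2\phi t)$ vanish in the limit. I would handle this by Abel summation against the bounded partial sums $D_t(2\phi)$: writing $\sum_{t=1}^{n}t^k e^{2i\phi t}=n^k D_n(2\phi)-\sum_{t=1}^{n-1}\bigl((t+1)^k-t^k\bigr)D_t(2\phi)$ and using $|D_t(2\phi)|\le|\sin\phi|^{-1}$ together with $(t+1)^k-t^k=O(t^{k-1})$ yields a bound of order $n^k$. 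Dividing by $n^{k+1}$ then sends these sums to $0$, completing (b) and (c). The main obstacle is thus isolated entirely in this boundedness-plus-summation-by-parts estimate for the weighted exponential sum; everything else is identity manipulation and a standard power-sum asymptotic.
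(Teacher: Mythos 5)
Your proof is correct. Note, however, that the paper does not actually prove this lemma: its ``proof'' is a one-line citation to Kundu and Nandi (2012), so there is no in-paper argument to compare against. Your self-contained derivation is the standard one for such results --- the uniform geometric-sum bound $|D_n(\psi)|\le |\sin(\psi/2)|^{-1}$ for $e^{i\psi}\neq 1$, the double-angle reductions, the power-sum asymptotic for the non-oscillatory part, and Abel summation against the bounded partial sums for the weighted oscillatory part --- and all the steps check out, including the Abel identity $\sum_{t=1}^{n}t^k e^{2i\phi t}=n^k D_n(2\phi)-\sum_{t=1}^{n-1}\bigl((t+1)^k-t^k\bigr)D_t(2\phi)$ and the resulting $O(n^k)$ bound. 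Your side remark that the ``except for a countable number of points'' clause is vacuous for this particular lemma (since $\sin(\phi/2)$ and $\sin\phi$ are nonzero throughout $(0,\pi)$) is also correct; that caveat is genuinely needed only for the analogous statements involving $\cos(\phi t^2)$ and mixed products (Lemmas 2, 3 and 5 of the paper), where number-theoretic exceptional sets do arise. In short: a complete and correct argument that supplies a proof the paper merely outsources.
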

\begin{proof}
Refer to Kundu and Nandi \cite{2012}. \\
\end{proof}

\begin{lemma}\label{preliminary_result_2}
If $\phi \in (0, \pi)$, then except for a countable number of points, the following hold true:
\begin{enumerate}[label=(\alph*)]
\item $\lim\limits_{n \rightarrow \infty} \frac{1}{n} \sum\limits_{t=1}^{n}\cos(\phi t^2) = \lim\limits_{n \rightarrow \infty} \frac{1}{n} \sum\limits_{t=1}^{n}\sin(\phi t^2) = 0.$
\item $\lim\limits_{n \rightarrow \infty} \frac{1}{n^{k+1}} \sum\limits_{t=1}^{n}t^{k} \cos^2(\phi t^2) = \lim\limits_{n \rightarrow \infty} \frac{1}{n^{k+1}} \sum\limits_{t=1}^{n}t^{k} \sin^2(\phi t^2) = \frac{1}{2(k+1)};\ k = 0, 1, 2, \cdots.$
\item $\lim\limits_{n \rightarrow \infty} \frac{1}{n^{k+1}} \sum\limits_{t=1}^{n}t^{k} \sin(\phi t^2) \cos(\phi t^2) = 0;\ k = 0, 1, 2, \cdots.$
\end{enumerate} 
\end{lemma}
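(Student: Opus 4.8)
The plan is to reduce all three parts to a single estimate on weighted quadratic exponential (Weyl) sums, and then to establish that estimate by Weyl differencing together with the linear equidistribution already recorded in Lemma~\ref{preliminary_result_1}. I would first dispose of parts (b) and (c) conditionally. Applying the identities $\cos^2(\phi t^2) = \tfrac{1}{2}\big(1 + \cos(2\phi t^2)\big)$, $\sin^2(\phi t^2) = \tfrac{1}{2}\big(1 - \cos(2\phi t^2)\big)$ and $\sin(\phi t^2)\cos(\phi t^2) = \tfrac{1}{2}\sin(2\phi t^2)$, each of (b) and (c) splits into an elementary polynomial average plus a weighted chirp term. The polynomial average is a Riemann sum, $\frac{1}{n^{k+1}}\sum_{t=1}^{n} t^k \to \int_0^1 x^k\,dx = \frac{1}{k+1}$, which supplies the factor $\frac{1}{2(k+1)}$ in (b) and does not enter (c). Hence everything reduces to showing, for each fixed $k \ge 0$ and each $\psi$ with $\psi/\pi$ irrational,
\begin{equation*}
\frac{1}{n^{k+1}}\sum_{t=1}^{n} t^k e^{\,i\psi t^2} \longrightarrow 0 .
\end{equation*}
Taking $\psi = 2\phi$ annihilates the residual cosine/sine terms in (b) and (c), while taking $\psi = \phi$ and $k = 0$ yields (a) after separating real and imaginary parts. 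The excluded set is the set of $\phi \in (0,\pi)$ with $\phi/\pi$ rational (on which $e^{i\phi t^2}$ is periodic in $t$ and the averages need not vanish), which is countable, matching the statement.

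Next I would transfer this from the weighted sum to the unweighted sum $S_m = \sum_{t=1}^{m} e^{\,i\psi t^2}$ by Abel summation against the monotone weights $t^k$:
\begin{equation*}
\frac{1}{n^{k+1}}\sum_{t=1}^{n} t^k e^{\,i\psi t^2} = \frac{S_n}{n} - \frac{1}{n^{k+1}}\sum_{t=1}^{n-1} S_t\big((t+1)^k - t^k\big) .
\end{equation*}
Since $(t+1)^k - t^k = O(t^{k-1})$, a bound of the form $|S_m| = o(m)$ makes the first term vanish directly and, after splitting the second sum at $\varepsilon n$, bounds the second term by a fixed multiple of $\varepsilon$ plus an $o(1)$ tail; letting $\varepsilon \to 0$ forces the whole weighted average to $0$.

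The main obstacle is therefore the single number-theoretic input $S_m = o(m)$ for $\psi/\pi$ irrational, the quadratic analogue of part (a). I would obtain it by Weyl differencing: expanding $|S_n|^2 = \sum_{t,s=1}^{n} e^{\,i\psi(t^2 - s^2)}$ and setting $h = t - s$ gives $|S_n|^2 \le n + 2\sum_{h=1}^{n-1}\big|\sum_{t} e^{\,i\psi(2ht + h^2)}\big|$, where the inner sum is geometric in $t$ and bounded by $\min\{n,\,|\sin(\psi h)|^{-1}\}$. Because $\psi/\pi$ irrational makes $\{\psi h \bmod 2\pi\}$ equidistributed — the one-dimensional fact underlying Lemma~\ref{preliminary_result_1} — one has $\frac{1}{n}\sum_{h=1}^{n-1}\min\{n, |\sin(\psi h)|^{-1}\} = o(n)$, whence $|S_n|^2 = o(n^2)$ and $S_n = o(n)$. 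This differencing step, which trades the quadratic phase for a family of linear phases controllable by the linear lemma, is the crux; everything else is routine bookkeeping.
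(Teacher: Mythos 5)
Your argument is correct, but note that the paper does not actually prove this lemma: it simply refers the reader to Lahiri's thesis \cite{2011_2}, so you are supplying a proof where the authors supply only a citation. What you have written is the standard argument behind that reference, and every step checks out. The reduction of (b) and (c) to the single estimate $\sum_{t\le n} t^k e^{i\psi t^2}=o(n^{k+1})$ via the double-angle identities and the Riemann sum $\frac{1}{n^{k+1}}\sum t^k\to\frac{1}{k+1}$ is exactly right; the Abel-summation transfer from $S_m=o(m)$ to the weighted sums is valid (the split at $\varepsilon n$ correctly exploits that $|S_t|\le t$ trivially on the initial segment and $|S_t|\le\eta t$ uniformly on the tail), and this is the same device the paper itself invokes elsewhere when it says results "can be generalised along the same lines as in proof of Lemma 2.2.1 of Lahiri". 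The crux, $S_m=o(m)$ for $\psi/\pi$ irrational, is correctly obtained by Weyl differencing: $t^2-s^2=h(2s+h)$ turns the quadratic phase into linear geometric sums bounded by $\min\{n,|\sin(\psi h)|^{-1}\}$, and equidistribution of $\{h\psi/\pi\}$ (the content of Lemma~\ref{preliminary_result_1}(a)) gives $\sum_{h<n}\min\{n,|\sin(\psi h)|^{-1}\}=o(n^2)$ by the usual $\delta$-splitting of the $h$'s according to whether $\psi h$ is close to a multiple of $\pi$. Your identification of the exceptional set as the rational multiples of $\pi$ is also correct and is genuinely needed (e.g.\ at $\phi=\pi/2$ the average in (a) tends to $1/2$). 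In short: no gap; you have reconstructed the proof the paper outsources.
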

\begin{proof}
Refer to Lahiri \cite{2011_2}.\\
\end{proof}

\begin{lemma}\label{preliminary_result_3}
If $(\phi_1, \phi_2) \in (0, \pi) \times (0, \pi)$, then except for a countable number of points, the following hold true:
\begin{enumerate}[label=(\alph*)]
\item $\lim\limits_{n \rightarrow \infty} \frac{1}{n^{k+1}} \sum\limits_{t=1}^{n} t^k \cos(\phi_1 t)\cos(\phi_2 t^2) = 0$
\item $\lim\limits_{n \rightarrow \infty} \frac{1}{n^{k+1}} \sum\limits_{t=1}^{n} t^k \cos(\phi_1 t)\sin(\phi_2 t^2) = 0$
\item $\lim\limits_{n \rightarrow \infty} \frac{1}{n^{k+1}} \sum\limits_{t=1}^{n} t^k \sin(\phi_1 t)\cos(\phi_2 t^2) = 0$
\item $\lim\limits_{n \rightarrow \infty} \frac{1}{n^{k+1}} \sum\limits_{t=1}^{n} t^k \sin(\phi_1 t)\sin(\phi_2 t^2) = 0$
\end{enumerate} 
$k = 0, 1, 2, \cdots$
\end{lemma}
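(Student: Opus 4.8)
The plan is to collapse all four identities into a single estimate on a weighted quadratic exponential sum by means of the product-to-sum formulas. For instance,
$$\cos(\phi_1 t)\cos(\phi_2 t^2) = \tfrac12\bigl[\cos(\phi_2 t^2 + \phi_1 t) + \cos(\phi_2 t^2 - \phi_1 t)\bigr],$$
and the analogous identities for the remaining three products, turn each of the four claims into a finite linear combination of expressions of the form $\frac{1}{n^{k+1}}\sum_{t=1}^n t^k \cos(\phi_2 t^2 + \psi t)$ and $\frac{1}{n^{k+1}}\sum_{t=1}^n t^k \sin(\phi_2 t^2 + \psi t)$ with $\psi \in \{\phi_1, -\phi_1\}$. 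Hence it suffices to prove, for every real $\psi$ and every $\phi_2 \in (0,\pi)$ outside a countable set, that the complex sum $S_n(k) := \sum_{t=1}^n t^k e^{i(\phi_2 t^2 + \psi t)}$ satisfies $S_n(k) = o(n^{k+1})$; taking real and imaginary parts then delivers all four statements at once.

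Next I would strip off the polynomial weight $t^k$ by summation by parts, reducing to the unweighted case $k=0$. Writing $A_n := S_n(0) = \sum_{t=1}^n e^{i(\phi_2 t^2 + \psi t)}$, Abel summation gives
$$S_n(k) = n^k A_n - \sum_{t=1}^{n-1}\bigl((t+1)^k - t^k\bigr)A_t.$$
Since $(t+1)^k - t^k = O(t^{k-1})$, a uniform bound of the form $A_t = o(t)$ immediately forces $S_n(k) = o(n^{k+1})$: indeed $n^k A_n = o(n^{k+1})$, while $\sum_{t=1}^{n-1} O(t^{k-1})\,o(t) = \sum_{t=1}^{n-1} o(t^k) = o(n^{k+1})$. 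Thus everything reduces to the single claim $A_n = o(n)$, that is $\tfrac1n A_n \to 0$.

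The crux is establishing $A_n = o(n)$, and here I would split according to whether $\phi_2$ is a rational multiple of $\pi$. If $\phi_2/\pi$ is irrational, then the polynomial $t \mapsto \frac{1}{2\pi}(\phi_2 t^2 + \psi t)$ has irrational leading coefficient $\phi_2/(2\pi)$, so Weyl's equidistribution theorem (equivalently, Weyl's inequality for quadratic exponential sums) gives $A_n = o(n)$ for every $\psi$. If $\phi_2/\pi$ is rational, then $t \mapsto e^{i\phi_2 t^2}$ is periodic in $t$ with some fixed integer period $L$; splitting the sum by the residue of $t$ modulo $L$ leaves, on each residue class $t = r + Lm$, the constant factor $e^{i\phi_2 r^2}$ times a geometric sum $e^{i\psi r}\sum_m e^{i\psi L m}$ in the linear variable. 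Provided $e^{i\psi L}\neq 1$, i.e. $\psi L \notin 2\pi\mathbb{Z}$, this geometric sum is bounded uniformly in $n$, so $A_n = O(1) = o(n)$. The excluded condition $\psi L \in 2\pi\mathbb{Z}$ forces $\psi = \pm\phi_1$ to be a rational multiple of $\pi$ as well; hence the only pairs $(\phi_1,\phi_2)$ not handled are those with both $\phi_1/\pi$ and $\phi_2/\pi$ rational, a countable set. Together with the countable exceptional sets already present in Lemmas~\ref{preliminary_result_1} and~\ref{preliminary_result_2}, this gives the conclusion outside a countable set of points.

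The main obstacle is precisely this $k=0$ bound $A_n = o(n)$, and within it the honest treatment of the case $\phi_2/\pi \in \mathbb{Q}$: one must check that the residue-class decomposition genuinely reduces matters to bounded geometric sums, and that the non-vanishing contributions survive only on the countable doubly-rational set — which is exactly the set the ``except for a countable number of points'' hypothesis is meant to discard. By contrast, the product-to-sum reductions and the summation-by-parts transfer of the weight $t^k$ are routine.
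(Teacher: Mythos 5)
Your proposal is correct, but it follows a genuinely different route from the paper. The paper's proof bounds $\frac{1}{n}\sum_{t=1}^{n}\cos(\phi_1 t)e^{i\phi_2 t^2}$ by $\frac{1}{n}\{\sum_t\cos^2(\phi_1 t)\}^{1/2}\{\sum_t e^{2i\phi_2 t^2}\}^{1/2}$ (a Cauchy--Schwarz-type step), uses the fact that the quadratic exponential sum is $o(n)$ to conclude the $k=0$ case, and then delegates the generalisation to $k\geq 1$ entirely to Lemma~2.2.1 of Lahiri's thesis. You instead (i) linearise the products via product-to-sum identities into sums $\sum_t t^k e^{i(\phi_2 t^2+\psi t)}$, (ii) remove the weight $t^k$ by Abel summation, reducing everything to $A_n=o(n)$, and (iii) prove that single estimate by splitting on the rationality of $\phi_2/\pi$, invoking Weyl equidistribution in the irrational case and a periodicity/geometric-sum argument in the rational case. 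Your route buys two things the paper's does not: it is self-contained (the paper's displayed inequality places a complex-valued sum under a square root and compares it with $\leq$, which is formally dubious and really a shorthand for the number-theoretic estimate you prove directly), and it identifies the exceptional set explicitly as the countable collection of pairs with both $\phi_1/\pi$ and $\phi_2/\pi$ rational, rather than leaving ``except for a countable number of points'' unexamined. The paper's approach is shorter and consistent with the style of the companion Lemmas \ref{preliminary_result_1} and \ref{preliminary_result_2}, all of which are quoted from prior work. One cosmetic remark: in the rational case you should note that the period $L$ can be taken to be $2q$ when $\phi_2=\pi p/q$, so that both $2L\phi_2$ and $L^2\phi_2$ lie in $2\pi\mathbb{Z}$; with that in place the argument is complete.
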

\begin{proof} Consider the following:
\begin{flalign*}
&\frac{1}{n} \sum_{t=1}^{n}\cos(\phi_1 t)e^{i\phi_2 t^2} \leqslant \frac{1}{n} \{\sum_{t=1}^{n}\cos^2(\phi_1 t)\}^{1/2}\{\sum_{t=1}^{n}e^{2i\phi_2 t^2}\}^{1/2} = o(1) \textmd{ a.s. using Lemma 1.}&\\
& \textmd{Thus, we have, }\lim_{n \rightarrow \infty}\frac{1}{n} \sum_{t=1}^{n}\cos(\phi_1 t)cos(\phi_2 t^2) = 0 \textmd{ and } \lim_{n \rightarrow \infty}\frac{1}{n} \sum_{t=1}^{n}\cos(\phi_1 t)sin(\phi_2 t^2) = 0&
\end{flalign*}
Similarly, $\frac{1}{n} \sum\limits_{t=1}^{n}\sin(\phi_1 t)e^{i\phi_2 t^2} \xrightarrow{a.s.} 0$. Now the result can be generalised along the same lines as in proof of Lemma 2.2.1 of Lahiri \cite{2011_2}.\\

\end{proof}

\begin{lemma}\label{preliminary_result_4}
If X(t) satisfies assumptions \ref{assump:1}, \ref{assump:3} and \ref{assump:4}, then for $k \geqslant 0$:
\begin{enumerate}[label=(\alph*)]
\item $\sup\limits_{\phi} \bigg|\frac{1}{n^{k+1}} \sum\limits_{t=1}^{n} t^k X(t)e^{i(\phi t)}\bigg| \xrightarrow{a.s.} 0$
\item $\sup\limits_{\phi} \bigg|\frac{1}{n^{k+1}} \sum\limits_{t=1}^{n} t^k X(t)e^{i(\phi t^2)}\bigg| \xrightarrow{a.s.} 0$
\end{enumerate} 
Here $i = \sqrt{-i}.$
\end{lemma}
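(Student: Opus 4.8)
My plan is to reduce both parts to the case of i.i.d.\ errors and then to establish the uniform (in $\phi$) almost sure convergence for that case by a discretisation-and-interpolation argument. Write $U_n(\phi)=\frac{1}{n^{k+1}}\sum_{t=1}^n t^k X(t)e^{i\phi t}$ for the object in part (a). Substituting the linear-process representation of Assumption~\ref{assump:1} and interchanging the two absolutely convergent sums gives $U_n(\phi)=\sum_{j=-\infty}^{\infty} a(j)\,\tilde U_n(\phi,j)$ with $\tilde U_n(\phi,j)=\frac{1}{n^{k+1}}\sum_{t=1}^n t^k e(t-j)e^{i\phi t}$, so that $\sup_\phi|U_n(\phi)|\le \sum_j |a(j)|\,\sup_\phi|\tilde U_n(\phi,j)|$. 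A shift of index $s=t-j$ together with the binomial expansion of $(s+j)^k$ shows that, for fixed $j$, each $\tilde U_n(\phi,j)$ agrees—up to the phase $e^{i\phi j}$, lower-order polynomial terms of order $O(n^{r-k})$ that vanish after the $n^{k+1}$ normalisation, and $O(|j|/n)$ boundary corrections—with the purely i.i.d.\ quantity $V_n(\phi)=\frac{1}{n^{k+1}}\sum_{s=1}^n s^k e(s)e^{i\phi s}$. Splitting the $j$-sum into a finite block $|j|\le K$ and a tail, using $\sum_j|a(j)|<\infty$ to make the tail small and the a.s.\ boundedness of $\frac{1}{n}\sum_t|e(t-j)|$ to dominate it, reduces the whole statement to proving $\sup_\phi|V_n(\phi)|\xrightarrow{a.s.}0$.

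For $V_n$ I would work with its real and imaginary parts, that is with $\frac{1}{n^{k+1}}\sum_s s^k e(s)\cos(\phi s)$ and its sine analogue, and prove uniform convergence over $\phi\in[0,\pi]$ by laying down a grid $\phi_\ell=\pi\ell/N_n$, $\ell=0,\dots,N_n$, with $N_n$ a fixed power of $n$. At a fixed grid point $V_n(\phi_\ell)$ is a sum of independent, centred variables with coefficients $c_t=t^k e^{i\phi_\ell t}/n^{k+1}$ obeying $|c_t|\le 1/n$ and $\sum_t|c_t|^2=O(1/n)$ (the order following from $\sum_{t=1}^n t^{2k}=O(n^{2k+1})$); a moment/maximal inequality for such sums together with the Borel--Cantelli lemma applied along a subsequence of $n$, with the intervening values filled in, yields $\max_{0\le \ell\le N_n}|V_n(\phi_\ell)|\xrightarrow{a.s.}0$. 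To pass from the grid to the supremum I would use $|V_n(\phi)-V_n(\phi_\ell)|\le |\phi-\phi_\ell|\sup_\psi|V_n'(\psi)|$; since $V_n'(\psi)=\frac{i}{n^{k+1}}\sum_s s^{k+1}e(s)e^{i\psi s}$ is bounded in absolute value by $\frac{1}{n^{k+1}}\sum_{s=1}^n s^{k+1}|e(s)|=O(n)$ a.s.\ (by the strong law, as $E|e(1)|\le\sigma<\infty$), choosing $N_n=n^2$ makes the interpolation error $O(n/N_n)$ tend to zero almost surely. The two bounds together give $\sup_\phi|V_n(\phi)|\to 0$ a.s.

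Part (b) runs along identical lines with $e^{i\phi t^2}$ in place of $e^{i\phi t}$. The reduction to i.i.d.\ errors is verbatim, and the pointwise grid estimates are unchanged because Lemma~\ref{preliminary_result_2} supplies the same second-moment orders for the chirp exponentials that Lemma~\ref{preliminary_result_1} supplies in part (a). The only structural difference is that differentiating $e^{i\phi t^2}$ in $\phi$ brings down a factor $t^2$, so the analogous derivative has order $\frac{1}{n^{k+1}}\sum_s s^{k+2}|e(s)|=O(n^2)$ a.s.; this forces a finer grid, e.g.\ $N_n=n^3$, to keep the interpolation error vanishing.

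The step I expect to be the crux is the reconciliation of the grid refinement with the pointwise control. The supremum over the continuous parameter $\phi$ forces a polynomially fine grid (dictated by the derivative bounds above), whereas only the second moment of $e(t)$ is available from Assumption~\ref{assump:1}; consequently $\max_\ell|V_n(\phi_\ell)|$ must be shown summable through a careful choice of subsequence, together with a maximal inequality that upgrades the per-point $L^2$ control to almost sure control over all $N_n$ points simultaneously. Getting this balance right—between how fine the grid must be for interpolation and how many points Borel--Cantelli can absorb under a finite-variance hypothesis—is the delicate part; the remaining manipulations (the index shifts, the binomial expansion of $(s+j)^k$, and the boundary estimates) are routine.
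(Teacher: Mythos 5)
The paper does not actually prove this lemma: it invokes Lemma 2.2.2 of Lahiri's thesis \cite{2011_2} and stops, so you are attempting a proof the authors chose to outsource. Your overall architecture --- reduce the linear process to the i.i.d.\ sequence $e(t)$ via $\sum_j|a(j)|<\infty$, then prove uniform a.s.\ convergence of $V_n(\phi)=n^{-(k+1)}\sum_s s^k e(s)e^{i\phi s}$ by a grid-plus-interpolation argument --- is the standard route in this literature, and the interpolation half is sound: the derivative bounds $O(n)$ and $O(n^2)$ are correctly matched to the grids $N_n=n^2$ and $N_n=n^3$.

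The genuine gap is precisely the step you yourself flag as ``the crux'' and then do not carry out: the claim that ``a moment/maximal inequality together with Borel--Cantelli'' yields $\max_{0\le\ell\le N_n}|V_n(\phi_\ell)|\xrightarrow{a.s.}0$. Assumption \ref{assump:1} supplies only $E(e(t)^2)=\sigma^2<\infty$, so the only per-point tail bound obtainable from moments is Chebyshev, $P(|V_n(\phi_\ell)|>\epsilon)\le \sigma^2\sum_t|c_t|^2/\epsilon^2=O(1/n)$; a union bound over $N_n=n^2$ (let alone $n^3$) grid points then gives $O(n)$, which is not summable --- it is not even bounded --- and passing to a subsequence in $n$ does not help because the bound must still be summed over the whole grid at each $n$. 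Higher-order moment or exponential inequalities are unavailable as stated, since no moments of $e(t)$ beyond the second are assumed. Every known proof of this lemma closes the gap with a truncation device: split $e(t)$ into a bounded piece, to which a Bernstein/Hoeffding-type exponential bound applies uniformly over polynomially many grid points, plus a remainder whose contribution is killed by Borel--Cantelli or the strong law using only the second moment. Without some such device the pointwise-to-uniform upgrade simply does not follow, so as written the proof is incomplete at its central step. (A secondary, fixable imprecision: in the tail $|j|>K$ of your reduction you appeal to ``the a.s.\ boundedness of $\frac{1}{n}\sum_t|e(t-j)|$'', but the supremum over all $j$ of these window averages is a.s.\ infinite for unbounded $e$; the correct argument applies the ergodic theorem to the stationary process $W(t)=\sum_{|j|>K}|a(j)|\,|e(t-j)|$ to bound the $\limsup_n$ of the tail by $E|e(1)|\sum_{|j|>K}|a(j)|$.)
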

\begin{proof}
These can be obtained as particular cases of Lemma 2.2.2 of Lahiri \cite{2011_2}.\\
\end{proof}

\begin{lemma}\label{preliminary_result_5}
If $(\phi_1, \phi_2) \in (0, \pi) \times (0, \pi)$, then except for a countable number of points, the following hold true:
\begin{enumerate}[label=(\alph*)]
\item $\lim\limits_{n \rightarrow \infty} \frac{1}{n^k\sqrt{n}} \sum\limits_{t=1}^{n} t^k \cos(\phi_1 t^2) = \lim\limits_{n \rightarrow \infty} \frac{1}{n^k\sqrt{n}} \sum\limits_{t=1}^{n} t^k \sin(\phi_1 t^2)  = 0$
\item $\lim\limits_{n \rightarrow \infty} \frac{1}{n^k\sqrt{n}} \sum\limits_{t=1}^{n} t^k \cos(\phi_1 t)\cos(\phi_2 t) = 0$ 
\item $\lim\limits_{n \rightarrow \infty} \frac{1}{n^k\sqrt{n}} \sum\limits_{t=1}^{n} t^k \cos(\phi_1 t)\sin(\phi_2 t) = 0$
\item $\lim\limits_{n \rightarrow \infty} \frac{1}{n^k\sqrt{n}} \sum\limits_{t=1}^{n} t^k \sin(\phi_1 t)\sin(\phi_2 t) = 0$
\item $\lim\limits_{n \rightarrow \infty} \frac{1}{n^k\sqrt{n}} \sum\limits_{t=1}^{n} t^k \cos(\phi_1 t)\cos(\phi_2 t^2) = 0$
\item $\lim\limits_{n \rightarrow \infty} \frac{1}{n^k\sqrt{n}} \sum\limits_{t=1}^{n} t^k \cos(\phi_1 t)\sin(\phi_2 t^2) = 0$
\item $\lim\limits_{n \rightarrow \infty} \frac{1}{n^k \sqrt{n}} \sum\limits_{t=1}^{n} t^k \sin(\phi_1 t)\cos(\phi_2 t^2) = 0$
\item $\lim\limits_{n \rightarrow \infty}  \frac{1}{n^k \sqrt{n}} \sum\limits_{t=1}^{n} t^k \sin(\phi_1 t)\sin(\phi_2 t^2) = 0$
\end{enumerate} 
$k = 0, 1, 2, \cdots$
\end{lemma}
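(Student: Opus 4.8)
The plan is to handle all eight identities at once by passing to the complex exponential sum $S_k(\psi_1,\psi_2)=\sum_{t=1}^{n}t^{k}e^{i(\psi_1 t+\psi_2 t^{2})}$, of which every left-hand side is a real or imaginary part after a product-to-sum reduction. Writing each cosine and sine as a half-sum of exponentials converts a product of two trigonometric factors into terms $e^{i(\phi_1\pm\phi_2)t}$ for parts (b)--(d), where the phase stays linear, and into terms $e^{i(\pm\phi_1 t+\phi_2 t^{2})}$ for parts (e)--(h), where the phase is quadratic; part (a) is the pure quadratic case $\psi_1=0$. Hence it suffices to show $\big|\frac{1}{n^{k}\sqrt{n}}S_k(\psi_1,\psi_2)\big|\to 0$ for each admissible $(\psi_1,\psi_2)$.

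For the purely sinusoidal parts (b)--(d) one has $\psi_2=0$ and $\psi_1=\phi_1\pm\phi_2$. Since $\phi_1,\phi_2\in(0,\pi)$, the only way $\psi_1$ can be a multiple of $2\pi$ is $\phi_1=\phi_2$, and this single resonance is exactly what the ``countable number of points'' excludes; off it $\psi_1\not\equiv 0\pmod{2\pi}$, so the partial sums $E_m=\sum_{t=1}^{m}e^{i\psi_1 t}$ are bounded uniformly in $m$ by $2|1-e^{i\psi_1}|^{-1}$. Summation by parts against the monotone weight $t^{k}$ then gives $S_k(\psi_1,0)=n^{k}E_n-\sum_{t=1}^{n-1}E_t\big((t+1)^{k}-t^{k}\big)=O(n^{k})$, using $(t+1)^{k}-t^{k}=O(t^{k-1})$ and $\sum_{t\le n}t^{k-1}=O(n^{k})$. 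Dividing by $n^{k}\sqrt{n}$ leaves $O(n^{-1/2})\to 0$, and taking real and imaginary parts yields (b), (c) and (d).

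For the chirp-involving parts (a) and (e)--(h) the phase is a genuine quadratic with $\psi_2=\pm\phi_2\neq 0$, so no bounded-partial-sum shortcut is available. Here too I would use summation by parts, reducing the problem to the growth of $D_m=\sum_{t=1}^{m}e^{i(\psi_1 t+\psi_2 t^{2})}$: after completing the square this is a shifted quadratic Weyl sum, so Lemma~\ref{preliminary_result_2} (together with the Cauchy--Schwarz device in the proof of Lemma~\ref{preliminary_result_3}, which already treats the linear-times-quadratic case) supplies $D_m=o(m)$. Summation by parts against $t^{k}$ then propagates this to $S_k(\psi_1,\psi_2)=o(n^{k+1})$.

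The step I expect to be the obstacle is matching the normalization. The elementary equidistribution bound $D_m=o(m)$ yields only $S_k=o(n^{k+1})$, which is precisely what the $n^{k+1}$ scaling of Lemma~\ref{preliminary_result_3} needs but falls a full factor of $\sqrt{n}$ short of the $o(n^{k+1/2})$ demanded by the present $n^{k}\sqrt{n}$ scaling. Closing this gap requires the sharper quadratic-exponential-sum estimate $D_m=o(\sqrt{m})$, uniform over the relevant frequencies off a countable set; I would establish it by the Weyl / stationary-phase treatment of $\sum_t e^{i\psi_2 t^{2}}$ developed in Lahiri~\cite{2011_2} and then feed it back through the summation-by-parts identity to finish (a) and (e)--(h). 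Everything else is routine bookkeeping with Lemmas~\ref{preliminary_result_1} and~\ref{preliminary_result_2}.
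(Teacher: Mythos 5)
Your proposal follows essentially the same route as the paper's proof: parts (b)--(d) via product-to-sum identities and the uniform boundedness of $\sum_{t\leqslant n}e^{i\alpha t}$ for $\alpha\not\equiv 0 \pmod{2\pi}$ (with the resonance $\phi_1=\phi_2$ excluded), and the chirp-phase parts (a), (e)--(h) reduced to the estimate $\sum_{t\leqslant m}e^{i(\psi_1 t+\psi_2 t^2)}=o(\sqrt{m})$, which the paper likewise does not prove from scratch but imports via a Cauchy--Schwarz step from Lemma~\ref{preliminary_result_2}(a), Lemma 2.2.1 of Lahiri~\cite{2011_2}, and Lemma 5 of Grover et al.~\cite{2018}. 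You correctly isolate the real content --- that the naive equidistribution bound $o(m)$ falls a factor of $\sqrt{m}$ short --- and your explicit Abel summation in $t^k$ is exactly the generalisation from $k=0$ to arbitrary $k$ that the paper invokes only by reference.
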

\begin{proof}
\textit{(a)} Consider the following:
\begin{flalign*}
& \sum\limits_{t=1}^{n}e^{i \phi_1 t^2} \leqslant [\sum\limits_{t=1}^{n}e^{2i\phi_1 t^2}]^{1/2} = [o(n)]^{1/2} \textmd{ using Lemma \ref{preliminary_result_2}\ (a).}& \\
&\textmd{Thus, we have } \frac{1}{\sqrt{n}}\cos(\phi_1 t^2) \rightarrow 0 \textmd{ as } n \rightarrow \infty \textmd{ and } \frac{1}{\sqrt{n}}\sin(\phi_1 t^2) \rightarrow 0 \textmd{ as } n \rightarrow \infty. &
\end{flalign*}
Now the result can be generalised for $k \geqslant 1$ along the same lines as in proof of Lemma 2.2.1 of Lahiri \cite{2011_2}.\\
The proofs of parts \textit{(b)}, \textit{(c)} and \textit{(d)} follow by using elementary product to sum trigonometric identities and then using the fact that: 
$$\sum_{t=1}^{n} e^{i\alpha t} = O(1); \ \alpha \in (-\pi,\pi).$$
The proofs of parts \textit{(e)-(h)}, follow similarly by using the above mentioned trigonometric identities and Lemma 5 of Grover et al. \cite{2018}.\\
\end{proof}

\section{One Component Chirp-like Model}\label{appendix:C}
\subsection{Proofs of the asymptotic properties of the LSEs}\label{appendix:C1}
We need the following lemmas to prove the consistency of the LSEs:
\begin{lemma}\label{lemma_condition_consistency}
Consider the set $S_c = \{\boldsymbol{\theta}: |\boldsymbol{\theta} - \boldsymbol{\theta}^0| > c; \boldsymbol{\theta} \in \boldsymbol{\Theta}\}$. If the following holds true:
\begin{equation}\label{condition_for_consistency}
\liminf \inf\limits_{S_c} \frac{1}{n} (Q(\boldsymbol{\theta}) - Q(\boldsymbol{\theta}^0)) > 0 \textmd{ a.s., }
\end{equation}
then $\hat{\boldsymbol{\theta}} \xrightarrow{a.s.} \boldsymbol{\theta}^0$ as $n \rightarrow \infty$ 
\end{lemma}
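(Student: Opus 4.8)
The plan is a proof by contradiction built directly on the defining property of $\hat{\boldsymbol{\theta}}$ as the global minimiser of $Q$ over the compact set $\boldsymbol{\Theta}$. First I would suppose that the conclusion fails, that is, that $\hat{\boldsymbol{\theta}}$ fails to converge almost surely to $\boldsymbol{\theta}^0$. By the definition of almost sure convergence, this means that on an event of positive probability there is a $c > 0$ for which $|\hat{\boldsymbol{\theta}} - \boldsymbol{\theta}^0| > c$ for infinitely many $n$; equivalently, along some subsequence $\{n_k\}$ we have $\hat{\boldsymbol{\theta}} \in S_c$ for all $k$.

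Next I would bring in the minimising property. Since $\boldsymbol{\theta}^0 \in \boldsymbol{\Theta}$ and $\hat{\boldsymbol{\theta}}$ minimises $Q(\boldsymbol{\theta})$ over $\boldsymbol{\Theta}$, we have $Q(\hat{\boldsymbol{\theta}}) \leqslant Q(\boldsymbol{\theta}^0)$ for every $n$, and hence $\frac{1}{n}\big(Q(\hat{\boldsymbol{\theta}}) - Q(\boldsymbol{\theta}^0)\big) \leqslant 0$. On the subsequence, using $\hat{\boldsymbol{\theta}} \in S_c$, the infimum over $S_c$ is no larger than the value attained at $\hat{\boldsymbol{\theta}}$:
$$\inf_{S_c} \frac{1}{n_k}\big(Q(\boldsymbol{\theta}) - Q(\boldsymbol{\theta}^0)\big) \;\leqslant\; \frac{1}{n_k}\big(Q(\hat{\boldsymbol{\theta}}) - Q(\boldsymbol{\theta}^0)\big) \;\leqslant\; 0.$$
Passing to the limit inferior along $\{n_k\}$ forces $\liminf \inf_{S_c}\frac{1}{n}\big(Q(\boldsymbol{\theta}) - Q(\boldsymbol{\theta}^0)\big) \leqslant 0$ on an event of positive probability, which contradicts the hypothesis \eqref{condition_for_consistency}. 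Therefore the non-convergence event is null and the claim follows.

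The argument is short, and the only delicate point is the logical and measurability bookkeeping in passing from ``$\hat{\boldsymbol{\theta}}$ fails to converge'' to ``some fixed $c$ works.'' Since the negation of convergence produces a path-dependent $c$, I would arrange for the hypothesis \eqref{condition_for_consistency} to hold simultaneously for every $c$ in a countable dense set (for instance the positive rationals) outside a single null set, and then pick a rational $c' < c$; because $S_c \subseteq S_{c'}$, the estimator also lies in $S_{c'}$ infinitely often, so the contradiction is reached at a $c'$ for which the hypothesis is available. No structural features of the chirp-like model or continuity of $Q$ are needed at this stage; these enter only later, when one verifies that \eqref{condition_for_consistency} genuinely holds, which is the substantive work and will rely on Lemmas \ref{preliminary_result_1}--\ref{preliminary_result_5}.
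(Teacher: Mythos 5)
Your proof is correct and follows essentially the same contradiction argument as the paper: along a subsequence the estimator lies in $S_c$ for some fixed $c>0$, and the minimising property $Q(\hat{\boldsymbol{\theta}}) \leqslant Q(\boldsymbol{\theta}^0)$ then contradicts \eqref{condition_for_consistency}. The only differences are cosmetic: the paper first disposes of a separate case in which the amplitude estimates are unbounded, which you rightly render unnecessary by invoking the compactness of $\boldsymbol{\Theta}$, and your handling of the quantifier over $c$ (via a countable dense set of radii) is more careful than the paper's.
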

\begin{proof}
Let us denote $\hat{\boldsymbol{\theta}}$ by $\hat{\boldsymbol{\theta}}_{n}$, to highlight the fact that the estimates depend on the sample size $n$. Now suppose, $\hat{\boldsymbol{\theta}}_{n} \nrightarrow \boldsymbol{\theta}^0$, then there exists a subsequence $\{n_k\}$ of $\{n\}$, such that $\hat{\boldsymbol{\theta}}_{n_k} \nrightarrow \boldsymbol{\theta}^0$. In such a situation, one of two cases may arise:
\begin{enumerate}
\item $|\hat{A}_{n_k}| + |\hat{B}_{n_k}| + |\hat{C}_{n_k}| + |\hat{D}_{n_k}|$ is not bounded, that is, at least one of the $|\hat{A}_{n_k}|$ or $|\hat{B}_{n_k}|$ or $|\hat{C}_{n_k}|$ or $|\hat{D}_{n_k}|$ $\rightarrow \infty$ $\Rightarrow \frac{1}{n_k}Q_{n_k}(\hat{\boldsymbol{\theta}}_{n_k}) \rightarrow \infty$ \\
But, $\lim\limits_{n_k \rightarrow \infty} \frac{1}{n_k} Q_{n_k}(\boldsymbol{\theta}^0) < \infty$ which implies, $\frac{1}{n_k} (Q_{n_k}(\hat{\boldsymbol{\theta}}_{n_k}) - Q_{n_k}(\boldsymbol{\theta}^0)) \rightarrow \infty.$ This contradicts the fact that:
\begin{equation}\label{LSE_property}
Q_{n_k}(\hat{\boldsymbol{\theta}}_{n_k}) \leqslant Q_{n_k}(\boldsymbol{\theta}^0),
\end{equation}
which holds true as $\hat{\boldsymbol{\theta}}_{n_k}$ is the LSE of $\boldsymbol{\theta}^0$.
\item  $|\hat{A}_{n_k}| + |\hat{B}_{n_k}| + |\hat{C}_{n_k}| + |\hat{D}_{n_k}|$ is bounded, then there exists a $c > 0$ such that $\hat{\boldsymbol{\theta}}_{n_k} \in S_c$, for all $k = 1, 2, \cdots$. Now, since \eqref{condition_for_consistency} is true, this contradicts \eqref{LSE_property}.
\end{enumerate} 
 Hence, the result.\\
\end{proof}
\justify
\textit{Proof of Theorem \ref{consistency_LSEs_one_component}:}
Consider the difference:
\begin{flalign*}
&\frac{1}{n}\bigg(Q(\boldsymbol{\theta}) - Q(\boldsymbol{\theta}^0)\bigg)&\\
& = \frac{1}{n}\sum_{t=1}^{n}\bigg(y(t) - A\cos(\alpha t) - B \sin(\alpha t) - C \cos(\beta t^2) - D \sin(\beta t^2)\bigg)^2& \\
& - \frac{1}{n}\sum_{t=1}^{n}\bigg(y(t) - A^0\cos(\alpha^0 t) - B^0 \sin(\alpha^0 t) - C^0 \cos(\beta^0 t^2) - D^0 \sin(\beta^0 t^2)\bigg)^2 & \\
& = \frac{1}{n} \sum_{t=1}^{n}\bigg(A^0 \cos(\alpha^0 t) - A \cos(\alpha t) + B^0 \sin(\alpha^0 t) - B \sin(\alpha t) + C^0 \cos(\beta^0 t^2) - C \cos(\beta t^2) & \\
& + D^0 \sin(\beta^0 t^2) - D \sin(\beta t^2)\bigg)^2 + \frac{1}{n} \sum_{t=1}^{n} X(t) \bigg(A^0 \cos(\alpha^0 t) - A \cos(\alpha t) + B^0 \sin(\alpha^0 t) - B \sin(\alpha t) & \\
& + C^0 \cos(\beta^0 t^2) - C \cos(\beta t^2) + D^0 \sin(\beta^0 t^2) - D \sin(\beta t^2)\bigg) &\\
& = f(\boldsymbol{\theta}) + g(\boldsymbol{\theta}).&\\
\end{flalign*}
Now using Lemma \ref{preliminary_result_4}, it can be easily seen that:
\begin{equation}\label{lim_g_theta}
\lim_{n \rightarrow \infty} \sup_{\boldsymbol{\theta} \in S_c} g(\boldsymbol{\theta}) = 0 \textmd{ a.s. }
\end{equation}
Thus, we have:
\begin{equation*}
\liminf \inf_{\boldsymbol{\theta} \in S_c} \frac{1}{n}\bigg(Q(\boldsymbol{\theta}) - Q(\boldsymbol{\theta}^0)\bigg) = \liminf \inf_{\boldsymbol{\theta} \in S_c} f(\boldsymbol{\theta}).
\end{equation*} 
Note that the proof will follow if we show that $ \liminf \inf_{\boldsymbol{\theta} \in S_c} f(\boldsymbol{\theta}) > 0$. Consider the set $S_c = \{\boldsymbol{\theta}: |\boldsymbol{\theta} - \boldsymbol{\theta}^0| \geqslant 6c; \boldsymbol{\theta} \in \boldsymbol{\Theta}\} \subset S_c^{(1)} \cup S_c^{(2)} \cup S_c^{(3)} \cup S_c^{(4)} \cup S_c^{(5)} \cup S_c^{(6)}$, where 
\begin{flalign*}
& S_c^{(1)} = \{\boldsymbol{\theta}: |A - A^0| \geqslant c; \boldsymbol{\theta} \in \boldsymbol{\Theta}\} \qquad S_c^{(2)} = \{\boldsymbol{\theta}: |B - B^0| \geqslant c; \boldsymbol{\theta} \in \boldsymbol{\Theta}\} &\\
& S_c^{(3)} = \{\boldsymbol{\theta}: |\alpha - \alpha^0| \geqslant c; \boldsymbol{\theta} \in \boldsymbol{\Theta}\} \qquad S_c^{(4)} = \{\boldsymbol{\theta}: |C - C^0| \geqslant c; \boldsymbol{\theta} \in \boldsymbol{\Theta}\} & \\
& S_c^{(5)} = \{\boldsymbol{\theta}: |D - D^0| \geqslant c; \boldsymbol{\theta} \in \boldsymbol{\Theta}\} \qquad S_c^{(6)} = \{\boldsymbol{\theta}: |\beta - \beta^0| \geqslant c; \boldsymbol{\theta} \in \boldsymbol{\Theta}\} & \\
\end{flalign*} 
Now, we split the set $S_c^{(1)}$ as follows:
\begin{flalign*}
&S_c^{(1)} = \{\boldsymbol{\theta}: |A - A^0| \geqslant c; \boldsymbol{\theta} \in \boldsymbol{\Theta}\}& \\ 
& \subset \{\boldsymbol{\theta}: |A - A^0| \geqslant c; \boldsymbol{\theta} \in \boldsymbol{\Theta}; \alpha = \alpha^0; \beta = \beta^0 \} \cup \{\boldsymbol{\theta}: |A - A^0| \geqslant c; \boldsymbol{\theta} \in \boldsymbol{\Theta}; \alpha \neq \alpha^0; \beta = \beta^0 \}  & \\
& \cup \{\boldsymbol{\theta}: |A - A^0| \geqslant c; \boldsymbol{\theta} \in \boldsymbol{\Theta}; \alpha = \alpha^0; \beta \neq \beta^0 \} \cup \{\boldsymbol{\theta}: |A - A^0| \geqslant c; \boldsymbol{\theta} \in \boldsymbol{\Theta}; \alpha \neq \alpha^0; \beta \neq \beta^0 \} &\\ 
& = S_c^{(1)_{1}} \cup S_c^{(1)_{2}} \cup S_c^{(1)_{3}} \cup S_c^{(1)_{4}} &
\end{flalign*}
Now let us consider:
\begin{flalign*}
&\liminf \inf_{\boldsymbol{\theta} \in S_c^{(1)_{1}}} f(\boldsymbol{\theta})& \\
& = \liminf \inf_{\boldsymbol{\theta} \in S_c^{(1)_{1}}} \frac{1}{n} \sum_{t=1}^{n} \bigg(A^0 \cos(\alpha^0 t) - A \cos(\alpha t) + B^0 \sin(\alpha^0 t) - B \sin(\alpha t) + C^0 \cos(\beta^0 t^2) - C \cos(\beta t^2) & \\
& + D^0 \sin(\beta^0 t^2) - D \sin(\beta t^2)\bigg)^2& \\
& = \liminf \inf_{\boldsymbol{\theta} \in S_c^{(1)_{1}}} \frac{1}{n} \sum_{t=1}^{n} \bigg((A^0 - A) \cos(\alpha^0 t) + (B^0 - B) \sin(\alpha^0 t) + (C^0 - C) \cos(\beta^0 t^2) + (D^0 - D) \sin(\beta^0 t^2)\bigg)^2& \\
& = \frac{(A^0 - A)^2}{2} + \frac{(B^0 - B)^2}{2} + \frac{(C^0 - C)^2}{2} + \frac{(D^0 - D)^2}{2} > 0 &
\end{flalign*}
\begin{flalign*}
&\liminf \inf_{\boldsymbol{\theta} \in S_c^{(1)_{2}}} f(\boldsymbol{\theta})& \\
%& = \liminf \inf_{\boldsymbol{\theta} \in S_c^{(1)_{2}}} \frac{1}{n} \sum_{t=1}^{n} \bigg(A^0 \cos(\alpha^0 t) - A \cos(\alpha t) + B^0 \sin(\alpha^0 t) - B \sin(\alpha t) + C^0 \cos(\beta^0 t^2) - C \cos(\beta t^2) & \\
%& + D^0 \sin(\beta^0 t^2) - D \sin(\beta t^2)\bigg)^2& \\
& = \liminf \inf_{\boldsymbol{\theta} \in S_c^{(1)_{1}}} \frac{1}{n} \sum_{t=1}^{n} \bigg(A^0 \cos(\alpha^0 t) - A \cos(\alpha t) + B^0 \sin(\alpha^0 t) - B \sin(\alpha t) + (C^0  - C) \cos(\beta^0 t^2) & \\
& \qquad + (D^0 - D) \sin(\beta^0 t^2)\bigg)^2 =  \frac{{A^0}^2}{2} + \frac{{A}^2}{2} + \frac{{B^0}^2}{2} + \frac{{B}^2}{2} + \frac{(C^0 - C)^2}{2} + \frac{(D^0 - D)^2}{2} > 0 &
\end{flalign*}
\begin{flalign*}
&\liminf \inf_{\boldsymbol{\theta} \in S_c^{(1)_{3}}} f(\boldsymbol{\theta})& \\
& = \liminf \inf_{\boldsymbol{\theta} \in S_c^{(1)_{3}}} \frac{1}{n} \sum_{t=1}^{n} \bigg((A^0 - A) \cos(\alpha^0 t) + (B^0 - B) \sin(\alpha^0 t) + C^0 \cos(\beta^0 t^2) - C \cos(\beta t^2) & \\
& \qquad + D^0 \sin(\beta^0 t^2) - D \sin(\beta t^2)\bigg)^2 = \frac{(A^0 - A)^2}{2} + \frac{(B^0 - B)^2}{2} + \frac{{C^0}^2}{2} + \frac{C^2}{2} + \frac{{D^0}^2}{2} + \frac{{D}^2}{2} > 0 &
\end{flalign*}
\begin{flalign*}
& \textmd{Finally, } \liminf \inf_{\boldsymbol{\theta} \in S_c^{(1)_{4}}} f(\boldsymbol{\theta})& \\
& = \liminf \inf_{\boldsymbol{\theta} \in S_c^{(1)_{1}}} \frac{1}{n} \sum_{t=1}^{n} \bigg(A^0 \cos(\alpha^0 t) - A \cos(\alpha t) + B^0 \sin(\alpha^0 t) - B \sin(\alpha t) + C^0 \cos(\beta^0 t^2) - C \cos(\beta t^2) & \\
& \qquad + D^0 \sin(\beta^0 t^2) - D \sin(\beta t^2)\bigg)^2 =  \frac{{A^0}^2}{2} + \frac{{A}^2}{2} + \frac{{B^0}^2}{2} + \frac{{B}^2}{2} + \frac{{C^0}^2}{2} + \frac{C^2}{2} + \frac{{D^0}^2}{2} + \frac{{D}^2}{2} > 0 &
\end{flalign*}
Note that we used lemmas \ref{preliminary_result_1} and \ref{preliminary_result_2} in all the above computations of the limits. On combining all the above, we have $ \liminf \inf\limits_{\boldsymbol{\theta} \in S_c^{(1)}} f(\boldsymbol{\theta}) > 0.$ Similarly, it can be shown that the result holds for the rest of the sets. Therefore, by Lemma \ref{lemma_condition_consistency}, $\hat{\boldsymbol{\theta}}$ is a strongly consistent estimator of $\boldsymbol{\theta}^0$.\\ \qed
\justify
\textit{Proof of Theorem \ref{asymptotic_dist_LSEs_one_component}:} 
To obtain the asymptotic distribution of the LSEs, we express $\textbf{Q}'(\hat{\boldsymbol{\theta}})$ using multivariate Taylor series expansion arount the point $\boldsymbol{\theta}^0$, as follows:
\begin{equation}\label{taylor_series_Q'}
\textbf{Q}'(\hat{\boldsymbol{\theta}}) - \textbf{Q}'(\boldsymbol{\theta}^0) = (\hat{\boldsymbol{\theta}} - \boldsymbol{\theta}^0) \textbf{Q}''(\bar{\boldsymbol{\theta}}).
\end{equation}
Here, $\bar{\boldsymbol{\theta}}$ is a point between $\hat{\boldsymbol{\theta}}$ and $\boldsymbol{\theta}^0$. 
Since, $\hat{\boldsymbol{\theta}}$ is the LSE of $\boldsymbol{\theta}^0$, $\textbf{Q}'(\hat{\boldsymbol{\theta}}) = 0$. Thus, we have:
\begin{equation}\label{taylor_series_Q'_2}
(\hat{\boldsymbol{\theta}} - \boldsymbol{\theta}^0) = - \textbf{Q}'(\boldsymbol{\theta}^0)[\textbf{Q}''(\bar{\boldsymbol{\theta}})]^{-1}.
\end{equation}
Multiplying both sides of  \eqref{taylor_series_Q'_2}) by the $6 \times 6$ diagonal matrix $\textbf{D} = diag(\frac{1}{\sqrt{n}}, \frac{1}{\sqrt{n}}, \frac{1}{n\sqrt{n}}, \frac{1}{\sqrt{n}}, \frac{1}{\sqrt{n}}, \frac{1}{n^2\sqrt{n}})$, we get:
\begin{equation}\label{taylor_series_Q'_3}
(\hat{\boldsymbol{\theta}} - \boldsymbol{\theta}^0)\textbf{D}^{-1} = - \textbf{Q}'(\boldsymbol{\theta}^0)\textbf{D}[\textbf{D}\textbf{Q}''(\bar{\boldsymbol{\theta}})\textbf{D}]^{-1}.
\end{equation}
First, we will show that:
 \begin{equation}\label{Q'D_limit_dist}
\lim_{n \rightarrow \infty} \textbf{Q}'(\boldsymbol{\theta}^0)\textbf{D} \xrightarrow{d} N(0, 4c \sigma^2 \boldsymbol{\Sigma}).
\end{equation}
Here, 
\begin{equation}\label{Sigma_matrix}
\boldsymbol{\Sigma} = \begin{pmatrix}
\frac{1}{2} & 0 & \frac{B^0}{4} & 0 & 0 & 0 \\
0 & \frac{1}{2} & \frac{-A^0}{4} & 0 & 0 & 0 \\
\frac{B^0}{4} & \frac{-A^0}{4} & \frac{{A^0}^2 + {B^0}^2}{6} & 0 & 0 & 0 \\
0 & 0 & 0 & \frac{1}{2} & 0 & \frac{D^0}{6} \\
0 & 0 & 0 & 0 & \frac{1}{2} & \frac{-C^0}{6} \\
0 & 0 & 0 &  \frac{D^0}{6} &  \frac{-C^0}{6} &  \frac{{C^0}^2 + {D^0}^2}{10}
\end{pmatrix}
\end{equation}
\justify
To prove \eqref{Q'D_limit_dist}), we compute the elements of the $6 \times 1$ vector \\
$\textbf{Q}'(\boldsymbol{\theta}^0)\textbf{D} = \begin{pmatrix}
\frac{1}{\sqrt{n}}\frac{\partial Q(\boldsymbol{\theta})}{\partial A} & \frac{1}{\sqrt{n}}\frac{\partial Q(\boldsymbol{\theta})}{\partial B} & \frac{1}{n\sqrt{n}}\frac{\partial Q(\boldsymbol{\theta})}{\partial \alpha} & \frac{1}{\sqrt{n}}\frac{\partial Q(\boldsymbol{\theta})}{\partial C} & \frac{1}{\sqrt{n}}\frac{\partial Q(\boldsymbol{\theta})}{\partial D} & \frac{1}{n^2\sqrt{n}}\frac{\partial Q(\boldsymbol{\theta})}{\partial \beta}
\end{pmatrix}$ as follows:
\begin{flalign*}
& \frac{1}{\sqrt{n}}\frac{\partial Q(\boldsymbol{\theta})}{\partial A}  = \frac{-2}{\sqrt{n}} \sum_{t=1}^{n} \bigg(y(t) - A \cos(\alpha t) - B \sin(\alpha t) - C \cos(\beta t^2) - D \sin(\beta t^2)\bigg)\cos(\alpha t)&\\
& \Rightarrow  \frac{1}{\sqrt{n}}\frac{\partial Q(\boldsymbol{\theta}^0)}{\partial A} =  \frac{-2}{\sqrt{n}} \sum_{t=1}^{n} X(t)\cos(\alpha^0 t).&
%& \textmd{Similary, } \frac{1}{\sqrt{n}}\frac{\partial Q(\boldsymbol{\theta}^0)}{\partial B} =  \frac{-2}{\sqrt{n}}\sum_{t=1}^{n} X(t)\sin(\alpha^0 t), \quad \frac{1}{n\sqrt{n}}\frac{\partial Q(\boldsymbol{\theta}^0)}{\partial \alpha} =  \frac{-2}{n\sqrt{n}} \sum_{t=1}^{n} t X(t)\bigg(-A^0\sin(\alpha^0 t) + B^0 \cos(\alpha^0 t)\bigg), & \\
%& \frac{1}{\sqrt{n}}\frac{\partial Q(\boldsymbol{\theta}^0)}{\partial C} =  \frac{-2}{\sqrt{n}}\sum_{t=1}^{n} X(t)\cos(\beta^0 t^2), \quad \frac{1}{\sqrt{n}}\frac{\partial Q(\boldsymbol{\theta}^0)}{\partial D} =  \frac{-2}{\sqrt{n}}\sum_{t=1}^{n} X(t)\sin(\beta^0 t^2), \textmd{ and } & \\
%&\frac{1}{n\sqrt{n}}\frac{\partial Q(\boldsymbol{\theta}^0)}{\partial \beta} =  \frac{-2}{n^2\sqrt{n}} \sum_{t=1}^{n} t^2 X(t)\bigg(-C^0\sin(\beta^0 t^2) + D^0 \cos(\beta^0 t^2)\bigg)&
\end{flalign*}
Similarly, the rest of the elements can be computed and we get:  
\begin{equation*}
\textbf{Q}'(\boldsymbol{\theta}^0)\textbf{D} = \begin{pmatrix}
\frac{-2}{\sqrt{n}} \sum\limits_{t=1}^{n} X(t)\cos(\alpha^0 t) \\
\frac{-2}{\sqrt{n}} \sum\limits_{t=1}^{n} X(t)\sin(\alpha^0 t)\\
\frac{-2}{n\sqrt{n}} \sum\limits_{t=1}^{n} t X(t)\big(-A^0\sin(\alpha^0 t) + B^0 \cos(\alpha^0 t)\big)\\
\frac{-2}{\sqrt{n}}\sum\limits_{t=1}^{n} X(t)\cos(\beta^0 t^2)\\
\frac{-2}{\sqrt{n}}\sum\limits_{t=1}^{n} X(t)\sin(\beta^0 t^2)\\
\frac{-2}{n^2\sqrt{n}} \sum\limits_{t=1}^{n} t^2 X(t)\big(-C^0\sin(\beta^0 t^2) + D^0 \cos(\beta^0 t^2)\big)
\end{pmatrix}.
\end{equation*}
Now using the Central Limit Theorem (CLT) of stochastic processes (see Fuller \cite{2009}, the above vector tends to a 6-variate Gaussian distribution with mean 0 and variance $4 c \sigma^2 \boldsymbol{\Sigma}$ and hence \eqref{Q'D_limit_dist} holds true.
Now we consider the second derivative matrix $\textbf{D}\textbf{Q}''(\bar{\boldsymbol{\theta}})\textbf{D}$. Note that, since $\hat{\boldsymbol{\theta}} \xrightarrow{a.s.} \boldsymbol{\theta}^0$ as $n \rightarrow \infty$ and $\bar{\boldsymbol{\theta}}$ is a point between $\hat{\boldsymbol{\theta}}$ and $\boldsymbol{\theta}^0$, 
$$\lim\limits_{n \rightarrow \infty} \textbf{D}\textbf{Q}''(\bar{\boldsymbol{\theta}})\textbf{D} = \lim\limits_{n \rightarrow \infty} \textbf{D}\textbf{Q}''(\boldsymbol{\theta}^0)\textbf{D}.$$  
Using lemmas \ref{preliminary_result_1}, \ref{preliminary_result_2}, \ref{preliminary_result_3} and \ref{preliminary_result_4} and after some calculations, it can be shown that:
\begin{equation}\label{DQ''D_limit}
\textbf{D}\textbf{Q}''(\boldsymbol{\theta}^0)\textbf{D} = 2\boldsymbol{\Sigma},
\end{equation}
where $\boldsymbol{\Sigma}$ is as defined in  \eqref{Sigma_matrix}. On combining, \eqref{taylor_series_Q'_3},\eqref{Q'D_limit_dist} and \eqref{DQ''D_limit}, the desired result follows.\\ \qed
\subsection{Proofs of the asymptotic properties of the sequential LSEs}\label{appendix:C2}
Following lemmas are required to prove the consistency of the sequential LSEs:
\begin{lemma}\label{lemma_condition_consistency_sinusoid_1}
Let us define the set $M_c = \{\boldsymbol{\theta}^{(1)}: |\boldsymbol{\theta}^{(1)} - {\boldsymbol{\theta}^0}^{(1)}| \geqslant3 c; \boldsymbol{\theta}^{(1)} \in \boldsymbol{\Theta}^{(1)}\}$. If the following holds true:
\begin{equation}\label{condition_for_consistency_sinusoid_1}
\liminf \inf\limits_{M_c} \frac{1}{n} (Q_1(\boldsymbol{\theta}^{(1)}) - Q_1({\boldsymbol{\theta}^0}^{(1)})) > 0 \textmd{ a.s. }
\end{equation}
then $\tilde{\boldsymbol{\theta}}^{(1)} \xrightarrow{a.s.} {\boldsymbol{\theta}^0}^{(1)}$ as $n \rightarrow \infty$ 
\end{lemma}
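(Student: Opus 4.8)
The plan is to reproduce, in the one-block setting of Step~1, the subsequence-and-contradiction argument already carried out for the full LSE in Lemma~\ref{lemma_condition_consistency}. First I would write $\tilde{\boldsymbol{\theta}}^{(1)}$ as $\tilde{\boldsymbol{\theta}}^{(1)}_n$ to emphasise its dependence on the sample size, and suppose, towards a contradiction, that $\tilde{\boldsymbol{\theta}}^{(1)}_n \nrightarrow {\boldsymbol{\theta}^0}^{(1)}$. Then there exists a subsequence $\{n_k\}$ along which $\tilde{\boldsymbol{\theta}}^{(1)}_{n_k}$ stays bounded away from ${\boldsymbol{\theta}^0}^{(1)}$. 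The single fact I would use repeatedly is that, by construction, $\tilde{\boldsymbol{\theta}}^{(1)}_{n_k}$ minimises $Q_1$, so that $Q_1(\tilde{\boldsymbol{\theta}}^{(1)}_{n_k}) \leqslant Q_1({\boldsymbol{\theta}^0}^{(1)})$ for every $k$; every case below is closed by contradicting this inequality.

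Next I would split according to the behaviour of the estimated amplitudes $\tilde{A}_{n_k}, \tilde{B}_{n_k}$. In the first case $|\tilde{A}_{n_k}| + |\tilde{B}_{n_k}|$ is unbounded, and I would argue that $\tfrac{1}{n_k} Q_1(\tilde{\boldsymbol{\theta}}^{(1)}_{n_k}) \to \infty$ while $\tfrac{1}{n_k} Q_1({\boldsymbol{\theta}^0}^{(1)})$ remains finite, which violates the minimising inequality. In the second case the amplitudes are bounded, so for some $c>0$ the whole estimate lies in $M_c$ for all large $k$; then the hypothesis~\eqref{condition_for_consistency_sinusoid_1} bounds $\tfrac{1}{n_k}\bigl(Q_1(\tilde{\boldsymbol{\theta}}^{(1)}_{n_k}) - Q_1({\boldsymbol{\theta}^0}^{(1)})\bigr)$ below by a positive constant, again contradicting $Q_1(\tilde{\boldsymbol{\theta}}^{(1)}_{n_k}) \leqslant Q_1({\boldsymbol{\theta}^0}^{(1)})$. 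Since both cases are impossible, the supposition fails and $\tilde{\boldsymbol{\theta}}^{(1)}_n \xrightarrow{a.s.} {\boldsymbol{\theta}^0}^{(1)}$.

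The main obstacle I anticipate is the first case, namely showing rigorously that unbounded amplitudes force $\tfrac{1}{n_k} Q_1 \to \infty$. This rests on a coercivity estimate: the quadratic form $\tfrac{1}{n}(\boldsymbol{Y} - \boldsymbol{Z}^{(1)}(\alpha)\boldsymbol{\mu}^{(1)})^{T}(\boldsymbol{Y} - \boldsymbol{Z}^{(1)}(\alpha)\boldsymbol{\mu}^{(1)})$ genuinely grows in $\|\boldsymbol{\mu}^{(1)}\|$, which follows once $\tfrac{1}{n}\boldsymbol{Z}^{(1)}(\alpha)^{T}\boldsymbol{Z}^{(1)}(\alpha)$ is seen to stay nondegenerate for $\alpha$ in the interior of $(0,\pi)$, a direct consequence of Lemma~\ref{preliminary_result_1}. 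One further point worth noting is that $Q_1({\boldsymbol{\theta}^0}^{(1)})$ still contains the leftover chirp term $C^0\cos(\beta^0 t^2)+D^0\sin(\beta^0 t^2)$ together with the noise, since Step~1 does not remove the chirp component; nonetheless $\tfrac{1}{n}Q_1({\boldsymbol{\theta}^0}^{(1)})$ remains bounded almost surely under Assumption~\ref{assump:1}, so its finiteness needed above is guaranteed. Apart from this coercivity check, the argument is structurally identical to Lemma~\ref{lemma_condition_consistency}, restricted to the sinusoidal block $(A,B,\alpha)$.
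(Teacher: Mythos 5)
Your proposal is correct and follows essentially the same route as the paper, which simply states that the lemma "can be proved by contradiction along the same lines as Lemma~\ref{lemma_condition_consistency}" --- i.e., the subsequence argument with the bounded/unbounded amplitude dichotomy and the minimising inequality $Q_1(\tilde{\boldsymbol{\theta}}^{(1)}_{n_k}) \leqslant Q_1({\boldsymbol{\theta}^0}^{(1)})$ that you reproduce. Your additional remark that $Q_1({\boldsymbol{\theta}^0}^{(1)})$ still contains the residual chirp term but remains $O(n)$ almost surely is a sensible clarification that the paper leaves implicit.
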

\begin{proof}
This can be proved by contradiction along the same lines as Lemma \ref{lemma_condition_consistency}.\\
\end{proof}

\begin{lemma}\label{lemma_condition_consistency_chirp_1}
Let us define the set $N_c = \{\boldsymbol{\theta}^{(2)} : \boldsymbol{\theta}^{(2)} \in \boldsymbol{\Theta}^{(2)} ;\ |\boldsymbol{\theta}^{(2)} - {\boldsymbol{\theta}^0}^{(2)}| \geqslant 3c\}.$ If for any $c>0$, 
\begin{equation}\label{condition_for_consistency_chirp_1}
\liminf \inf\limits_{\boldsymbol{\theta}^{(2)} \in N_c} \frac{1}{n} (Q_2(\boldsymbol{\theta}^{(2)}) - Q_2({\boldsymbol{\theta}^0}^{(2)})) > 0 \textmd{ a.s. }
\end{equation}
then $\tilde{\boldsymbol{\theta}}^{(2)} \xrightarrow{a.s.} {\boldsymbol{\theta}^0}^{(2)}$ as $n \rightarrow \infty.$  
\end{lemma}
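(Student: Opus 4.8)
The plan is to reproduce the contradiction argument of Lemma~\ref{lemma_condition_consistency} (the same one invoked for Lemma~\ref{lemma_condition_consistency_sinusoid_1}), now with $Q_2$, $\tilde{\boldsymbol{\theta}}^{(2)}$, $N_c$ and the chirp amplitudes $\tilde{C},\tilde{D}$ playing the roles of $Q$, $\hat{\boldsymbol{\theta}}$, $S_c$ and the amplitudes $\hat A,\hat B,\hat C,\hat D$. First I would write $\tilde{\boldsymbol{\theta}}^{(2)}$ as $\tilde{\boldsymbol{\theta}}^{(2)}_n$ to emphasise its dependence on the sample size and suppose, toward a contradiction, that $\tilde{\boldsymbol{\theta}}^{(2)}_n \nrightarrow {\boldsymbol{\theta}^0}^{(2)}$. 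Then there is a subsequence $\{n_k\}$ along which $\tilde{\boldsymbol{\theta}}^{(2)}_{n_k} \nrightarrow {\boldsymbol{\theta}^0}^{(2)}$, and I would split into two exhaustive cases according to whether $|\tilde{C}_{n_k}| + |\tilde{D}_{n_k}|$ stays bounded.

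In the first case $|\tilde{C}_{n_k}| + |\tilde{D}_{n_k}|$ is unbounded, so at least one of $|\tilde{C}_{n_k}|$, $|\tilde{D}_{n_k}|$ diverges, which forces $\tfrac{1}{n_k}Q_2(\tilde{\boldsymbol{\theta}}^{(2)}_{n_k}) \to \infty$; since $\tfrac{1}{n_k}Q_2({\boldsymbol{\theta}^0}^{(2)})$ remains finite almost surely, this contradicts the minimising inequality $Q_2(\tilde{\boldsymbol{\theta}}^{(2)}_{n_k}) \leqslant Q_2({\boldsymbol{\theta}^0}^{(2)})$, which holds because $\tilde{\boldsymbol{\theta}}^{(2)}_{n_k}$ minimises $Q_2$. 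In the second case the amplitudes are bounded, so the non-convergent subsequence must satisfy $|\tilde{\boldsymbol{\theta}}^{(2)}_{n_k} - {\boldsymbol{\theta}^0}^{(2)}| \geqslant 3c$ for some $c>0$ and all $k$, i.e.\ $\tilde{\boldsymbol{\theta}}^{(2)}_{n_k} \in N_c$; but then the hypothesis \eqref{condition_for_consistency_chirp_1} again contradicts the minimising inequality. Ruling out both cases yields $\tilde{\boldsymbol{\theta}}^{(2)}_n \xrightarrow{a.s.} {\boldsymbol{\theta}^0}^{(2)}$.

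The argument is purely structural, so I do not expect a genuine obstacle; the only point needing care is the finiteness of $\tfrac{1}{n_k}Q_2({\boldsymbol{\theta}^0}^{(2)})$. Unlike the sinusoidal step, $Q_2$ is computed on the residual vector $\boldsymbol{Y}_1 = \boldsymbol{Y} - \boldsymbol{Z}^{(1)}(\tilde{\alpha})\tilde{\boldsymbol{\mu}}^{(1)}$ rather than on the raw observations, so this finiteness rests on the already-established strong consistency of the first-stage estimator $\tilde{\boldsymbol{\theta}}^{(1)}$ (Theorem~\ref{consistency_sequential_LSEs_one_comp}(a)): the sinusoidal part is removed up to a negligible error, so $\boldsymbol{Y}_1$ behaves asymptotically like the genuine chirp-plus-noise signal and the required bound follows. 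Confirming that this carried-over first-stage residual does not disturb the boundedness and growth estimates is the only bookkeeping required; the combinatorial skeleton is otherwise identical to Lemma~\ref{lemma_condition_consistency}.
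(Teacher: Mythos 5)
Your proposal is correct and follows essentially the same route as the paper, which proves this lemma by the identical contradiction argument used for Lemma~\ref{lemma_condition_consistency}, merely substituting $Q_2$, $N_c$ and $\tilde{\boldsymbol{\theta}}^{(2)}$ for $Q$, $S_c$ and $\hat{\boldsymbol{\theta}}$. Your additional remark that the finiteness of $\tfrac{1}{n_k}Q_2({\boldsymbol{\theta}^0}^{(2)})$ rests on the first-stage consistency (since $Q_2$ is built from the residual vector $\boldsymbol{Y}_1$) is a point the paper leaves implicit, and it is handled correctly.
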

\begin{proof}
This can be proved by contradiction along the same lines as Lemma \ref{lemma_condition_consistency}.\\
\end{proof}

\justify
\textit{Proof of Theorem \ref{consistency_sequential_LSEs_one_comp}:}
First we prove the consistency of the parameter estimates of the sinusoid component, $\tilde{\boldsymbol{\theta}}^{(1)}$. For this, consider the difference:
\begin{flalign*}
&\frac{1}{n}(Q_1(\boldsymbol{\theta}^{(1)}) - Q_1({\boldsymbol{\theta}^0}^{(1)}))&\\
& = \frac{1}{n}\Bigg[\sum_{t=1}^{n}\bigg(y(t) - A\cos(\alpha t) - B\sin(\alpha t)\bigg)^2 - \bigg(y(t) - A^0 \cos(\alpha^0 t) - B^0 \sin(\alpha^0 t)\bigg)^2 \Bigg] &\\
&= \frac{1}{n} \sum_{t=1}^n\bigg(A^0 \cos(\alpha^0 t)  - A\cos(\alpha t) +  B^0 \sin(\alpha^0 t) - B \sin(\alpha t) + C^0 \cos(\beta^0 t^2) + D^0 \sin(\beta^0 t^2) + X(t)\bigg)^2&\\
&\quad - \frac{1}{n}\sum_{t=1}^{n}\bigg(C^0 \cos(\beta^0 t^2) + D^0 \sin(\beta^0 t^2) + X(t)\bigg)^2& \\
& =  \frac{1}{n} \sum_{t=1}^n\bigg(A^0 \cos(\alpha^0 t) +  B^0 \sin(\alpha^0 t) - A\cos(\alpha t) - B \sin(\alpha t) \bigg)^2&\\
& \quad + \frac{2}{n}\sum_{t=1}^n(C^0 \cos(\beta^0 t^2)+ D^0 \sin(\beta^0 t^2) + X(t) \bigg)\bigg(A^0 \cos(\alpha^0 t) +  B^0 \sin(\alpha^0 t) - A\cos(\alpha t) - B \sin(\alpha t)\bigg)&\\
&= f_1(\boldsymbol{\theta}^{(1)}) + g_1(\boldsymbol{\theta}^{(1)}).&
\end{flalign*} 
Here, $$f_1(\boldsymbol{\theta}^{(1)}) =  \frac{1}{n} \sum_{t=1}^n\bigg(A^0 \cos(\alpha^0 t) +  B^0 \sin(\alpha^0 t) - A\cos(\alpha t) - B \sin(\alpha t) \bigg)^2 \textmd{ and, }$$ $$g_1(\boldsymbol{\theta}^{(1)}) = \frac{2}{n} \sum_{t=1}^{n}\bigg(C^0 \cos(\beta^0 t^2)+ D^0 \sin(\beta^0 t^2) + X(t) \bigg)\bigg(A^0 \cos(\alpha^0 t) +  B^0 \sin(\alpha^0 t) - A\cos(\alpha t) - B \sin(\alpha t)\bigg)$$
Now using lemmas \ref{preliminary_result_3} and \ref{preliminary_result_4}, it is easy to see that:
\begin{equation*}
\sup\limits_{\boldsymbol{\theta} \in M_c} |g_1(\boldsymbol{\theta}^{(1)})| \xrightarrow{a.s.} 0.
\end{equation*}
Thus if we prove that $\liminf \inf\limits_{M_c}f_1(\boldsymbol{\theta}^{(1)}) > 0$ a.s., it will follow that $\liminf \inf\limits_{M_c} \frac{1}{n} (Q_1(\boldsymbol{\theta}^{(1)}) - Q_1({\boldsymbol{\theta}^0}^{(1)})) > 0 \textmd{ a.s. }$. 
First consider the set $M_c = \{\boldsymbol{\theta}^{(1)}: |\boldsymbol{\theta}^{(1)} - {\boldsymbol{\theta}^0}^{(1)}| \geqslant 3c; \boldsymbol{\theta}^{(1)} \in \boldsymbol{\Theta}^{(1)}\}$. It is evident that: $$M_c \subset M_c^{(1)} \cup M_c^{(2)} \cup M_c^{(3)},$$ where $M_c^{(1)} = \{\boldsymbol{\theta}^{(1)}: |A - A^0| \geqslant c; \boldsymbol{\theta}^{(1)} \in \boldsymbol{\Theta}^{(1)}\}$, $M_c^{(2)} = \{\boldsymbol{\theta}^{(1)}: |B - B^0| \geqslant c; \boldsymbol{\theta}^{(1)} \in \boldsymbol{\Theta}^{(1)}\}$ and $M_c^{(3)} = \{\boldsymbol{\theta}^{(1)}: |\alpha - \alpha^0| \geqslant c; \boldsymbol{\theta}^{(1)} \in \boldsymbol{\Theta}^{(1)}\}.$
Now we further split the set $M_c^{(1)}$ which can be written as: $M_c^{(1)_{1}} \cup M_c^{(1)_{2}}$, where 
\begin{flalign*}
& M_c^{(1)_{1}} = \{\boldsymbol{\theta}^{(1)}: |A - A^0| \geqslant c; \boldsymbol{\theta}^{(1)} \in \boldsymbol{\Theta}^{(1)}; \alpha = \alpha^0\} \textmd{ and } M_c^{(1)_{2}} = \{\boldsymbol{\theta}^{(1)}: |A - A^0| \geqslant c; \boldsymbol{\theta}^{(1)} \in \boldsymbol{\Theta}^{(1)}; \alpha \neq \alpha^0\}&\\
& \textmd{Consider, } \liminf \inf\limits_{M_c^{(1)_{1}}}f_1(\boldsymbol{\theta}^{(1)}) = \liminf \inf\limits_{M_c^{(1)_{1}}}  \frac{1}{n} \sum_{t=1}^n\bigg(A^0 \cos(\alpha^0 t) +  B^0 \sin(\alpha^0 t) - A\cos(\alpha t) - B \sin(\alpha t) \bigg)^2& \\
& = \frac{(A^0 - A)^2}{2} + \frac{(B^0 - B)^2}{2} > 0\textmd{ a.s.} \textmd{ (using Lemma \ref{preliminary_result_1}).}&\\
&\textmd{Again, using Lemma \ref{preliminary_result_1}, } \liminf \inf\limits_{M_c^{(1)_{2}}}  \frac{1}{n} \sum_{t=1}^n\bigg(A^0 \cos(\alpha^0 t) +  B^0 \sin(\alpha^0 t) - A\cos(\alpha t) - B \sin(\alpha t) \bigg)^2 & \\
&  \qquad \qquad \qquad \qquad  \qquad \qquad \qquad =\  \frac{{A^0}^2}{2} + \frac{{B^0}^2}{2} + \frac{A^2}{2} + \frac{B^2}{2} > 0 \textmd{  a.s.} .&
\end{flalign*}
Similarly, it can be shown that $\liminf \inf\limits_{M_c^{(2)}}f_1(\boldsymbol{\theta}^{(1)}) > 0$ a.s. and $\liminf \inf\limits_{M_c^{(3)}}f_1(\boldsymbol{\theta}^{(1)}) > 0$ a.s.. Now using Lemma \ref{lemma_condition_consistency_sinusoid_1}, $\tilde{A}$, $\tilde{B}$ and $\tilde{\alpha}$ are strongly consistent estimators of $A^0$, $B^0$ and $\alpha^0$ respectively. To prove the consistency of the chirp parameter sequential estimates, $\tilde{C}$, $\tilde{D}$ and $\tilde{\beta}$, we need the following lemma:
\begin{lemma}\label{rate_of_convergence_1}
If assumptions 1,2 and 3 are satisfied, then:
$$(\tilde{\boldsymbol{\theta}}^{(1)} - {\boldsymbol{\theta}^0}^{(1)})(\sqrt{n}\textbf{D}_1)^{-1} \xrightarrow{a.s.} 0.$$
Here, $\textbf{D}_1 = diag(\frac{1}{\sqrt{n}}, \frac{1}{\sqrt{n}}, \frac{1}{n\sqrt{n}})$.
\end{lemma}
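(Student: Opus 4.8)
The plan is to use the first-order optimality condition for $\tilde{\boldsymbol{\theta}}^{(1)}$ together with a mean-value (Taylor) expansion, exactly as in the proof of Theorem~\ref{asymptotic_dist_LSEs_one_component}, but to carry along one extra factor of $n^{-1/2}$; it is precisely this extra factor that converts the $O_p(1)$ behaviour underlying convergence in distribution into convergence to $\mathbf{0}$ almost surely. Since $\tilde{\boldsymbol{\theta}}^{(1)}$ minimises $Q_1$, we have $\textbf{Q}_1'(\tilde{\boldsymbol{\theta}}^{(1)}) = \mathbf{0}$, and expanding componentwise about ${\boldsymbol{\theta}^0}^{(1)}$ gives
$$\mathbf{0} = \textbf{Q}_1'({\boldsymbol{\theta}^0}^{(1)}) + (\tilde{\boldsymbol{\theta}}^{(1)} - {\boldsymbol{\theta}^0}^{(1)})\textbf{Q}_1''(\bar{\boldsymbol{\theta}}^{(1)}),$$
with $\bar{\boldsymbol{\theta}}^{(1)}$ lying between $\tilde{\boldsymbol{\theta}}^{(1)}$ and ${\boldsymbol{\theta}^0}^{(1)}$. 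Post-multiplying by $\textbf{D}_1$, inserting $\textbf{D}_1^{-1}\textbf{D}_1$, and using the elementary identity $\textbf{D}_1(\sqrt{n}\textbf{D}_1)^{-1} = n^{-1/2}\textbf{I}$, I would rearrange this into
$$(\tilde{\boldsymbol{\theta}}^{(1)} - {\boldsymbol{\theta}^0}^{(1)})(\sqrt{n}\textbf{D}_1)^{-1} = -\Big[\tfrac{1}{\sqrt{n}}\,\textbf{Q}_1'({\boldsymbol{\theta}^0}^{(1)})\textbf{D}_1\Big]\big[\textbf{D}_1\textbf{Q}_1''(\bar{\boldsymbol{\theta}}^{(1)})\textbf{D}_1\big]^{-1}.$$
It then suffices to prove that the first bracket tends to $\mathbf{0}$ a.s. while the second bracket converges a.s. to a fixed nonsingular matrix.

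For the first bracket, each entry of $\textbf{Q}_1'({\boldsymbol{\theta}^0}^{(1)})\textbf{D}_1$ is twice a noise sum of the form $\frac{1}{\sqrt{n}}\sum t^{k}X(t)(\cdots)$ ($k=0$ for the amplitudes, $k=1$ for the frequency) plus deterministic cross terms between the sinusoid and the leftover chirp $C^0\cos(\beta^0 t^2)+D^0\sin(\beta^0 t^2)$ still present in the residual $y(t)-A^0\cos(\alpha^0 t)-B^0\sin(\alpha^0 t)$. The extra $n^{-1/2}$ pushes every noise sum down to the strong-law scale $\frac{1}{n^{k+1}}\sum_{t=1}^{n} t^{k}X(t)e^{i\phi t}$, which tends to $0$ a.s. by Lemma~\ref{preliminary_result_4}, while the sinusoid/chirp cross terms vanish by Lemmas~\ref{preliminary_result_3} and~\ref{preliminary_result_5}. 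Hence $\tfrac{1}{\sqrt{n}}\textbf{Q}_1'({\boldsymbol{\theta}^0}^{(1)})\textbf{D}_1 \xrightarrow{a.s.} \mathbf{0}$; this is the one step that genuinely exploits the extra factor, since without it one only recovers the CLT-scale $O_p(1)$ limit of Theorem~\ref{asymptotic_dist_sequential_LSEs_one_comp}.

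For the second bracket I would use the already-established consistency, Theorem~\ref{consistency_sequential_LSEs_one_comp} (so that $\bar{\boldsymbol{\theta}}^{(1)}\xrightarrow{a.s.}{\boldsymbol{\theta}^0}^{(1)}$), to argue that $\textbf{D}_1\textbf{Q}_1''(\bar{\boldsymbol{\theta}}^{(1)})\textbf{D}_1 \xrightarrow{a.s.} 2\boldsymbol{\Sigma}^{(1)}$, the nonsingular sinusoid block of the matrix in~\eqref{DQ''D_limit}; this computation is identical to the Hessian calculation in the proof of Theorem~\ref{asymptotic_dist_LSEs_one_component} and rests on Lemmas~\ref{preliminary_result_1}--\ref{preliminary_result_3} (Lemma~\ref{preliminary_result_3} in particular annihilating the chirp contribution to the residual). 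Combining the two brackets, the right-hand side is a product of a factor tending to $\mathbf{0}$ a.s. and a factor tending to the fixed matrix $(2\boldsymbol{\Sigma}^{(1)})^{-1}$, so the whole expression tends to $\mathbf{0}$ a.s., which is exactly the assertion $(\tilde{\boldsymbol{\theta}}^{(1)} - {\boldsymbol{\theta}^0}^{(1)})(\sqrt{n}\textbf{D}_1)^{-1}\xrightarrow{a.s.}0$.

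I expect the delicate point to be the second bracket, because $\textbf{Q}_1''$ is evaluated at the random, $n$-dependent intermediate point $\bar{\boldsymbol{\theta}}^{(1)}$, so $\textbf{D}_1\textbf{Q}_1''(\bar{\boldsymbol{\theta}}^{(1)})\textbf{D}_1 \to 2\boldsymbol{\Sigma}^{(1)}$ is not a mere pointwise application of the trigonometric lemmas: a term such as $\frac{1}{n^{3}}\sum_{t=1}^{n} t^{2}\cos(2\bar{\alpha}t)$ must still vanish even though $\bar{\alpha}$ drifts with $n$. The clean way to secure this is either to invoke a locally-uniform-in-$\phi$ version of Lemmas~\ref{preliminary_result_1}--\ref{preliminary_result_2} on a shrinking neighbourhood of $\alpha^0$, or to bound the difference $\textbf{D}_1[\textbf{Q}_1''(\bar{\boldsymbol{\theta}}^{(1)}) - \textbf{Q}_1''({\boldsymbol{\theta}^0}^{(1)})]\textbf{D}_1$ directly via a mean-value estimate combined with the consistency of $\bar{\boldsymbol{\theta}}^{(1)}$; the remaining steps are routine.
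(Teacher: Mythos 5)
Your proposal follows essentially the same route as the paper's proof: the same Taylor expansion of $\textbf{Q}_1'$ about ${\boldsymbol{\theta}^0}^{(1)}$, the same factorisation into $[-\tfrac{1}{\sqrt{n}}\textbf{Q}_1'({\boldsymbol{\theta}^0}^{(1)})\textbf{D}_1][\textbf{D}_1\textbf{Q}_1''(\bar{\boldsymbol{\theta}}^{(1)})\textbf{D}_1]^{-1}$, with the first factor killed almost surely by Lemmas~\ref{preliminary_result_3} and~\ref{preliminary_result_4}(a) and the second converging to $2\boldsymbol{\Sigma}_1$. The "delicate point" you flag about the Hessian being evaluated at the drifting intermediate point $\bar{\boldsymbol{\theta}}^{(1)}$ is real but is simply asserted away in the paper's proof, so your treatment is, if anything, slightly more careful.
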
 
\begin{proof}
Consider the error sum of squares:
$Q_1(\boldsymbol{\theta}) = \frac{1}{n}\sum\limits_{t=1}^n(y(t) - A \cos(\alpha t) - B \sin(\alpha t))^2.$\\
By Taylor series expansion of $\textbf{Q}_1'(\tilde{\boldsymbol{\theta}}^{(1)})$ around the point ${\boldsymbol{\theta}^0}^{(1)}$, we get:
\begin{equation}\label{Q'_1_taylor}
\textbf{Q}_1'(\tilde{\boldsymbol{\theta}}^{(1)}) - \textbf{Q}_1'({\boldsymbol{\theta}^0}^{(1)}) = (\tilde{\boldsymbol{\theta}}^{(1)} - {\boldsymbol{\theta}^0}^{(1)})\textbf{Q}_1''(\bar{\boldsymbol{\theta}}^{(1)})
\end{equation}
where, $\bar{\boldsymbol{\theta}}^{(1)}$ is a point lying between $\tilde{\boldsymbol{\theta}}^{(1)}$ and ${\boldsymbol{\theta}^0}^{(1)}$.
Since, $\tilde{\boldsymbol{\theta}}^{(1)}$ minimises $Q_1(\boldsymbol{\theta})$, it implies that $\textbf{Q}_1'(\tilde{\boldsymbol{\theta}}^{(1)}) = 0$ and therefore \eqref{Q'_1_taylor} can be written as:
\begin{flalign}
\label{taylor_series_Q1} &(\tilde{\boldsymbol{\theta}}^{(1)} - {\boldsymbol{\theta}^0}^{(1)}) = - \textbf{Q}_1'({\boldsymbol{\theta}^0}^{(1)})[\textbf{Q}_1''(\bar{\boldsymbol{\theta}}^{(1)})]^{-1}&\\ 
\label{taylor_series_Q1_D} & \Rightarrow (\tilde{\boldsymbol{\theta}}^{(1)} -{\boldsymbol{\theta}^0}^{(1)})(\sqrt{n}\textbf{D}_1)^{-1} = [- \frac{1}{\sqrt{n}}\textbf{Q}_1'({\boldsymbol{\theta}^0}^{(1)})\textbf{D}_1][\textbf{D}_1 \textbf{Q}_1''(\bar{\boldsymbol{\theta}}^{(1)})\textbf{D}_1]^{-1}&
\end{flalign}
Now let us calculate the right hand side explicitly. First consider the first derivative vector $\frac{1}{\sqrt{n}}\textbf{Q}_1'({\boldsymbol{\theta}^0}^{(1)})\textbf{D}_1$.
$$\frac{1}{\sqrt{n}}\textbf{Q}_1'({\boldsymbol{\theta}^0}^{(1)})\textbf{D}_1 = \begin{pmatrix}
\frac{1}{n}\frac{\partial Q_1({\boldsymbol{\theta}^0}^{(1)})}{\partial A} & \frac{1}{n}\frac{\partial Q_1({\boldsymbol{\theta}^0}^{(1)})}{\partial B} & \frac{1}{n^2}\frac{\partial Q_1({\boldsymbol{\theta}^0}^{(1)})}{\partial \alpha}
\end{pmatrix}$$
By straight forward calculations and using lemmas \ref{preliminary_result_3} and \ref{preliminary_result_4}\textit{(a)}, one can easily see that:
\begin{equation}\label{Q1'_1_result}
\frac{1}{\sqrt{n}}\textbf{Q}_1'({\boldsymbol{\theta}^0}^{(1)})\textbf{D}_1 \rightarrow 0 \textmd{ a.s. }
\end{equation}
Now let us consider the second derivative matrix $\textbf{D}_1 \textbf{Q}_1''(\bar{\boldsymbol{\theta}}^{(1)})\textbf{D}_1$. Since $\tilde{\boldsymbol{\theta}}^{(1)} \xrightarrow{a.s.} {\boldsymbol{\theta}^0}^{(1)}$ and $\bar{\boldsymbol{\theta}}^{(1)}$ is a point between them, we have:
$$\textbf{D}_1 \textbf{Q}_1''(\bar{\boldsymbol{\theta}}^{(1)})\textbf{D}_1 = \lim_{n \rightarrow \infty} \textbf{D}_1 \textbf{Q}_1''({\boldsymbol{\theta}^0}^{(1)})\textbf{D}_1$$
Again by routine calculations and using lemmas \ref{preliminary_result_1}, \ref{preliminary_result_3} and \ref{preliminary_result_4}\textit{(a)} , one can evaluate each element of this $3 \times 3$ matrix, and get: 
\begin{equation}\label{Q1''_1_result}
\lim_{n \rightarrow \infty} \textbf{D}_1 \textbf{Q}_1''({\boldsymbol{\theta}^0}^{(1)})\textbf{D}_1 = 2 \boldsymbol{\Sigma}_1,
\end{equation}
where $\boldsymbol{\Sigma}_1  = \begin{pmatrix}
\frac{1}{2} & 0 & \frac{B^0}{4} \\
0 & \frac{1}{2} & \frac{-A^0}{4} \\
\frac{B^0}{4} & \frac{-A^0}{4} & \frac{{A^0}^2 + {B^0}^2}{6} \\
\end{pmatrix} > 0,$ a positive definite matrix. Hence combining \eqref{Q1'_1_result} and \eqref{Q1''_1_result}, we get the desired result.\\
\end{proof}
\justify
Using the above lemma, we get the following relationship between the sinusoid component of the model and its estimate:
\begin{equation}\label{relation_sinusoid_comp_and_estimate}
\tilde{A} \cos(\tilde{\alpha} t) + \tilde{B} \sin(\tilde{\alpha} t) = A^0 \cos(\alpha^0 t) + B^0 \sin(\alpha^0 t) + o(1)
\end{equation}
Now to prove the consistency of $\tilde{\boldsymbol{\theta}}^{(1)} = (\tilde{C}, \tilde{D}, \tilde{\beta})$, we consider the following difference:
\begin{flalign*}
&\frac{1}{n}(Q_2({\boldsymbol{\theta}}^{(2)}) - Q_2({\boldsymbol{\theta}^0}^{(2)}))&\\
& = \frac{1}{n}\Bigg[\sum_{t=1}^{n}\bigg(y_1(t) - C\cos(\beta t^2) - D\sin(\beta t^2)\bigg)^2 - \bigg(y_1(t) - C^0 \cos(\beta^0 t^2) - D^0 \sin(\beta^0 t^2)\bigg)^2 \Bigg] &\\
& =  \frac{1}{n} \sum_{t=1}^n\bigg(C^0 \cos(\beta^0 t^2) +  D^0 \sin(\beta^0 t^2) - C\cos(\beta t^2) - D \sin(\beta t^2) \bigg)^2&\\
& \quad + \frac{2}{n}\sum_{t=1}^n(A^0 \cos(\alpha^0 t)+ B^0 \sin(\alpha^0 t^2) + X(t) \bigg)\bigg(C^0 \cos(\beta^0 t^2) +  D^0 \sin(\beta^0 t^2) - C\cos(\beta t^2) - D \sin(\beta t^2)\bigg)&\\
&= f_2({\boldsymbol{\theta}}^{(2)}) + g_2({\boldsymbol{\theta}}^{(2)}). &
\end{flalign*}
Using lemmas \ref{preliminary_result_3} and \ref{preliminary_result_4}, we have 
\begin{equation*}
\sup\limits_{{\boldsymbol{\theta}} \in N_c} |g_2({\boldsymbol{\theta}}^{(2)})| \xrightarrow{a.s.} 0,
\end{equation*}
and using straight forward, but lengthy calculations and splitting the set $N_c$, similar to the splitting of set $M_c$, before, it can be shown that $\liminf\inf\limits_{\boldsymbol{\xi} \in N_c} f_2({\boldsymbol{\theta}}^{(2)}) > 0$.\\
$\textmd{Thus, }\tilde{\boldsymbol{\theta}}^{(2)} \xrightarrow{a.s.} {\boldsymbol{\theta}^0}^{(2)} \textmd{ as } n \rightarrow \infty \textmd{ by Lemma \ref{lemma_condition_consistency_chirp_1}.}$ Hence, the result. \\ \qed
\justify
\textit{Proof of Theorem \ref{asymptotic_dist_sequential_LSEs_one_comp}:} We first examine the asymptotic distribution of the sequential estimates of the sinusoid component, that is $\tilde{\boldsymbol{\theta}}^{(1)}$ From \ref{taylor_series_Q1}, we have:
$$(\tilde{\boldsymbol{\theta}}^{(1)} - {\boldsymbol{\theta}^0}^{(1)})\textbf{D}_1^{-1} = - \textbf{Q}_1'({\boldsymbol{\theta}^0}^{(1)})\textbf{D}_1[\textbf{D}_1\textbf{Q}_1''(\bar{\boldsymbol{\theta}}^{(1)})\textbf{D}_1]^{-1}.$$
First we show that $\textbf{Q}_1'({\boldsymbol{\theta}^0}^{(1)})\textbf{D}_1 \rightarrow N_3(0, 4 \sigma^2 c \boldsymbol{\Sigma}_1).$ We compute the elements of the derivative vector $\textbf{Q}_1'({\boldsymbol{\theta}^0}^{(1)})$ and using Lemma \ref{preliminary_result_5} \textit{(e), (f), (g)} and \textit{(h)}, we obtain:
\begin{equation}\label{Q1_D1}
\textbf{Q}_1'({\boldsymbol{\theta}^0}^{(1)})\textbf{D}_1 \overset{a.eq.}{=} -2 \begin{pmatrix}
\frac{1}{\sqrt{n}}\sum\limits_{t=1}^{n} X(t) \cos(\alpha^0 t) \\
\frac{1}{\sqrt{n}}\sum\limits_{t=1}^{n} X(t) \sin(\alpha^0 t) \\
\frac{1}{n\sqrt{n}}\sum\limits_{t=1}^{n} t X(t)(-A_1^0\sin(\alpha^0 t) + B^0 \cos(\alpha^0 t)) 
\end{pmatrix}.
\end{equation}
Here, $\overset{a.eq.}{=}$ means asymptotically equivalent. Now again using CLT, the right hand side of \eqref{Q1_D1} tends to 3-variate Gaussian distribution with mean
0 and variance-covariance matrix, $4 \sigma^2 c \boldsymbol{\Sigma}_1.$ Using this and \eqref{Q1''_1_result}, we have the desired result.
\justify
Next we determine the asymptotic distribution of $\tilde{\boldsymbol{\theta}}^{(2)}.$ For this, we consider the error sum of squares, $Q_2(\boldsymbol{\theta}^{(2)})$ as defined in  \eqref{ess_2}. Let $\boldsymbol{Q}'_2({\boldsymbol{\theta}}^{(2)})$ be the first derivative vector and $\boldsymbol{Q}''_2(\boldsymbol{\theta}^{(2)})$, the second derivative matrix of $Q_2(\boldsymbol{\theta}^{(2)})$.Using multivariate  Taylor series expansion, we expand $\boldsymbol{Q}'_2(\tilde{\boldsymbol{\theta}}^{(2)})$ around the point ${\boldsymbol{\theta}^0}^{(2)}$, and get:
$$(\tilde{\boldsymbol{\theta}}^{(2)} - {\boldsymbol{\theta}^0}^{(2)}) = -\boldsymbol{Q}'_2({\boldsymbol{\theta}^0}^{(2)}) [\boldsymbol{Q}''_2(\bar{\boldsymbol{\theta}}^{(2)})]^{-1}.$$
Multiplying both sides by the matrix $\boldsymbol{D}_2^{-1}$, where $\boldsymbol{D}_2 = diag(\frac{1}{\sqrt{n}}, \frac{1}{\sqrt{n}},\frac{1}{n^2\sqrt{n}})$, we get:
$$(\tilde{\boldsymbol{\theta}}^{(2)} - {\boldsymbol{\theta}^0}^{(2)})\textbf{D}_2^{-1} = -\boldsymbol{Q}'_2({\boldsymbol{\theta}^0}^{(2)})\textbf{D}_2 [\textbf{D}_2\boldsymbol{Q}''_2(\bar{\boldsymbol{\theta}}^{(2)})\textbf{D}_2]^{-1}.$$ Now when we evaluate the first derivative vector $\boldsymbol{Q}'_2({\boldsymbol{\theta}^0}^{(2)})\textbf{D}_2$, we obtain (using Lemma \ref{preliminary_result_5} \textit{(a)}):
\begin{equation}\label{Q2'_D2}
\boldsymbol{Q}'_2({\boldsymbol{\theta}^0}^{(2)})\textbf{D}_2 \overset{a.eq.}{=} -2 \begin{pmatrix}
\frac{1}{\sqrt{n}}\sum\limits_{t=1}^{n} X(t) \cos(\beta^0 t^2) \\
\frac{1}{\sqrt{n}}\sum\limits_{t=1}^{n} X(t) \sin(\beta^0 t^2) \\
\frac{1}{n^2\sqrt{n}}\sum\limits_{t=1}^{n} t X(t)(-C^0\sin(\beta^0 t^2) + D^0 \cos(\beta^0 t^2)) 
\end{pmatrix}.
\end{equation}
Again using the CLT, the vector on the right hand side of  \eqref{Q2'_D2} tends to $N_3(0, 4\sigma^2 c \boldsymbol{\Sigma}_2),$ where
 $\boldsymbol{\Sigma}_2  = \begin{pmatrix}
\frac{1}{2} & 0 & \frac{D^0}{6} \\
0 & \frac{1}{2} & \frac{-C^0}{6} \\
\frac{D^0}{6} & \frac{-C^0}{6} & \frac{{C^0}^2 + {D^0}^2}{10} \\
\end{pmatrix} > 0.$ \\
Note that:
$$\lim_{n \rightarrow \infty}\textbf{D}_2\boldsymbol{Q}''_2(\bar{\boldsymbol{\theta}}^{(2)})\textbf{D}_2 = \lim_{n \rightarrow \infty}\textbf{D}_2\boldsymbol{Q}''_2({\boldsymbol{\theta}^0}^{(2)})\textbf{D}_2.$$
On computing the second derivative $3 \times 3$ matrix $\textbf{D}_2\boldsymbol{Q}''_2({\boldsymbol{\theta}^0}^{(2)})\textbf{D}_2$ and using lemmas \ref{preliminary_result_2}, \ref{preliminary_result_3} and \ref{preliminary_result_4} \textit{(b)}, we get:
\begin{equation}\label{D2_Q2''_D2}
\lim_{n \rightarrow \infty}\textbf{D}_2\boldsymbol{Q}''_2({\boldsymbol{\theta}^0}^{(2)})\textbf{D}_2 = 2\boldsymbol{\Sigma}_2.
\end{equation}
Combining results \eqref{Q2'_D2} and \eqref{D2_Q2''_D2}, we get the stated asymptotic distribution of $\tilde{\boldsymbol{\theta}}^{(2)}.$
 Hence, the result.\\ \qed
 
\section{Multiple Component Chirp-like model}\label{appendix:D} 
\subsection{Proofs of the asymptotic properties of the LSEs}\label{appendix:D1}
\textit{Proof of Theorm \ref{asymptotic_distribution_LSEs_multiple_comp}:} Consider the error sum of squares, defined in  \eqref{ess_multiple_component}. Let us denote $\textbf{Q}'(\boldsymbol{\vartheta})$ as the $3(p+q) \times 1$ first derivative vector and $\textbf{Q}''(\boldsymbol{\vartheta})$ as the  $3(p+q) \times 3(p+q)$ second derivative matrix.
Using multivariate Taylor series expansion, we have:
$$\textbf{Q}'(\hat{\boldsymbol{\vartheta}}) - \textbf{Q}'(\boldsymbol{\vartheta}^0) = (\hat{\boldsymbol{\vartheta}} - \boldsymbol{\vartheta}^0)\textbf{Q}''(\bar{\boldsymbol{\vartheta}}). $$ 
Here $\bar{\boldsymbol{\vartheta}}$ is a point between $\hat{\boldsymbol{\vartheta}}$ and $\boldsymbol{\vartheta}^0.$ Now using the fact that $\textbf{Q}'(\hat{\boldsymbol{\vartheta}}) = 0$ and multiplying both sides of the above equation by $\mathfrak{D}^{-1}$, we have:
$$(\hat{\boldsymbol{\vartheta}} - \boldsymbol{\vartheta}^0)\mathfrak{D}^{-1} = - \textbf{Q}'(\hat{\boldsymbol{\vartheta}})\mathfrak{D}[\mathfrak{D}\textbf{Q}''(\bar{\boldsymbol{\vartheta}})\mathfrak{D}]^{-1}.$$
Also note that, $(\hat{\boldsymbol{\vartheta}} - \boldsymbol{\vartheta}^0)\mathfrak{D}^{-1} = \bigg(({\hat{\boldsymbol{\theta}}_1}^{(1)} - {\boldsymbol{\theta}_1^0}^{(1)}), \cdots, ({\hat{\boldsymbol{\theta}}_p}^{(1)} - {\boldsymbol{\theta}_p^0}^{(1)}), ({\hat{\boldsymbol{\theta}}_{1}}^{(2)} - {\boldsymbol{\theta}_{1}^0}^{(2)}), \cdots, ({\hat{\boldsymbol{\theta}}_q}^{(2)} - {\boldsymbol{\theta}_q^0}^{(2)})  \bigg)\mathfrak{D}^{-1}.$\\
\justify
Now we evaluate the elements of the vector $\textbf{Q}'(\boldsymbol{\vartheta}^0)$ and the matrix $\textbf{Q}''(\bar{\boldsymbol{\vartheta}})$:
\begin{flalign*}
&\frac{\partial Q(\boldsymbol{\vartheta})}{\partial A_j}\bigg|_{\boldsymbol{\vartheta}^0} = -2 \sum_{t=1}^{n} X(t)\cos(\alpha_j^0 t),  \quad  \frac{\partial Q(\boldsymbol{\vartheta})}{\partial B_j}\bigg|_{\boldsymbol{\vartheta}^0} = -2 \sum_{t=1}^{n} X(t)\sin(\alpha_j^0 t), \textmd{ and} &\\
&\qquad \frac{\partial Q(\boldsymbol{\vartheta})}{\partial \alpha_j}\bigg|_{\boldsymbol{\vartheta}^0} = -2 \sum_{t=1}^{n} t X(t)\bigg(-A_j^0 \sin(\alpha_j^0 t) + B_j^0 \cos(\alpha_j^0 t)\bigg), \textmd{ for } j = 1, \cdots, p. & \\
& \textmd{Similarly, for } k = 1, \cdots, q,\ \frac{\partial Q(\boldsymbol{\vartheta})}{\partial C_k}\bigg|_{\boldsymbol{\vartheta}^0} = -2 \sum_{t=1}^{n} X(t)\cos(\beta_k^0 t^2),  \quad  \frac{\partial Q(\boldsymbol{\vartheta})}{\partial D_k}\bigg|_{\boldsymbol{\vartheta}^0} = -2 \sum_{t=1}^{n} X(t)\sin(\beta_k^0 t^2) \textmd{ and} &\\
&\qquad \frac{\partial Q(\boldsymbol{\vartheta})}{\partial \beta_k}\bigg|_{\boldsymbol{\vartheta}^0} = -2 \sum_{t=1}^{n} t^2 X(t)\bigg(-C_k^0 \sin(\beta_k^0 t^2) + D_k^0 \cos(\beta_k^0 t)\bigg).&
\end{flalign*}
\begin{flalign*}
&\frac{\partial^2 Q(\boldsymbol{\vartheta})}{\partial A_j^2}\bigg|_{\boldsymbol{\vartheta}^0}  = 2\sum_{t=1}^{n}\cos^2(\alpha_j^0 t), \ \frac{\partial^2 Q(\boldsymbol{\vartheta})}{\partial B_j^2}\bigg|_{\boldsymbol{\vartheta}^0}  = 2\sum_{t=1}^{n}\sin^2(\alpha_j^0 t),\ j = 1, \cdots, p,& \\ 
& \qquad \qquad \qquad  \frac{\partial^2 Q(\boldsymbol{\vartheta})}{\partial C_k^2}\bigg|_{\boldsymbol{\vartheta}^0}  = 2\sum_{t=1}^{n}\cos^2(\beta_k^0 t^2) \textmd{ and}  \frac{\partial^2 Q(\boldsymbol{\vartheta})}{\partial D_k^2}\bigg|_{\boldsymbol{\vartheta}^0}  = 2\sum_{t=1}^{n}\sin^2(\beta_k^0 t^2),\ k = 1, \cdots, q. & \\
& \frac{\partial^2 Q(\boldsymbol{\vartheta})}{\partial A_j \partial B_j}\bigg|_{\boldsymbol{\vartheta}^0} = 2 \sum_{t=1}^{n} \sin(\alpha_j^0 t) \cos(\alpha_j^0 t),&\\
& \frac{\partial^2 Q(\boldsymbol{\vartheta})}{\partial A_j \partial \alpha_j}\bigg|_{\boldsymbol{\vartheta}^0} = 2 \sum_{t=1}^{n}t X(t) \sin(\alpha_j^0 t) - 2 A_j^0 \sum_{t=1}^{n} t \cos(\alpha_j^0 t) \sin(\alpha_j^0 t) + 2 B_j^0 \sum_{t=1}^{n} t\cos^2(\alpha_j^0 t), &\\
 & \frac{\partial^2 Q(\boldsymbol{\vartheta})}{\partial A_j \partial C_k}\bigg|_{\boldsymbol{\vartheta}^0} = 2 \sum_{t=1}^{n} \cos(\beta_k^0 t^2) \cos(\alpha_j^0 t),\ \frac{\partial^2 Q(\boldsymbol{\vartheta})}{\partial A_j \partial D_k}\bigg|_{\boldsymbol{\vartheta}^0} = 2 \sum_{t=1}^{n} \sin(\beta_k^0 t^2) \cos(\alpha_j^0 t), &\\
& \frac{\partial^2 Q(\boldsymbol{\vartheta})}{\partial A_j \partial \beta_k}\bigg|_{\boldsymbol{\vartheta}^0} = - 2 C_k^0 \sum_{t=1}^{n} t^2 \cos(\alpha_j^0 t) \sin(\beta_k^0 t^2) + 2 D_k^0 \sum_{t=1}^{n} t^2\cos(\alpha_j^0 t) \cos(\beta_k^0 t^2). &
\end{flalign*}
Similarly the rest of the partial derivatives can be computed and using lemmas \ref{preliminary_result_1}, \ref{preliminary_result_2}, \ref{preliminary_result_3} and \ref{preliminary_result_4}, it can be shown that:
$$\mathfrak{D}\textbf{Q}''(\bar{\boldsymbol{\vartheta}})\mathfrak{D} \rightarrow 2 \mathcal{E}(\boldsymbol{\vartheta}^0).$$
Now, using CLT on the first derivative vector, $\textbf{Q}'(\boldsymbol{\vartheta}^0)\mathfrak{D}$, it can be shown that it converges to a multivariate Gaussian distribution. Using routine calculations, and again using lemmas \ref{preliminary_result_1}, \ref{preliminary_result_2}, \ref{preliminary_result_3} and \ref{preliminary_result_4}, we compute the asymptotic variances for each of the elements and their covariances and we get:
$$\textbf{Q}'(\boldsymbol{\vartheta}^0)\mathfrak{D} \xrightarrow{d} N_{3(p+q)}(0, 4c \sigma^2 \mathcal{E}(\boldsymbol{\vartheta}^0)).$$
Hence, the result.\\ \qed

\subsection{Proofs of the asymptotic properties of the LSEs}\label{appendix:D2} 
To prove theorems \ref{consistency_first component} and \ref{consistency_rest_of_the_components}, we need the following lemmas:

\begin{lemma}\label{lemma_condition_consistency_multiple}
\begin{enumerate}[label=(\alph*)]
\item Consider the set $M_c^{(j)} = \{\boldsymbol{\theta}^{(1)}_j: |\boldsymbol{\theta}^{(1)}_j - {\boldsymbol{\theta}_j^0}^{(1)}| \geqslant3 c; \boldsymbol{\theta}_j^{(1)} \in \boldsymbol{\Theta}^{(1)}\},\ j = 1, \cdots, p$. If the following holds true:
\begin{equation}\label{condition_for_consistency_sinusoid}
\liminf \inf\limits_{M_c^{(j)}} \frac{1}{n} (Q_{2j-1}(\boldsymbol{\theta}^{(1)}_j) - Q_{2j-1}({\boldsymbol{\theta}_j^0}^{(1)})) > 0 \textmd{ a.s. }
\end{equation}
then $\tilde{\boldsymbol{\theta}}_j^{(1)} \xrightarrow{a.s.} {\boldsymbol{\theta}_j^0}^{(1)}$ as $n \rightarrow \infty$ \\
\item Let us define the set $N_c^{(k)} = \{\boldsymbol{\theta}_k^{(2)} : \boldsymbol{\theta}_k^{(2)} \in \boldsymbol{\Theta}^{(2)} ;\ |\boldsymbol{\theta}_k^{(2)} - {\boldsymbol{\theta}_k^0}^{(2)}| \geqslant 3c\},\ k = 1, \cdots, q.$ If for any $c>0$, 
\begin{equation}\label{condition_for_consistency_chirp}
\liminf \inf\limits_{\boldsymbol{\theta}_k^{(2)} \in N_c^{(k)}} \frac{1}{n} (Q_{2k}(\boldsymbol{\theta}_k^{(2)}) - Q_{2k}({\boldsymbol{\theta}_k^0}^{(2)})) > 0 \textmd{ a.s. }
\end{equation}
then $\tilde{\boldsymbol{\theta}}_k^{(2)} \xrightarrow{a.s.} {\boldsymbol{\theta}_k^0}^{(2)}$ as $n \rightarrow \infty.$  
\end{enumerate}
\end{lemma}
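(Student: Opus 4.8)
The plan is to prove both parts by contradiction, following verbatim the argument used for Lemma~\ref{lemma_condition_consistency} and its one-component sequential analogues, Lemmas~\ref{lemma_condition_consistency_sinusoid_1} and~\ref{lemma_condition_consistency_chirp_1}. I would write out part (a) in full; part (b) then follows by exactly the same steps, with $Q_{2k}$, the set $N_c^{(k)}$, and hypothesis~\eqref{condition_for_consistency_chirp} playing the roles of $Q_{2j-1}$, $M_c^{(j)}$, and~\eqref{condition_for_consistency_sinusoid}.

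First I would make the dependence on the sample size explicit by writing $\tilde{\boldsymbol{\theta}}_j^{(1)}$ as $\tilde{\boldsymbol{\theta}}_{j,n}^{(1)}$, and assume, for contradiction, that $\tilde{\boldsymbol{\theta}}_{j,n}^{(1)} \nrightarrow {\boldsymbol{\theta}_j^0}^{(1)}$. Then there is a subsequence $\{n_k\}$ along which $\tilde{\boldsymbol{\theta}}_{j,n_k}^{(1)}$ stays bounded away from ${\boldsymbol{\theta}_j^0}^{(1)}$. As in Lemma~\ref{lemma_condition_consistency}, I would split into two cases according to whether the linear (amplitude) estimates $\tilde{A}_{j,n_k},\tilde{B}_{j,n_k}$ are unbounded or bounded along this subsequence. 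In the unbounded case, the dominant squared-amplitude term forces $\frac{1}{n_k}Q_{2j-1}(\tilde{\boldsymbol{\theta}}_{j,n_k}^{(1)}) \rightarrow \infty$ (using $\frac{1}{n}\sum_{t=1}^n \cos^2(\alpha t) \rightarrow \tfrac{1}{2}$ from Lemma~\ref{preliminary_result_1}), whereas $\frac{1}{n_k}Q_{2j-1}({\boldsymbol{\theta}_j^0}^{(1)})$ stays bounded; this contradicts the minimizing property $Q_{2j-1}(\tilde{\boldsymbol{\theta}}_{j,n_k}^{(1)}) \leqslant Q_{2j-1}({\boldsymbol{\theta}_j^0}^{(1)})$. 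In the bounded case, since $\alpha$ ranges over the compact set $[0,\pi]$, the whole vector lies in a bounded set yet stays away from the truth, so there exists $c>0$ with $\tilde{\boldsymbol{\theta}}_{j,n_k}^{(1)} \in M_c^{(j)}$ for all $k$; then the hypothesis~\eqref{condition_for_consistency_sinusoid} gives $\liminf \frac{1}{n_k}(Q_{2j-1}(\tilde{\boldsymbol{\theta}}_{j,n_k}^{(1)}) - Q_{2j-1}({\boldsymbol{\theta}_j^0}^{(1)})) > 0$, again contradicting the minimizing property. Hence $\tilde{\boldsymbol{\theta}}_{j,n}^{(1)} \xrightarrow{a.s.} {\boldsymbol{\theta}_j^0}^{(1)}$.

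The only feature that is new relative to the one-component case is that at the $j$-th stage the working data $y_{j-1}(t)$ carries the not-yet-extracted true components in addition to the noise. I would remark that this is immaterial: $\frac{1}{n}Q_{2j-1}({\boldsymbol{\theta}_j^0}^{(1)})$ remains bounded regardless, since the residual signal contributes a bounded per-sample amount and, via Lemmas~\ref{preliminary_result_3} and~\ref{preliminary_result_4}, all cross terms with that signal and with the noise vanish almost surely. Consequently there is no real obstacle in this lemma itself; it is a purely structural contradiction argument. The substantive work — verifying that the hypotheses~\eqref{condition_for_consistency_sinusoid} and~\eqref{condition_for_consistency_chirp} actually hold at each stage — is deferred to the consistency proofs of Theorems~\ref{consistency_first component} and~\ref{consistency_rest_of_the_components}, where the sets $M_c^{(j)}$ and $N_c^{(k)}$ are split coordinatewise and the preliminary limit lemmas are applied, exactly as in the one-component treatment.
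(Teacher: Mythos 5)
Your proposal is correct and follows essentially the same route as the paper, which proves this lemma by simply invoking the contradiction argument of Lemma~\ref{lemma_condition_consistency} (unbounded-amplitude case forcing $\frac{1}{n}Q \to \infty$ versus bounded case landing in $M_c^{(j)}$, each contradicting the minimizing property). Your additional remark that the not-yet-extracted components in $y_{j-1}(t)$ leave $\frac{1}{n}Q_{2j-1}({\boldsymbol{\theta}_j^0}^{(1)})$ bounded is a detail the paper leaves implicit, and it is handled correctly.
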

\begin{proof}
This can be proved by contradiction along the same lines as Lemma \ref{lemma_condition_consistency}.
\end{proof}

\begin{lemma}\label{rate_of_convergence_all}
If the assumptions 1, 3 and 4 are satisfied, then for $j \leqslant p$ and $k \leqslant q$:
\begin{enumerate}[label=(\alph*)]
\item $(\tilde{\boldsymbol{\theta}_j} - \boldsymbol{\theta}_j^0)(\sqrt{n}\boldsymbol{D}_1)^{-1} \xrightarrow{a.s.} 0.$
\item $(\tilde{\boldsymbol{\xi}_k} - \boldsymbol{\xi}_k^0)(\sqrt{n}\boldsymbol{D}_2)^{-1} \xrightarrow{a.s.} 0.$
\end{enumerate}
Here, $\boldsymbol{D}_1 = diag(\frac{1}{\sqrt{n}}, \frac{1}{\sqrt{n}}, \frac{1}{n\sqrt{n}})$ and $\boldsymbol{D}_2 = diag(\frac{1}{\sqrt{n}}, \frac{1}{\sqrt{n}}, \frac{1}{n^2\sqrt{n}})$.
\end{lemma}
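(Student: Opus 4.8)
The plan is to follow the proof of Lemma~\ref{rate_of_convergence_1} almost verbatim, but to superimpose an induction over the component index so as to control the error injected by the components that have already been estimated and subtracted. I will carry out part~(a) (the sinusoids) in detail and obtain part~(b) (the chirps, with $\tilde{\boldsymbol{\xi}}_k = \tilde{\boldsymbol{\theta}}_k^{(2)}$) by the parallel argument, replacing $\boldsymbol{D}_1$ by $\boldsymbol{D}_2$, the linear phase $t$ by the quadratic phase $t^2$, and Lemmas~\ref{preliminary_result_1} and~\ref{preliminary_result_4}\textit{(a)} by their quadratic-phase analogues~\ref{preliminary_result_2} and~\ref{preliminary_result_4}\textit{(b)}. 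Throughout I may assume the consistency already established in Theorems~\ref{consistency_first component} and~\ref{consistency_rest_of_the_components}.

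For the base case $j=1$ the first sinusoid is fitted to the full data $y(t)$ by minimising $Q_1$. Expanding $\boldsymbol{Q}_1'(\tilde{\boldsymbol{\theta}}_1^{(1)})$ about ${\boldsymbol{\theta}_1^0}^{(1)}$ and using $\boldsymbol{Q}_1'(\tilde{\boldsymbol{\theta}}_1^{(1)})=0$ gives, exactly as in~\eqref{taylor_series_Q1_D},
$$(\tilde{\boldsymbol{\theta}}_1^{(1)} - {\boldsymbol{\theta}_1^0}^{(1)})(\sqrt{n}\boldsymbol{D}_1)^{-1} = \left[-\tfrac{1}{\sqrt{n}}\boldsymbol{Q}_1'({\boldsymbol{\theta}_1^0}^{(1)})\boldsymbol{D}_1\right]\left[\boldsymbol{D}_1\boldsymbol{Q}_1''(\bar{\boldsymbol{\theta}}_1^{(1)})\boldsymbol{D}_1\right]^{-1}.$$
In the multiple-component setting the first-derivative vector now carries, besides the noise term (killed by Lemma~\ref{preliminary_result_4}), cross terms with the remaining sinusoids $2,\dots,p$ and with all $q$ chirps; after the extra $1/\sqrt{n}$ these vanish a.s.\ by the $1/(n^k\sqrt n)$-normalised sums of Lemma~\ref{preliminary_result_5}, so $\tfrac{1}{\sqrt{n}}\boldsymbol{Q}_1'({\boldsymbol{\theta}_1^0}^{(1)})\boldsymbol{D}_1 \xrightarrow{a.s.} 0$. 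Since $\bar{\boldsymbol{\theta}}_1^{(1)}$ lies between $\tilde{\boldsymbol{\theta}}_1^{(1)}$ and ${\boldsymbol{\theta}_1^0}^{(1)}$, consistency yields $\boldsymbol{D}_1\boldsymbol{Q}_1''(\bar{\boldsymbol{\theta}}_1^{(1)})\boldsymbol{D}_1 \to 2\boldsymbol{\Sigma}^{(1)}_1 > 0$ as in~\eqref{Q1''_1_result}, and the base case follows.

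For the inductive step, suppose the stated rate holds for components $1,\dots,j-1$. Its key consequence is the uniform approximation
$$\tilde{A}_i\cos(\tilde{\alpha}_i t) + \tilde{B}_i\sin(\tilde{\alpha}_i t) = A_i^0\cos(\alpha_i^0 t) + B_i^0\sin(\alpha_i^0 t) + o(1), \qquad i = 1,\dots,j-1,$$
uniformly in $1\le t\le n$: writing $\tilde{\alpha}_i = \alpha_i^0 + \delta_{i,n}$, the induction hypothesis gives $n\,\delta_{i,n}\to 0$, so $|\delta_{i,n}t|\le n|\delta_{i,n}|\to 0$ and a first-order expansion of cosine and sine yields the claim. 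Hence the residual data obeys
$$y_{j-1}(t) = \sum_{i=j}^{p}\{A_i^0\cos(\alpha_i^0 t)+B_i^0\sin(\alpha_i^0 t)\} + \sum_{k=1}^{q}\{C_k^0\cos(\beta_k^0 t^2)+D_k^0\sin(\beta_k^0 t^2)\} + X(t) + o(1),$$
which has exactly the structure needed to repeat the base-case computation for the $j$-th sinusoid: the uniform $o(1)$ is negligible after multiplication by $\tfrac{1}{\sqrt{n}}\boldsymbol{D}_1$, every other sinusoid and chirp drops out by Lemmas~\ref{preliminary_result_3} and~\ref{preliminary_result_5}, and the second-derivative matrix again converges to $2\boldsymbol{\Sigma}^{(1)}_j>0$. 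Part~(b) is analogous: once all $p$ sinusoids and the first $k-1$ chirps have been removed, the sharper rate $n^2(\tilde{\beta}_i-\beta_i^0)\to 0$ delivers $\tilde{C}_i\cos(\tilde{\beta}_i t^2)+\tilde{D}_i\sin(\tilde{\beta}_i t^2) = C_i^0\cos(\beta_i^0 t^2)+D_i^0\sin(\beta_i^0 t^2)+o(1)$ uniformly for $t\le n$ (since $|\tilde{\beta}_i-\beta_i^0|t^2\le n^2|\tilde{\beta}_i-\beta_i^0|\to 0$), after which Lemmas~\ref{preliminary_result_2} and~\ref{preliminary_result_5}\textit{(a)} close the argument.

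The main obstacle is precisely this inductive control of the subtracted components. Plain consistency of $\tilde{\alpha}_i$ (respectively $\tilde{\beta}_i$) is \emph{not} enough to keep $\cos(\tilde{\alpha}_i t)$ uniformly close to $\cos(\alpha_i^0 t)$ over the whole range $t\le n$, because the phase error $(\tilde{\alpha}_i-\alpha_i^0)t$ can accumulate to a non-negligible quantity when $t$ is of order $n$. It is exactly the faster-than-$1/n$ rate $n(\tilde{\alpha}_i-\alpha_i^0)\to 0$ (respectively $n^2(\tilde{\beta}_i-\beta_i^0)\to 0$) supplied by the induction hypothesis that tames this phase accumulation and makes $y_{j-1}$ behave, up to a uniform $o(1)$, like the original model with the first $j-1$ components deleted. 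Getting this bookkeeping right, rather than any individual limit evaluation, is the crux of the proof.
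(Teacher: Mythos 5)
Your proof is correct and follows essentially the same route as the paper, which disposes of this lemma in one line by appeal to Lemma~\ref{rate_of_convergence_1}: Taylor-expand the normalised score about the true value, kill the noise term with Lemma~\ref{preliminary_result_4} and the cross terms with Lemmas~\ref{preliminary_result_1}--\ref{preliminary_result_3} and~\ref{preliminary_result_5}, and invert the limiting Hessian $2\boldsymbol{\Sigma}^{(1)}_j$ (resp.\ $2\boldsymbol{\Sigma}^{(2)}_k$). Your explicit induction --- in particular the observation that it is the $o(1/n)$ (resp.\ $o(1/n^2)$) frequency rates, and not mere consistency, that keep the already-subtracted components uniformly close over $1\le t\le n$ --- is exactly the mechanism the paper encodes in~\eqref{relation_sinusoid_comp_and_estimate} and leaves implicit in this multi-component setting.
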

\begin{proof}
This proof can be obtained along the same lines  as Lemma \ref{rate_of_convergence_1}.
\end{proof}
\justify
Now the proofs of theorems  \ref{consistency_first component} and \ref{consistency_rest_of_the_components} can be obtained by using the above lemmas and following the same argument as in Theorem \ref{consistency_sequential_LSEs_one_comp}.\\
\justify
Next we examine the situation when the number of components are over estimated (see Theorem \ref{consistency_excess_components}). The proof of Theorem \ref{consistency_excess_components} will follow consequently from the below stated lemmas:\\

\begin{lemma}\label{over_estimation_sin_lses}
If $X(t)$, is the error component as defined before, and if $\tilde{A}$, $\tilde{B}$ and $\tilde{\alpha}$ are obtained by minimizing the following function:
$$Q_{p+q+1}(\boldsymbol{\theta}^{(1)}) = \frac{1}{n}\sum_{t=1}^{N}\bigg(X(t) - A \cos(\alpha t) - B \sin(\alpha t)\bigg)^2,$$
then $\tilde{A} \xrightarrow{a.s.} 0$ and $\tilde{B} \xrightarrow{a.s.} 0.$ 
\end{lemma}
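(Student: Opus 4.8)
The plan is to exploit the separable linear regression structure exactly as in \textbf{Step 1} of the sequential algorithm, but with the pure-noise vector $\boldsymbol{X} = (X(1),\dots,X(n))^{T}$ in place of the data. For fixed $\alpha$, the minimizing amplitudes are
$$\begin{pmatrix}\tilde A(\alpha)\\ \tilde B(\alpha)\end{pmatrix} = \left[\frac{1}{n}\boldsymbol{Z}^{(1)}(\alpha)^{T}\boldsymbol{Z}^{(1)}(\alpha)\right]^{-1}\frac{1}{n}\boldsymbol{Z}^{(1)}(\alpha)^{T}\boldsymbol{X},$$
and $\tilde A = \tilde A(\tilde\alpha)$, $\tilde B = \tilde B(\tilde\alpha)$, where $\tilde\alpha$ minimizes the concentrated criterion $R_1(\alpha)$. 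It therefore suffices to show this vector tends to $\boldsymbol{0}$ almost surely when evaluated at $\tilde\alpha$.

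I would assemble two ingredients. First, from Lemma \ref{preliminary_result_4}(a) with $k=0$, taking real and imaginary parts, both $\sup_{\alpha}\big|\frac{1}{n}\sum_{t=1}^{n} X(t)\cos(\alpha t)\big|$ and $\sup_{\alpha}\big|\frac{1}{n}\sum_{t=1}^{n} X(t)\sin(\alpha t)\big|$ converge to $0$ almost surely; equivalently, $\frac{1}{n}\boldsymbol{Z}^{(1)}(\alpha)^{T}\boldsymbol{X}\to \boldsymbol{0}$ uniformly in $\alpha$. This is the heart of the matter: it says the empirical correlation of the noise with every candidate sinusoid vanishes uniformly. Second, from Lemma \ref{preliminary_result_1}, for interior $\alpha$ the normalized design matrix $M_n(\alpha):=\frac{1}{n}\boldsymbol{Z}^{(1)}(\alpha)^{T}\boldsymbol{Z}^{(1)}(\alpha)\to \frac{1}{2}\boldsymbol{I}_2$.

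To combine these without tracking $\tilde\alpha$ through the inverse, I would instead use the optimality inequality $Q_{p+q+1}(\tilde A,\tilde B,\tilde\alpha)\le Q_{p+q+1}(0,0,\alpha)=\frac{1}{n}\sum_{t=1}^{n} X(t)^2$. Writing $\hat X(t)=\tilde A\cos(\tilde\alpha t)+\tilde B\sin(\tilde\alpha t)$ and expanding, this reduces to $\frac{1}{n}\sum_{t=1}^{n}\hat X(t)^2\le \frac{2}{n}\sum_{t=1}^{n} X(t)\hat X(t)$. The right-hand side is bounded by $2\sqrt{2}\,\epsilon_n\sqrt{\tilde A^2+\tilde B^2}$ with $\epsilon_n\to 0$ by the first ingredient, while the left-hand side equals $(\tilde A,\tilde B)\,M_n(\tilde\alpha)\,(\tilde A,\tilde B)^{T}\ge \lambda_{\min}(M_n(\tilde\alpha))(\tilde A^2+\tilde B^2)$. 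Dividing yields $\sqrt{\tilde A^2+\tilde B^2}\le 2\sqrt{2}\,\epsilon_n/\lambda_{\min}(M_n(\tilde\alpha))$.

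The main obstacle is precisely the lower bound on $\lambda_{\min}(M_n(\tilde\alpha))$: a direct computation gives eigenvalues $\frac{1}{2}\big(1\pm|\frac{1}{n}\sum_{t=1}^{n} e^{2i\alpha t}|\big)$, so $\lambda_{\min}$ degenerates only as $\tilde\alpha$ approaches the endpoints $0$ or $\pi$. I would dispose of this by noting that $|\frac{1}{n}\sum_{t=1}^{n} e^{2i\alpha t}|$ is uniformly small on every compact subinterval of $(0,\pi)$, and that a minimizer accumulating at an endpoint would force the fitted energy $\frac{1}{n}\sum\hat X(t)^2$ to vanish as well; hence $\lambda_{\min}(M_n(\tilde\alpha))$ stays bounded away from $0$ eventually and the displayed bound gives $\tilde A\xrightarrow{a.s.}0$ and $\tilde B\xrightarrow{a.s.}0$. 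This boundary analysis is the only delicate step; everything else is separable-regression bookkeeping together with Lemmas \ref{preliminary_result_1} and \ref{preliminary_result_4}.
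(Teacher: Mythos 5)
Your proposal is sound in its essentials and rests on exactly the two ingredients the paper uses -- Lemma \ref{preliminary_result_1} for the normalized design matrix and Lemma \ref{preliminary_result_4}(a) for the uniform vanishing of $\sup_\alpha\big|\frac{1}{n}\sum_t X(t)e^{i\alpha t}\big|$ -- but the packaging is genuinely different. The paper expands the square, replaces $\frac{1}{n}\sum_t(A\cos(\alpha t)+B\sin(\alpha t))^2$ by $\frac{A^2+B^2}{2}+o(1)$ to get an explicit quadratic surrogate $R(\boldsymbol{\theta}^{(1)})$, reads off its minimizer in closed form, and concludes $\tilde A=\frac{2}{n}\sum_t X(t)\cos(\tilde\alpha t)+o(1)\to 0$; this buys an explicit asymptotic representation of $\tilde A,\tilde B$ (hence, in principle, rates), at the cost of silently assuming the $o(1)$ replacement is uniform in $\alpha$ over all of $[0,\pi]$. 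Your route -- optimality inequality, Cauchy--Schwarz against $\sup_\alpha\|\frac{1}{n}\boldsymbol{Z}^{(1)}(\alpha)^T\boldsymbol{X}\|$, and a lower bound via $\lambda_{\min}(M_n(\tilde\alpha))$ -- avoids the closed-form minimizer and isolates the uniformity problem in the single quantity $\lambda_{\min}(M_n(\tilde\alpha))=\frac{1}{2}\big(1-|\frac{1}{n}\sum_t e^{2i\tilde\alpha t}|\big)$, which degenerates only at the endpoints $0$ and $\pi$. You should be aware, though, that your disposal of that endpoint case is the one step that does not hold together as written: from "the fitted energy $\frac{1}{n}\sum_t\hat X(t)^2$ vanishes" (which is automatic, since $|\tilde A|,|\tilde B|\le M$ and the cross term is uniformly $o(1)$) you cannot infer that $\lambda_{\min}(M_n(\tilde\alpha))$ stays bounded away from zero -- at $\tilde\alpha=0$ the sine column is identically zero, the fitted energy equals $\tilde A^2$, and $\tilde B$ is left completely unconstrained. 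A correct patch needs either a separate argument that $\tilde\alpha$ eventually lies in a fixed compact subinterval of $(0,\pi)$, or a direct treatment of the near-degenerate design. To be fair, the paper's own proof has the same hidden gap (its $o(1)$ is only pointwise in $\alpha$, not uniform near the endpoints), so your version is, if anything, the more transparent of the two about where the delicacy lives.
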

\begin{proof}
The sum of squares function $Q_{p+q+1}(\boldsymbol{\theta}^{(1)})$ can be written as:
\begin{flalign*}
&\frac{1}{n} \sum_{t=1}^{n}X^2(t) - \frac{2}{n} \sum_{t=1}^{n} X(t)\bigg(A \cos(\alpha t) + B \sin(\alpha t)\bigg) + \frac{A^2 + B^2}{2} + o(1)&\\
= & R(\boldsymbol{\theta}^{(1)}) + o(1).&
\end{flalign*}
Since the difference between $Q_{p+q+1}(\boldsymbol{\theta}^{(1)})$ and $R(\boldsymbol{\theta}^{(1)})$ is $o(1)$, replacing former with latter will have negligible effect on the estimators. 
Thus, we have
\begin{equation*}
\tilde{A} = \frac{2}{n}\sum\limits_{t=1}^{n} X(t)\cos(\alpha t) + o(1)\textmd{ and } \tilde{B} = \frac{2}{n}\sum_{t=1}^{n} X(t)\sin(\alpha t) + o(1).
\end{equation*}
Now using Lemma \ref{preliminary_result_4} \textit{(a)}, the result follows. 
\end{proof}
\begin{lemma}\label{over_estimation_chirp_lses}
If $X(t)$, is the error component as defined before, and if $\tilde{C}$, $\tilde{D}$ and $\tilde{\beta}$ are obtained by minimizing the following function:
$$\frac{1}{n}\sum_{t=1}^{N}\bigg(X(t) - C \cos(\beta t^2) - D \sin(\beta t^2)\bigg)^2,$$
then $\tilde{C} \xrightarrow{a.s.} 0$ and $\tilde{D} \xrightarrow{a.s.} 0.$ 
\end{lemma}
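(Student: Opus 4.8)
The plan is to follow the argument of Lemma~\ref{over_estimation_sin_lses} almost verbatim, replacing the sinusoidal regressors $\cos(\alpha t),\sin(\alpha t)$ by the chirp regressors $\cos(\beta t^2),\sin(\beta t^2)$, and substituting the $t^2$-analogues of the preliminary lemmas throughout. First I would expand the objective by squaring out: the term $\frac{1}{n}\sum_{t=1}^{n}X^2(t)$ is free of the parameters, the cross term is $-\frac{2}{n}\sum_{t=1}^{n}X(t)\big(C\cos(\beta t^2)+D\sin(\beta t^2)\big)$, and the purely quadratic term is $\frac{1}{n}\sum_{t=1}^{n}\big(C\cos(\beta t^2)+D\sin(\beta t^2)\big)^2$. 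By Lemma~\ref{preliminary_result_2} parts \textit{(b)} and \textit{(c)}, the quadratic term equals $\frac{C^2+D^2}{2}+o(1)$, since $\frac{1}{n}\sum_{t=1}^{n}\cos^2(\beta t^2)\to\frac12$, $\frac{1}{n}\sum_{t=1}^{n}\sin^2(\beta t^2)\to\frac12$, and the mixed product tends to $0$.

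Consequently the objective function equals
$$R(\boldsymbol{\theta}^{(2)})+o(1)=\frac{1}{n}\sum_{t=1}^{n}X^2(t)-\frac{2}{n}\sum_{t=1}^{n}X(t)\big(C\cos(\beta t^2)+D\sin(\beta t^2)\big)+\frac{C^2+D^2}{2}+o(1),$$
where the $o(1)$ remainder is negligible for the purpose of locating the minimizer. Minimizing $R$ over the linear parameters $C$ and $D$ for fixed $\beta$ is then an elementary quadratic problem whose stationary point gives the explicit expressions $\tilde{C}=\frac{2}{n}\sum_{t=1}^{n}X(t)\cos(\beta t^2)+o(1)$ and $\tilde{D}=\frac{2}{n}\sum_{t=1}^{n}X(t)\sin(\beta t^2)+o(1)$. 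Finally I would invoke Lemma~\ref{preliminary_result_4} part \textit{(b)}, which yields $\sup_{\beta}\big|\frac{1}{n}\sum_{t=1}^{n}X(t)e^{i\beta t^2}\big|\xrightarrow{a.s.}0$; separating real and imaginary parts gives $\sup_{\beta}\big|\frac{1}{n}\sum_{t=1}^{n}X(t)\cos(\beta t^2)\big|\to0$ and the analogous bound for the sine, whence $\tilde{C}\xrightarrow{a.s.}0$ and $\tilde{D}\xrightarrow{a.s.}0$.

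The hard part, and the reason the argument must be phrased through a supremum, is that $\tilde{\beta}$ is itself a data-dependent random quantity which, under over-fitting, need not converge to any fixed limit. A pointwise statement at each fixed $\beta$ would therefore be insufficient to control $\tilde{C}$ and $\tilde{D}$ evaluated at $\tilde{\beta}$; it is precisely the \emph{uniformity} over $\beta$ delivered by Lemma~\ref{preliminary_result_4}\textit{(b)} that legitimizes the evaluation at the estimated $\tilde{\beta}$. The only remaining care is to verify that the $o(1)$ terms appearing in the expansion and in the expressions for $\tilde{C},\tilde{D}$ are uniform in $(C,D,\beta)$ over the compact parameter set, so that they cannot distort the limiting behaviour of the estimators.
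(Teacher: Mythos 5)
Your proposal is correct and is exactly the argument the paper intends: the paper's proof of this lemma is simply ``follows along the same lines as Lemma~\ref{over_estimation_sin_lses},'' and you have carried out that translation faithfully, substituting Lemma~\ref{preliminary_result_2} for Lemma~\ref{preliminary_result_1} in the quadratic term and Lemma~\ref{preliminary_result_4}\textit{(b)} for part \textit{(a)} in the final step. Your added remark on why the uniformity in $\beta$ is essential (since $\tilde{\beta}$ is data-dependent and need not converge) is a correct and worthwhile clarification of a point the paper leaves implicit.
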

\begin{proof}
The proof of this lemma follows along the same lines as Lemma \ref{over_estimation_sin_lses}.\\
\end{proof}
\justify
Now we provide the proof of the fact that the sequential LSEs have the same asymptotic distribution as the LSEs.
\justify
\textit{Proof of Theorem \ref{asymptotic_distribution_first_component}:}
 \textit{(a)} By Taylor series expansion of $\boldsymbol{Q}'_1(\tilde{\boldsymbol{\theta}}_1^{(1)})$ around the point ${\boldsymbol{\theta}_1^0}^{(1)}$, we have:
$$(\tilde{\boldsymbol{\theta}}_1^{(1)} - {\boldsymbol{\theta}_1^0}^{(1)}) = -\boldsymbol{Q}'_1({\boldsymbol{\theta}_1^0}^{(1)}) [\boldsymbol{Q}''_1(\bar{\boldsymbol{\theta}}_1^{(1)})]^{-1}$$
Multiplying both sides by the matrix $\boldsymbol{D}_1^{-1}$, where $\boldsymbol{D}_1 = diag(\frac{1}{\sqrt{n}}, \frac{1}{\sqrt{n}},\frac{1}{n\sqrt{n}})$, we get:
$$(\tilde{\boldsymbol{\theta}}_1^{(1)} - {\boldsymbol{\theta}_1^0}^{(1)})\boldsymbol{D}_1^{-1} = -\boldsymbol{Q}'_1({\boldsymbol{\theta}_1^0}^{(1)})\boldsymbol{D}_1 [\boldsymbol{D}_1\boldsymbol{Q}''_1(\bar{\boldsymbol{\theta}}_1^{(1)})\boldsymbol{D}_1]^{-1}$$
First we show that $\boldsymbol{Q}'_1({\boldsymbol{\theta}_1^0}^{(1)})\boldsymbol{D}_1 \rightarrow N_3(0, 4 \sigma^2 c \boldsymbol{\Sigma}_1^{(1)}).$\\
To prove this, we compute the elements of the derivative vector $\boldsymbol{Q}'_1({\boldsymbol{\theta}_1^0}^{(1)})$:
\begin{flalign*}
& \frac{\partial Q_1({\boldsymbol{\theta}_1^0}^{(1)})}{\partial A_1} = -2 \sum_{t=1}^n \bigg(\sum_{j=2}^p(A_j^0 \cos(\alpha_j^0 t) + B_j^0 \sin(\alpha_j^0 t))  + \sum_{k=1}^q (C_k^0 \cos(\beta_k^0 t^2) + D_k^0 \sin(\beta_k^0 t^2)) + X(t)\bigg)\cos(\alpha_1^0 t),&\\
& \frac{\partial Q_1({\boldsymbol{\theta}_1^0}^{(1)})}{\partial B_1} = -2 \sum_{t=1}^n \bigg(\sum_{j=2}^p(A_j^0 \cos(\alpha_j^0 t) + B_j^0 \sin(\alpha_j^0 t))  + \sum_{k=1}^q (C_k^0 \cos(\beta_k^0 t^2) + D_k^0 \sin(\beta_k^0 t^2)) + X(t)\bigg)\sin(\alpha_1^0 t),&\\
& \frac{\partial Q_1({\boldsymbol{\theta}_1^0}^{(1)})}{\partial \alpha_1} = -2 \sum_{t=1}^n t \bigg(\sum_{j=2}^p(A_j^0 \cos(\alpha_j^0 t) + B_j^0 \sin(\alpha_j^0 t))  + \sum_{k=1}^q (C_k^0 \cos(\beta_k^0 t^2) + D_k^0 \sin(\beta_k^0 t^2)) + X(t)\bigg)\times & \\
& \qquad \qquad \qquad \bigg(-A_1^0 \sin(\alpha_1^0 t) + B_1^0 \cos(\alpha_1^0 t)\bigg).&
\end{flalign*}
Using Lemma \ref{preliminary_result_5}, it can be shown that:
\begin{equation*}
\boldsymbol{Q}'_1({\boldsymbol{\theta}_1^0}^{(1)})\boldsymbol{D}_1 \overset{a.eq.}{=} -2 \begin{pmatrix}
\frac{1}{\sqrt{n}}\sum\limits_{t=1}^{n} X(t) \cos(\alpha_1^0 t) \\
\frac{1}{\sqrt{n}}\sum\limits_{t=1}^{n} X(t) \sin(\alpha_1^0 t) \\
\frac{1}{n\sqrt{n}}\sum\limits_{t=1}^{n} t X(t)(-A_1^0\sin(\alpha_1^0 t) + B_1^0 \cos(\alpha_1^0 t)) 
\end{pmatrix}.
\end{equation*}
Now using CLT, we have:
$$\boldsymbol{Q}'_1({\boldsymbol{\theta}_1^0}^{(1)})\boldsymbol{D}_1 \rightarrow N_3(0, 4 \sigma^2 c \boldsymbol{\Sigma}_1^{(1)})$$ 
Next, we compute the elements of the second derivative matrix, $\boldsymbol{D}_1 \boldsymbol{Q}''_1({\boldsymbol{\theta}_1^0}^{(1)})\boldsymbol{D}_1$. 
By straightforward calculations and using lemmas \ref{preliminary_result_1}, \ref{preliminary_result_2}, \ref{preliminary_result_3} and \ref{preliminary_result_4}, it is easy to show that:
\begin{equation*}
\boldsymbol{D}_1 \boldsymbol{Q}''_1({\boldsymbol{\theta}_1^0}^{(1)})\boldsymbol{D}_1 = 2\boldsymbol{\Sigma}_1^{(1)}.
\end{equation*}
Thus, we have the desired result. \\
\justify
\textit{(b)} Consider the error sum of squares $Q_2(\boldsymbol{\theta}^{(2)}) = \sum\limits_{t=1}^{n}\bigg(y_1(t) - C\cos(\beta t^2) - D\sin(\beta t^2)\bigg)^2$. Here $y_1(t) = y(t) - \tilde{A} \cos(\tilde{\alpha} t) - \tilde{B} \sin(\tilde{\alpha} t)$, $t = 1, \cdots, n$. Let $\boldsymbol{Q}'_2(\boldsymbol{\theta}^{(2)})$ be the first derivative vector and $\boldsymbol{Q}''_2(\boldsymbol{\theta}^{(2)})$, the second derivative matrix of $Q_2(\boldsymbol{\theta}^{(2)})$. By Taylor series expansion of $\boldsymbol{Q}'_2(\tilde{\boldsymbol{\theta}}_1^{(2)})$ around the point ${\boldsymbol{\theta}_1^0}^{(2)}$, we have:
$$(\tilde{\boldsymbol{\theta}}_1^{(2)} - {\boldsymbol{\theta}_1^0}^{(2)}) = -\boldsymbol{Q}'_2({\boldsymbol{\theta}_1^0}^{(2)}) [\boldsymbol{Q}''_2(\bar{\boldsymbol{\theta}}_1^{(2)})]^{-1}$$
Multiplying both sides by the matrix $\boldsymbol{D}_2^{-1}$, where $\boldsymbol{D}_2 = diag(\frac{1}{\sqrt{n}}, \frac{1}{\sqrt{n}},\frac{1}{n^2\sqrt{n}})$, we get:
$$(\tilde{\boldsymbol{\theta}}_1^{(2)} - {\boldsymbol{\theta}_1^0}^{(2)})\boldsymbol{D}_2^{-1} = -\boldsymbol{Q}'_2({\boldsymbol{\theta}_1^0}^{(2)})\boldsymbol{D}_2 [\boldsymbol{D}_2\boldsymbol{Q}''_2(\bar{\boldsymbol{\theta}}_1^{(2)})\boldsymbol{D}_2]^{-1}$$
Now using \eqref{relation_sinusoid_comp_and_estimate}, and proceeding exactly as in part \textit{(a)}, we get:
$$(\tilde{\boldsymbol{\theta}}_1^{(2)} -{\boldsymbol{\theta}_1^0}^{(2)})\boldsymbol{D}_2^{-1} \xrightarrow{d} N_3(0, \sigma^2 c {\boldsymbol{\Sigma}_1^{(2)}}^{-1}).$$
Hence, the result.\\ \qed
\end{appendices}


\begin{thebibliography}{30}
\bibitem{1986}  
Abatzoglou, T. J., 1986 \href{http://ieeexplore.ieee.org/stamp/stamp.jsp?arnumber=4104290}{"Fast maximnurm likelihood joint estimation of frequency and frequency rate."} IEEE Transactions on Aerospace and Electronic Systems, 6, pp. 708-715.

%\bibitem[1999]{1999}
%Olivier, B., Ghogho, M. and Swami, A., 1999 \href{http://ieeexplore.ieee.org/stamp/stamp.jsp?arnumber=806067} {"Parameter estimation for random amplitude chirp signals."} IEEE Transactions on Signal Processing, 47(12), pp. 3208-3219.

\bibitem{1990_1}
Djuric, P. M., and Kay, S. M., 1990 \href{http://ieeexplore.ieee.org/stamp/stamp.jsp?arnumber=61538} {"Parameter estimation of chirp signals."} IEEE Transactions on Acoustics, Speech, and Signal Processing, 38(12), pp. 2118-2126.

%\bibitem[2001]{2001}
%Flandrin, P. \href{http://perso.ens-lyon.fr/patrick.flandrin/SPIE01_PF.pdf} {"Time-frequency and chirps."} Proc. SPIE. Vol. 4391. 2001.

\bibitem{2009}
Fuller, W.A., 2009. Introduction to statistical time series (Vol. 428). John Wiley \& Sons.

\bibitem{2018}
Grover, R., Kundu, D. and Mitra, A., 2018 \href{https://arxiv.org/pdf/1804.01269.pdf} {"On approximate least squares estimators of parameters on one-dimensional chirp signal."} Statistics, (to appear).


\bibitem{1997}
Ikram, M. Z., Abed-Meraim, K. and Hua, Y., 1997 \href{http://ac.els-cdn.com/S1051200497902864/1-s2.0-S1051200497902864-main.pdf?_tid=4430bc46-6ec2-11e7-ad19-00000aacb35f&acdnat=1500716810_80c53c229d30f17ef6e4d4812811e175}{ "Fast quadratic phase transform for estimating the parameters of multicomponent chirp signals."} Digital Signal Processing, 7(2), pp. 127-135.


%\bibitem[1987]{1987}
%Kumaresan, R. and Verma, S. "On estimating the parameters of chirp signals using rank reduction techniques." Proc. 21st Asilomar Conf. Signals, Syst., Comput. 1987.

\bibitem{2008_1}
Kundu, D. and Nandi, S., 2008 \href{http://home.iitk.ac.in/~kundu/paper117.pdf} {"Parameter estimation of chirp signals in presence of stationary noise."} Statistica Sinica, pp. 187-201.

\bibitem{2012}
Kundu, D. and Nandi, S., 2012 \href{http://www.springer.com/in/book/9788132206279} {Statistical Signal Processing: Frequency Estimation.} New Delhi.

\bibitem{2011_2}
Lahiri, A. 2011. Estimators of Parameters of Chirp Signals and Their Properties. PhD thesis, Indian Institute of Technology, Kanpur.

\bibitem{2013}
Lahiri, A., Kundu, D. and Mitra, A., 2013 \href{https://link.springer.com/article/10.1007/s13571-012-0048-x}{ "Efficient algorithm for estimating the parameters of two dimensional chirp signal."},  Sankhya B, 75(1), pp. 65-89.

\bibitem{2014}
Lahiri, A., Kundu, D. and Mitra, A., 2014 \href{http://home.iitk.ac.in/~kundu/chirp-one-LAD-rev-2.pdf} {"On least absolute deviation estimators for one-dimensional chirp model."} Statistics, 48(2), pp. 405-420.

\bibitem{2015}
Lahiri, A., Kundu, D. and Mitra, A., 2015 \href{http://www.sciencedirect.com/science/article/pii/S0047259X15000329} {"Estimating the parameters of multiple chirp signals."} Journal of Multivariate Analysis, 139, pp. 189-206.

%\bibitem[1992]{1992}
%Liang, R. M. and Arun, K. S. \href{http://ieeexplore.ieee.org/stamp/stamp.jsp?arnumber=226517} {"Parameter estimation for superimposed chirp signals."} Acoustics, Speech, and Signal Processing, 1992. ICASSP-92., 1992 IEEE International Conference on. Vol. 5. IEEE, 1992.

%\bibitem[2000]{2000}
%Lin, C. C. and Djuric, P. M. \href{http://ieeexplore.ieee.org/stamp/stamp.jsp?arnumber=861938}{ "Estimation of chirp signals by MCMC."} Acoustics, Speech, and Signal Processing, 2000. ICASSP'00. Proceedings. 2000 IEEE International Conference on. Vol. 1. IEEE, 2000.

\bibitem{2016}
Mazumder, S., 2017 \href{http://www.tandfonline.com/doi/pdf/10.1080/03610918.2015.1053921?needAccess=true} {"Single-step and multiple-step forecasting in one-dimensional single chirp signal using MCMC-based Bayesian analysis."} Communications in Statistics-Simulation and Computation, 46(4), pp. 2529-2547.

\bibitem{2004}
Nandi, S. and Kundu, D., 2004 \href{http://home.iitk.ac.in/~kundu/paper91.pdf} {"Asymptotic properties of the least squares estimators of the parameters of the chirp signals."} Annals of the Institute of Statistical Mathematics, 56(3), pp. 529-544.

\bibitem{1991_1}
Peleg, S. and Porat, B., 1991 \href{http://ieeexplore.ieee.org/stamp/stamp.jsp?arnumber=85033}{ "Linear FM signal parameter estimation from discrete-time observations."} IEEE Transactions on Aerospace and Electronic Systems, 27(4), pp. 607-616.

\bibitem{2008_2}
Prasad, A., Kundu, D. and Mitra, A., 2008 \href{https://ac.els-cdn.com/S0378375807002339/1-s2.0-S0378375807002339-main.pdf?_tid=55e7ea63-65bc-4282-bff0-71059d4eb442&acdnat=1524753741_5b9ec71db42a44e37d6a70dbdf04a879} {"Sequential estimation of the sum of sinusoidal model parameters."} Journal of Statistical Planning and Inference, 138(5), pp. 1297-1313.

\bibitem{1988}
Rice, J. A. and Rosenblatt, M., 1988 \href{http://www.jstor.org/stable/pdf/2336597.pdf?refreqid=excelsior\%3Aa43446ea2e53233ab811fcca4e024ea4} { "On frequency estimation."} Biometrika, 75(3), pp. 477-484.

\bibitem{1961}
Richards, F. SG., 1962 \href{https://www.researchgate.net/publication/268494756_A_method_of_Maximum_Likelihood_estimation}{"A method of maximum-likelihood estimation."} Journal of the Royal Statistical Society. Series B (Methodological), pp. 469-475.

\bibitem{2002}
Saha, S. and Kay, S. M., 2002 \href{http://ieeexplore.ieee.org/stamp/stamp.jsp?arnumber=978378} {"Maximum likelihood parameter estimation of superimposed chirps using Monte Carlo importance sampling."} IEEE Transactions on Signal Processing, 50(2), pp. 224-230.

\end{thebibliography}
\end{document}